\newtheorem{theorem}{Theorem}[section]
\newtheorem{theorem*}[theorem]{Theorem}
\newtheorem{lemma}[theorem]{Lemma}
\newtheorem{proposition}[theorem]{Proposition}
\theoremstyle{remark}
\newtheorem{remark}[theorem]{Remark}
\numberwithin{equation}{section}
\theoremstyle{theorem}
\begin{document}

\author{Georgios Antoniou, Misha Feigin}

\address{ School of Mathematics and Statistics, University of Glasgow, \\ University Place, 
Glasgow, G12 8QQ, UK}
\email{g.antoniou.1@research.gla.ac.uk, misha.feigin@glasgow.ac.uk }

\title{\bf{Supersymmetric V-systems}}

\begin{abstract}
We construct ${\mathcal N}=4$  $D(2,1;\alpha)$ superconformal quantum mechanical system for any configuration of vectors forming a $\vee$-system. In the case of a Coxeter root system the bosonic potential of the supersymmetric Hamiltonian is the corresponding generalised Calogero--Moser potential. We also construct supersymmetric generalised trigonometric Calogero--Moser--Sutherland Hamiltonians for some root systems including $BC_N$. 
\end{abstract}
\maketitle

\section{Introduction}
 
Calogero--Moser Hamiltonian is a famous example of an integrable system \cite{Calogero, Moser, Suther} which is related to a number of mathematical areas (see e.g. \cite{EP}). Generalised Calogero--Moser systems associated with an arbitrary root system were introduced by Olshanetsky and Perelomov \cite{OP1}, \cite{OP2}. $\mathcal{N}=2$ supersymmetric quantum Calogero--Moser systems were constructed in \cite{Freed} and considered  further  in \cite{Brink}. They were generalised to classical root systems in \cite{BTW} and to an arbitrary root system in \cite{MS}.

A motivation for construction of $\mathcal N = 4$ Calogero--Moser system goes back to the work \cite{TS} on a  conjectural description of near--horizon limit of Reissner-Nordstr\"om black hole where appearance of $su(1,1|2)$ superconformal Calogero--Moser model was suggested.  Though we also note more recent different considerations of near extremal black holes in \cite{MSY}. Another motivation to study supersymmetric (trigonometric) Calogero--Moser--Sutherland systems comes from the relation of these systems with conformal blocks and possible generalisation of these relations to the supersymmetric case \cite{ISh}.

 Wyllard gave an ansatz for $\mathcal N =4$ supercharges in \cite{Wyllard}. In general Wyllard's ansatz depends on two potentials $F$ and $W$. He constructed $su(1,1|2)$ $N$ particle Calogero--Moser Hamiltonian for a single value of the coupling parameter $c=1/N$ as bosonic part of his supersymmetric Hamiltonian with $W=0$. Wyllard argued  that his ansatz does not produce superconformal Calogero--Moser Hamiltonians for general values of $c$. 
Necessary differential equations for $F$ and $W$ were derived in \cite{Wyllard}. Thus potential $F$ satisfies generalised Witten--Dijkgraaf--Verlinde--Verlinde (WDVV) equations (in the form of \cite{MMM}) as it was pointed out in \cite{BGL}. Wyllard's potential $F$ has the form 
\begin{equation}
\label{potF}
F=\sum_{\gamma \in \mathcal A}  (\gamma, x)^2 \log (\gamma, x),
\end{equation}
where $\mathcal A$ is the root system $A_{N-1}$. Examples based on root systems $\mathcal A = G_2, B_3$ were also considered in \cite{Wyllard}.  
Solutions $F$ to WDVV equations of this type appear also in  Seiberg--Witten theory \cite{MMM} and in theory of Frobenius manifolds \cite{DubrovinAlmost}. 

More generally, Veselov introduced the notion of a $\vee$-system in \cite{Veselov}. $\vee$-systems form special collections of vectors in a linear space, which satisfy certain linear algebraic conditions. A logarithmic prepotential \eqref{potF} corresponding to a collection of vectors $\mathcal A$ satisfies WDVV equations if $\mathcal A$ is a $\vee$-system. The class of $\vee$--systems contains Coxeter root systems, deformations of generalized root systems of Lie superalgebras, special subsystems in and restrictions of such systems \cite{FV07, SV}. A complete description of the class remains open (see \cite{FV14} and references therein).

Several attempts have been made to construct supersymmetric mechanics such that the corresponding Hamiltonian has bosonic potential of Calogero--Moser type with a reasonably general coupling parameter(s).  
Wyllard's ansatz for $\mathcal N=4$ supercharges was extended to other root systems in \cite{GLP07}, \cite{GLP09} where solutions for a small number of particles were studied both for $W=0$ and $W\ne 0$. In particular, $su(1,1|2)$ superconformal Calogero--Moser systems related to $\mathcal A = A_1\oplus G_2, F_4$ and subsystems of $F_4$ were derived. Superconformal $su(1,1|2)$ Calogero--Moser systems for the rank two root systems were derived in \cite{BKSU} via suitable action in the superspace. For the WDVV equations arising in the superfield approach we refer to \cite{KLPO}.

A many-body model with ${D(2, 1 ; \alpha)}$ supersymmetry algebra with $\alpha=-\frac12$ was considered in \cite{fedorukMPOSP}. This model was obtained by a reduction from matrix model and it incorporates an extra set of bosonic variables (``${U}(2)$ spin variables'') which enter the bosonic potential of the corresponding Hamiltonian.
% as well as in other generators of the algebra. 
One-dimensional version of such a model was considered in \cite{FIL09May} and, for any $\alpha$, in \cite{BK09May}, \cite{fedoruk}. %  \cite{FIL12review}.
 A generalisation of the many-body classical spin superconformal model for any value of the parameter $\alpha$ was proposed in \cite{kriv}.  Within ${D(2, 1 ; \alpha)}$ supersymmetry ansatz of \cite{kriv} a class of bosonic potentials was obtained in \cite{FS}. The potential $F$ has the form \eqref{potF} for a root system $\mathcal A$. Then $W$ is a twisted period of the Frobenius manifold on the space of orbits corresponding to the root system $\mathcal A$. Such polynomial twisted periods were described in \cite{FS}, they exist for special values of parameter $\alpha$. Although the corresponding bosonic potentials are algebraic this class does not seem to contain generalised Calogero--Moser potentials associated with $\mathcal A$.

Recently  a construction of type $A_{N-1}$ supersymmetric (classical) Calogero--Moser model with extra spin bosonic generators and $\mathcal{N} N^2$ fermionic variables (for any even $\mathcal{N}$) was presented in \cite{kriv2}. The ansatz for supercharges is more involved and extra fermionic variables appear due to reduction from a matrix model.
% $osp(4|2)$ ??? 
A related quantum  $\mathcal N = 4$ supersymmetric spin $A_{N-1}$ Calogero--Moser system was studied recently in \cite{FILS}. Furthermore, a simpler ansatz for supercharges for the spin classical $A_{N-1}$ Calogero--Moser system was presented in \cite{KLS18}. This model has $\frac12 {\mathcal N} N(N+1)$ fermionic variables and the supersymmetry algebra is  $osp({\mathcal N}| 2)$.
Most recently classical supersymmetric $osp({\mathcal N}|2)$ Calogero--Moser systems %for classical root systems
 were presented in \cite{KLPS}; these models have nonlinear Hermitian conjugation property of matrix fermions and  supercharges are cubic in fermions.

%Bellucci, Galajinsky, Latini'05  $F$ satisfies WDVV;  study of 2-3 particles systems

%However, this turns out to be a non-trivial task since explicit solutions for $W$ based on known solutions $F$ for the WDVV equations are hard to generate. 

 %In \cite{gala}, the authors considered setting $W \equiv 0$ which yielded a class of models for the algebra $\operatorname{su(1, 1|2)}$  based on finite Coxeter root systems. For simply-laced root systems one gets a unique prepotential $F$ and the corresponding supersymmetric system, while for non-simply-laced root systems, one gets a one-parameter family of models. The corresponding Hamiltonians are not written down explicitly in \cite{gala} but one can see similarly to the present work which generalises considerations in \cite{gala}, that the bosonic parts are of Calogero-Moser type. Furthermore, they established a series of models related to the dihedral root systems. Note that in both \cite{gala}, \cite{gala1} a representation for the algebra $\operatorname{su(1, 1|2)}$ was costructed which in fact can be obtained as a special case of our above considerations (for $W=0$ case). Indeed in the following remark we establish a correspondence between our $1^{st}$ representation and the construction appeared in \cite[pp.~2-4]{gala} and \cite[pp.~4-7]{gala1} for the algebra $\operatorname{su(1, 1|2)}$. 

In the current work we present two constructions of supersymmetric $\mathcal N=4$ quantum mechanical system starting with an arbitrary $\vee$-system. In the case of a Coxeter root system $\mathcal A$ the bosonic part of the Hamiltonian is the Calogero--Moser Hamiltonian associated with $\mathcal A$ introduced by Olshanetsky and Perelomov in \cite{OP2}, which we get in two different gauges: the potential and potential free ones. In the latter case the Hamiltonian is not formally self-adjoint; this gauge comes from  the radial part of the Laplace--Beltrami operator on symmetric spaces \cite{BPF, OP2, Helg}. The superconformal algebra is $D(2,1; \alpha)$ where $\alpha$ depends on the $\vee$-system and is ultimately related with the coupling parameter in the resulting Calogero--Moser type Hamiltonian. We use original ansatz for the supercharges \cite{Wyllard}, \cite{GLP07} based on the potentials $F$, $W$ and we take $W=0$. In the special case when $\alpha=-1$  the superalgebra $D(2,1; -1)$
contains the superalgebra $su(1,1|2)$ as its subalebra, and our first ansatz on the $su(1,1|2)$ generators reduces to the one considered in \cite{GLP07, GLP09}. It was emphasised in \cite{GLP09} that such quantum models with $W=0$ are non-trivial with bosonic potentials proportional to squared Planck constant, though they were not considered in more detail in \cite{GLP09}. Thus we extend considerations in \cite{GLP09} for $W=0$ to the case of superconformal algebra $D(2, 1 ; \alpha)$, and we get in this framework quantum Calogero--Moser type systems associated with an arbitrary $\vee$--system, which includes Olshanetsky--Perelomov generalisations of the Calogero--Moser system with arbitrary invariant coupling parameters.

We also consider generalised trigonometric Calogero--Moser--Sutherland systems related to a collection of vectors $\mathcal{A}$ with multiplicities. We include these Hamiltonians in the supersymmetry algebra provided that extra assumptions on $\mathcal{A}$ are satisfied which are similar to WDVV equations for the trigonometric version of the potential $F$. We show that these assumptions can be satisfied when $\mathcal{A}$ is an irreducible root system with more than one orbit of the Weyl group, that is $BC_N$, $F_4$ and $G_2$ cases. A related solution of WDVV equations for the root system $B_N$ was obtained in \cite{HM}.

The structure of the paper is as follows. We recall the definition of the Lie superalgebra $D(2,1; \alpha)$ in Section \ref{sal}. We give two types of representations of this superalgebra in Sections \ref{1st rep section}, \ref{2nd rep section}. Starting with any $\vee$-system we get two corresponding supersymmetric Hamiltonians. In Section \ref{Hamsect} we present them explicitly. We consider supersymmetric trigonometric Calogero--Moser--Sutherland systems in Section \ref{trigsect}.

\vspace{3mm}

{\bf Acknowledgments}. We are grateful to M. Alkadhem, S. Dubovsky, S. Fedoruk, A. Galajinsky, E. Ivanov, S. Krivonos,  O. Lechtenfeld, I. Strachan and  A. Sutulin for many useful discussions and comments. In particular, we would like to thank S. Krivonos for questions which led us to considerations in Section \ref{trigsect}. The work of Georgios Antoniou was funded by EPSRC doctoral training partnership grants
EP/M506539/1, EP/M508056/1, EP/N509668/1.

\section{The ${D(2, 1;\alpha)}$ Lie superalgebra}
\label{sal}

Let us recall the definition of the family of Lie superalgebras ${D(2, 1;\alpha)}$, which depends on a parameter $\alpha \in \mathbb C$ (see e.g.  \cite[Section 20]{sorba}). 
The algebra has $8$ odd generators $Q^{abc}$  and $9$ even generators $T^{ab}=T^{ba}, I^{ab}=I^{ba}, J^{ab}=J^{ba}$ ($a,b,c=1,2$). Elements  $T^{ab}$, $I^{ab}$ and $J^{ab}$ generate three pairwise commuting $sl(2)$ algebras. 

Let $\epsilon_{ab}$, $\epsilon^{ab}$ be the fully anti-symmetric tensors in two dimensions such that $ \epsilon_{12}=\epsilon^{21}=1$.
Then all the relations of the superalgebra $D(2,1; \alpha)$ take the following form:

% For the correspondence between the relations in \cite[p.~29]{sorba} and (3.33)-(3.36) in \cite{fedurok} we take $Q^{abc} \equiv F_{abc }$,  $s_1=-2i$, $s_2= 2i(1+\alpha)$, $s_3=-2i \alpha$ and introduce different generators in the $sl(2)$ algebras as follows: 
%
%\begin{align}\label{sigmadef}
%^{(j)}\Sigma^{11}= -T_2^{(j)} + i T_1^{(j)}, \quad ^{(j)}\Sigma^{22}= -T_2^{(j)} - i T_1^{(j)}, \quad ^{(j)}\Sigma^{12}= ^{(j)}\Sigma^{21}= - iT_3^{(j)}.
%\end{align}
%
%Let us re-denote generators $^{(j)}\Sigma^{ab}$ as follows:  $^{(1)}\Sigma^{ab}=T^{ab}$, $^{(2)}\Sigma^{ab}= I^{ab}$ and $^{(3)}\Sigma^{ab}=J^{ab}$, $a, b=1, 2$.  Then the (anti)-commutation relations of $\operatorname{D(2, 1; \alpha)}$ read 
%
\begin{equation}\label{odd}
\{ Q^{a c e}, Q^{b d f}\} = - 2 \big( \epsilon^{ef} \epsilon^{c d} T^{ab} + \alpha \epsilon^{ab} \epsilon^{c d} J^{ef} - (\alpha+1) \epsilon^{ab} \epsilon^{ef} I^{c d}\big),
\end{equation}
\begin{equation}\label{T}
[T^{ab}, T^{cd}]= -i \big( \epsilon^{ac} T^{bd} + \epsilon^{bd} T^{ac} \big),
\end{equation}
\begin{equation}\label{JI}
\textbf{a}) \, [J^{ab}, J^{cd}]= -i \big( \epsilon^{ac} J^{bd} + \epsilon^{bd} J^{ac} \big), \quad \textbf{b}) \, [I^{ab}, I^{cd}]= -i \big( \epsilon^{ac} I^{bd} + \epsilon^{bd} I^{ac} \big),
\end{equation}
\begin{equation}\label{odd-even}
\textbf{a}) \,[T^{ab}, Q^{cdf}]= i \epsilon^{c (a}Q^{b)df} , \quad \textbf{b}) \,[J^{ab}, Q^{cdf}]= i \epsilon^{f (a}Q^{|cd|b)}, \quad\textbf{c}) \, [I^{ab}, Q^{cdf}]= i \epsilon^{d (a}Q^{|c|b)f},
\end{equation}
where we symmetrise over two indices inside $(\dots)$ with indices inside $|\dots|$ being unchanged. For example, $\epsilon^{f (a}Q^{|cd|b)}= \frac{1}{2} \big( \epsilon^{fa} Q^{cdb} + \epsilon^{fb}Q^{cda}\big)$. 

We also have relations 
\begin{align}\label{thezerorelations}
[T^{ab}, I^{cd}]=[I^{cd}, J^{ef}]= [T^{ab}, J^{ef}]=0, 
\end{align}
for all $a,b, c, d, e, f=1,2$. Let us rename generators as follows:
\begin{equation*}
Q^{a}= - Q^{21a}, \quad \bar{Q}^a= -  Q^{22a}, \quad S^a=Q^{11a}, \quad\bar{S}^a= Q^{12a}, \quad a=1, 2,
\end{equation*}
\begin{equation*}
 K=T^{11} , \quad H=T^{22} , \quad D=-T^{12}=-T^{21}.
\end{equation*}

 We will use $\epsilon_{ab}$ and $\epsilon^{ab}$ to lower and raise indices, e.g. $Q^a =\epsilon^{ab}Q_b$, $\bar{Q}^a= \epsilon^{ab} \bar{Q}_b$. 

We consider $N$ (quantum) particles on a line with coordinates and momenta $(x_j, p_j)$,  $j=1, \dots, N$ to each of which we associate four fermionic variables $\{ \psi^{aj}, \bar{\psi}_a^j | a=1, 2\}$. We will also write $x=(x_1,\ldots, x_N)$, $p=(p_1,\ldots, p_N)$.

We assume the following (anti)-commutation relations ($a, b = 1, 2; j,k = 1, \dots, N$):
\begin{align}\label{extebracket1}
[x_j, p_k] = i\delta_{jk}, \quad \{ \psi^{aj}, \bar{\psi}_b^k \} =-\frac{1}{2} \delta^{jk} \delta^{a}_b, \quad \{\psi^{aj}, \psi^{bk}\}=\{  \bar{\psi}_a^j,  \bar{\psi}_b^k\}=0.
\end{align}
Thus one can think of $p_k$ as $p_k =- i \frac{\partial}{\partial x_k}$. 

We introduce further fermionic variables by 
\begin{align}\label{e^ab}
\psi_a^j= \epsilon_{ab} \psi^{bj}, \quad \bar{\psi}^{aj} = \epsilon^{ab} \bar{\psi}_b^j. 
\end{align}
They satisfy the following useful relations:
\begin{align}\label{extebracket2}
\{\psi_{a}^j,  \bar{\psi}^{bk}\}= \frac{1}{2} \delta^{jk} \delta^{b}_a, \quad \{ \psi^{aj},  \bar{\psi}^{bk}\}=  \frac{1}{2} \epsilon^{jk} \epsilon^{ab}, \quad \{ \psi_{a}^j,  \bar{\psi}_b^k\} =  \frac{1}{2} \delta^{jk} \epsilon_{ba}.
\end{align}

We will be assuming throughout that summation over repeated indices takes place (even when both indices are either low or upper indices) unless it is indicated that no summation is applied.

Let $F=F(x_1, \dots, x_N)$ be a function such that 
\begin{align}
\label{alphalambda}
x_r F_{rjk}= - (2 \alpha +1) \delta_{jk},
\end{align}
where $F_{rjk} = \frac{\partial^3 F}{\partial {x_r} \partial {x_j} \partial {x_k}}$ for any $r, j, k  = 1, \dots, N$. We assume that all the derivatives $F_{rjk}$ are homogeneous in $x$ of degree -1.  Furthermore, we assume that $F$ satisfies the following Witten-Dijkgraaf-Verlinde-Verlinde equations (WDVV) equations 
\begin{align}\label{WDVV}
F_{rjk}F_{kmn}= F_{rmk}F_{kjn}, 
\end{align}
for any $r, j, k, m,n = 1, \dots, N$.

The following relations for arbitrary operators $A$, $B$, $C$ will be useful:
\begin{align}
[AB, C] &= A[B, C] + [A, C] B, \label{com1} \\
[AB, C]&= A \{ B, C \} - \{A, C \} B, \label{com2} \\
\{AB, C\}&= A[B,C] + \{A, C \}B. \label{com3}
\end{align}

We are going to present two representations of $D(2,1; \alpha)$ algebra using $F$.

\section{The first representation} 
\label{1st rep section}

Let the supercharges be of the form
\begin{align}\label{Q}
Q^a& = p_r \psi^{ar} + i F_{rjk} \langle \psi^{br} \psi_b^j  \bar{\psi}^{ak} \rangle,
\end{align}
\begin{align}\label{Qbar}
\bar{Q}_c &= p_l  \bar{\psi}_c^l + i F_{lmn} \langle  \bar{\psi}_d^l  \bar{\psi}^{dm} \psi_c^n \rangle,
\end{align}
where  the symbol $\langle \dots \rangle$ stands for the anti-symmetrisation. That is given $N$  operators $A_i$, ($i=1, \dots, N$) we define
\begin{align}\label{weyl}
\langle A_1 \dots A_N \rangle = \frac{1}{N!} \sum_{\sigma \in S_N} \operatorname{sgn}(\sigma) A_{
\sigma(1)} \dots A_{\sigma(N)}. 
\end{align}
Note that we have by  (\ref{extebracket1}), (\ref{extebracket2}) and (\ref{weyl})
\begin{align*}
\langle \psi^{br} \psi_b^j  \bar{\psi}^{ak} \rangle &= \frac{1}{6} ( 2 \psi^{br} \psi_b^j  \bar{\psi}^{ak} + 2  \bar{\psi}^{ak} \psi^{br} \psi_b^j - \psi^{br}  \bar{\psi}^{ak} \psi_b^j + \psi_b^j  \bar{\psi}^{ak} \psi^{br}) \\
&= \frac{1}{3}( \psi^{br} \psi_b^j  \bar{\psi}^{ak} +  \bar{\psi}^{ak} \psi^{br} \psi_b^j - \psi^{br}  \bar{\psi}^{ak} \psi_b^j ) + \frac{1}{12} (\delta^{jk} \psi^{ar} - \delta^{rk} \psi^{aj})\\
&= \psi^{br}\psi_b^j  \bar{\psi}^{ak} - \frac{1}{6} \delta^{rk} \psi^{aj} - \frac{1}{3} \delta^{jk} \psi^{ar}+\frac{1}{12} (\delta^{jk} \psi^{ar} - \delta^{rk} \psi^{aj}). 
\end{align*}
Note that $F_{rjk} (\delta^{jk} \psi^{ar} - \delta^{rk} \psi^{aj})=0$ since $\delta^{jk} \psi^{ar} - \delta^{rk} \psi^{aj}$ is anti-symmetric under the interchange of $k$ and $r$. Note also that $F_{rjk} \psi^{aj} \delta^{rk}= F_{rjk} \psi^{ar} \delta^{jk}$. Therefore
\begin{align}\label{weylmanyparticle1rep}
 F_{rjk}\langle \psi^{br} \psi_b^j  \bar{\psi}^{ak} \rangle=  F_{rjk} ( \psi^{br} \psi_b^j  \bar{\psi}^{ak} - \frac{1}{2} \psi^{ar} \delta^{jk}).
\end{align}
Similarly,
\begin{align}\label{weylmanyparticle1rep2ndsuper}
F_{lmn} \langle  \bar{\psi}_d^l  \bar{\psi}^{dm} \psi_c^n \rangle = F_{lmn}(  \bar{\psi}_d^l  \bar{\psi}^{dm} \psi_c^n  - \frac{1}{2} \bar{\psi}_c^l \delta^{nm} ).
\end{align}
Let also
\begin{align}
\label{K}
K=x^2= \sum_{j=1}^N x_j^2, 
\end{align}
\begin{align}
\label{D}
 D= -\frac{1}{4} \{x_j, p_j\}=-\frac{1}{2} x_j p_j + \frac{i N}{2},
\end{align}
\begin{align}
\label{I}
I^{11}= - i \psi_a^j \psi^{aj}, \quad I^{22}= i \bar{\psi}^{aj} \bar{\psi}_a^j, \quad I^{12}=-\frac{i}{2} [ \psi_a^j, \bar{\psi}^{aj}],
\end{align}
\begin{align}
\label{J}
J^{ab}=J^{ba} = 2i \psi^{(aj}\bar{\psi}^{bj)},
\end{align}
\begin{align}
\label{S}
S^a= -2 x_j \psi^{aj}, \quad \bar{S}_a = -2 x_j \bar{\psi}_a^j.
\end{align}

\begin{remark}
Ansatz \eqref{Q}, \eqref{Qbar}, \eqref{K}, \eqref{D}, \eqref{J}, \eqref{S} with $F$ satisfying \eqref{alphalambda} at $\alpha =-1$ matches considerations in \cite{GLP07} (see also \cite{Wyllard}, \cite{GLP09}), where $su(1,1|2)$ superconformal mechanics was considered. Note that superalgebra $su(1,1|2)$ generated by $Q^{abc}, T^{ab}, J^{ab}$ is a subalgebra in the superalgebra $D(2, 1; -1)$. Thus Lemmas \ref{Jmanyparticle1}, \ref{oddevenbmanyparticle1}  below can be deduced from considerations in \cite{GLP07}. We include these lemmas so that to have complete derivations for reader's convenience.
\end{remark}

Let us firstly check relations (\ref{JI}), (\ref{odd-even}) involving generators $J^{ab}$ and $I^{ab}$.

\begin{lemma}[cf. \cite{GLP07}] \label{Jmanyparticle1} Let $J^{ab}$ be given by \eqref{J}. Then relations (\ref{JI}a) hold. 
\end{lemma}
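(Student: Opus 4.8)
The plan is to reduce the bracket of the quadratic fermionic generators to a single elementary commutator together with the two-dimensional $\epsilon$-identities. Introduce the auxiliary bilinears $M^{ab} = \psi^{aj}\bar\psi^{bj}$ (summation over the particle index $j$ understood), so that by \eqref{J} we have $J^{ab} = i\,(M^{ab}+M^{ba})$ and, conversely, $M^{pq}+M^{qp} = -i\,J^{pq}$. Everything then hinges on computing the single bracket $[M^{ab},M^{cd}]$.

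First I would compute this elementary bracket. Writing out $M^{ab}=\psi^{aj}\bar\psi^{bj}$ and $M^{cd}=\psi^{ck}\bar\psi^{dk}$ and applying the product rules \eqref{com1}--\eqref{com3} for commutators of two fermionic bilinears, the two anticommutators $\{\psi,\psi\}$ and $\{\bar\psi,\bar\psi\}$ drop out by \eqref{extebracket1}, leaving
\begin{align*}
[\psi^{aj}\bar\psi^{bj},\,\psi^{ck}\bar\psi^{dk}] = \{\bar\psi^{bj},\psi^{ck}\}\,\psi^{aj}\bar\psi^{dk} - \{\psi^{aj},\bar\psi^{dk}\}\,\psi^{ck}\bar\psi^{bj}.
\end{align*}
Substituting the mixed $\psi$--$\bar\psi$ anticommutators from \eqref{extebracket2} and collapsing the resulting Kronecker deltas $\delta^{jk}$ in the particle indices gives the closed relation
\begin{align*}
[M^{ab},M^{cd}] = \tfrac12\,\epsilon^{cb}M^{ad} - \tfrac12\,\epsilon^{ad}M^{cb}.
\end{align*}

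Next I would substitute $J^{ab}=i(M^{ab}+M^{ba})$, expand $[J^{ab},J^{cd}]$ into the four brackets $[M^{ab},M^{cd}],\,[M^{ab},M^{dc}],\,[M^{ba},M^{cd}],\,[M^{ba},M^{dc}]$, apply the formula above to each, and regroup the eight resulting terms by their $\epsilon$ prefactor. Each group assembles into a symmetric combination $M^{pq}+M^{qp}$, which I convert back via $M^{pq}+M^{qp}=-iJ^{pq}$. I expect this bookkeeping to yield
\begin{align*}
[J^{ab},J^{cd}] = -\tfrac{i}{2}\big(\epsilon^{bc}J^{ad} + \epsilon^{ad}J^{bc} + \epsilon^{ac}J^{bd} + \epsilon^{bd}J^{ac}\big).
\end{align*}

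The main obstacle, and the only genuinely nontrivial step, is showing that this equals the claimed right-hand side $-i\big(\epsilon^{ac}J^{bd}+\epsilon^{bd}J^{ac}\big)$, i.e. that $\epsilon^{bc}J^{ad}+\epsilon^{ad}J^{bc}=\epsilon^{ac}J^{bd}+\epsilon^{bd}J^{ac}$. This is exactly where two-dimensionality enters, through the Schouten identity $\epsilon^{ab}V^c+\epsilon^{bc}V^a+\epsilon^{ca}V^b=0$, valid since a totally antisymmetric rank-$3$ tensor vanishes in two dimensions. Applying it once with $V^{\bullet}=J^{\bullet d}$ (indices $a,b,c$) and once with $V^{\bullet}=J^{\bullet c}$ (indices $a,b,d$), and using the symmetry $J^{cd}=J^{dc}$ from \eqref{J}, rewrites $\epsilon^{bc}J^{ad}$ and $\epsilon^{ad}J^{bc}$ so that the spurious $\epsilon^{ab}J^{cd}$ contributions cancel, leaving precisely $\epsilon^{ac}J^{bd}+\epsilon^{bd}J^{ac}$. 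Feeding this back supplies the factor of $2$ that turns $-\tfrac{i}{2}$ into $-i$, which is relation \eqref{JI}a.
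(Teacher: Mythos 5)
Your proposal is correct and follows essentially the same route as the paper: the core step is the identical elementary commutator $[\psi^{aj}\bar\psi^{bj},\psi^{ck}\bar\psi^{dk}]=\tfrac12\epsilon^{cb}\psi^{aj}\bar\psi^{dj}+\tfrac12\epsilon^{da}\psi^{cj}\bar\psi^{bj}$, which is exactly what the paper computes before stating that the relation follows. You merely spell out the symmetrisation bookkeeping and the two-dimensional Schouten identity that the paper leaves implicit in the phrase ``which implies the statement.''
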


\begin{proof}
We consider the commutator
\begin{align*}
[\psi^{aj} \bar{\psi}^{bj}, \psi^{ck} \bar{\psi}^{dk}]&= \psi^{aj} [ \bar{\psi}^{bj}, \psi^{ck} \bar{\psi}^{dk}] + [ \psi^{aj}, \psi^{ck} \bar{\psi}^{dk}] \bar{\psi}^{bj} \\
&= \frac{1}{2} \epsilon^{cb} \psi^{aj} \bar{\psi}^{dj} + \frac{1}{2} \epsilon^{da} \psi^{cj} \bar{\psi}^{bj},
\end{align*}
which implies the statement.
\end{proof}

We will use the following relations: 

\begin{align}\label{forproofofIS}
[ \bar{\psi}^{bk}, \psi_a^j \psi^{aj}]=\psi^{bk}, \quad [ \bar{\psi}^{aj}  \bar{\psi}_a^j, \psi_b^k]=- \bar{\psi}_b^k.
\end{align}
\begin{lemma}\label{Imanyparticle1}
Let $I^{ab}$ be given by \eqref{I}. Then relations (\ref{JI}b) hold.
\end{lemma}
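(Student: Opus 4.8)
The plan is to verify the three independent commutation relations contained in $(\ref{JI}\textbf{b})$. Since $I^{ab}=I^{ba}$, the only nontrivial brackets are $[I^{11},I^{12}]$, $[I^{12},I^{22}]$ and $[I^{11},I^{22}]$; evaluating the right-hand side of $(\ref{JI}\textbf{b})$ and recalling that $\epsilon^{12}=-1$, these must come out to be $iI^{11}$, $iI^{22}$ and $2iI^{12}$ respectively. Every other index choice then follows from these by antisymmetry of the bracket together with the symmetry of $I^{ab}$, so there is nothing further to check.

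To organise the computation I would abbreviate $A=\psi_a^j\psi^{aj}$, $B=\bar{\psi}^{aj}\bar{\psi}_a^j$ and $C=[\psi_a^j,\bar{\psi}^{aj}]$, so that by \eqref{I} we have $I^{11}=-iA$, $I^{22}=iB$ and $I^{12}=-\tfrac{i}{2}C$. The two basic inputs are the relations \eqref{forproofofIS}, which after raising indices with $\epsilon^{ab}$ read $[A,\bar{\psi}^{bk}]=-\psi^{bk}$ and $[B,\psi_b^k]=-\bar{\psi}_b^k$, together with the observation that $A$ commutes with every $\psi$ and $B$ with every $\bar{\psi}$; the latter is immediate from \eqref{com2}, since the relevant anticommutators $\{\psi,\psi\}$ and $\{\bar{\psi},\bar{\psi}\}$ vanish by \eqref{extebracket1}.

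For the two ``diagonal'' brackets $[I^{11},I^{12}]$ and $[I^{12},I^{22}]$ I would expand $C$ and apply the Leibniz rule \eqref{com1} term by term. Using the inputs above, $[A,C]$ reduces to $-2\psi_b^k\psi^{bk}$ after flipping one ordering with $\{\psi_b^k,\psi^{bk}\}=0$, so that $[I^{11},I^{12}]=-\tfrac12[A,C]=A=iI^{11}$; symmetrically one finds $[B,C]=2B$, hence $[I^{12},I^{22}]=\tfrac12[C,B]=-B=iI^{22}$. These two computations close on $A$ and on $B$ directly and present no real difficulty.

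The delicate step, which I expect to be the main obstacle, is $[I^{11},I^{22}]=[A,B]$. Leibniz combined with the two inputs gives $[A,B]=\psi_a^j\bar{\psi}^{aj}+\bar{\psi}_a^j\psi^{aj}$, and the whole point is to recognise this as $C=\psi_a^j\bar{\psi}^{aj}-\bar{\psi}^{aj}\psi_a^j$, which then yields $[I^{11},I^{22}]=2iI^{12}$ as required. The matching hinges on the index identity $\epsilon_{ab}\epsilon^{ac}=-\delta^c_b$, which converts $\bar{\psi}_a^j\psi^{aj}$ into $-\bar{\psi}^{aj}\psi_a^j$; keeping the raising/lowering conventions and the resulting signs straight here is the only genuinely error-prone part, and I would cross-check it against the anticommutators \eqref{extebracket2}.
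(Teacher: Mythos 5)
Your proposal is correct and follows essentially the same route as the paper: all three brackets are reduced to the elementary commutators \eqref{forproofofIS} via the Leibniz rules, with the key step for $[I^{11},I^{22}]$ being the sign identity $\bar{\psi}_a^j\psi^{aj}=-\bar{\psi}^{aj}\psi_a^j$ needed to recognise $2iI^{12}$, exactly as in the paper. The only (inessential) difference is that for the diagonal brackets the paper first invokes the Jacobi identity to rewrite $[\psi_a^j\psi^{aj},[\psi_b^k,\bar{\psi}^{bk}]]$ as a nested commutator, whereas you expand $[\psi_b^k,\bar{\psi}^{bk}]$ and apply Leibniz term by term; both yield the same computation.
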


\begin{proof}
The relations (\ref{JI}b) read 
\begin{align*}
[I^{11}, I^{22}]= 2i I^{12}, \quad [I^{11}, I^{12}]= i I^{11}, \quad [I^{22}, I^{12}]= -i I^{22}. 
\end{align*}
We have 
\begin{align}\label{I^11I22many1rep}
[I^{11}, I^{22}]&= [\psi_a^j \psi^{aj}, \bar{\psi}^{bk} \bar{\psi}_b^k].
\end{align}
By applying (\ref{com1}), (\ref{com2}) we rearrange expression (\ref{I^11I22many1rep}) as
\begin{align*}
[I^{11}, I^{22}]&= \psi_a^j [ \psi^{aj}, \bar{\psi}^{bk} \bar{\psi}_{b}^k] + [ \psi_a^j , \bar{\psi}^{bk} \bar{\psi}_b^k] \psi^{aj} \\
&= \psi_a^j \bar{\psi}^{aj} + \bar{\psi}_a^j \psi^{aj} = \psi_a^j \bar{\psi}^{aj} - \bar{\psi}^{aj}\psi_a^j \\
&= 2i I^{12},
\end{align*}
as required. Moreover, using the Jacobi identity we have
$$
[I^{11}, I^{12}] =-\frac{1}{2} [\psi_a^j \psi^{aj}, [\psi_b^k,  \bar{\psi}^{bk}]]= \frac{1}{2} [ \psi_b^k, [ \bar{\psi}^{bk}, \psi_a^j \psi^{aj}]].
$$
 Thus by using the first relation in (\ref{forproofofIS})  
\begin{align*}
[I^{11}, I^{12}] = \psi_b^k \psi^{bk}= i I^{11}.
\end{align*}
Similarly, $$ [I^{22}, I^{12}]=\frac{1}{2} [  \bar{\psi}^{aj}  \bar{\psi}_a^j ,  [\psi_b^k,  \bar{\psi}^{bk}]]=-\frac{1}{2} [\bar{\psi}^{bk}, [ \bar{\psi}^{aj}  \bar{\psi}_a^j, \psi_b^k]].$$ Hence, by using the latter relation in (\ref{forproofofIS})
\begin{align*}
[I^{22}, I^{12}]= \bar{\psi}^{bk} \bar{\psi}_b^k=-i I^{22},
\end{align*}
and hence the statement follows. 
\end{proof}

In what follows, we will use the following relation:
\begin{align}\label{QJmany1rep1form}
[\psi^{aj} \bar{\psi}^{bj}, \psi^{cl}]= - \frac{1}{2} \epsilon^{bc}  \psi^{al}.
\end{align}
By formulae (\ref{com1}), (\ref{com2}) we also have 
\begin{align}
[\psi^{aj} \bar{\psi}^{bj}, \psi^{dl} \psi_d^m \bar{\psi}^{cn}] &= \psi^{dl} \psi_d^m [ \psi^{aj} \bar{\psi}^{bj} , \bar{\psi}^{cn}] + [\psi^{aj} \bar{\psi}^{bj}, \psi^{dl} \psi_d^m ] \bar{\psi}^{cn}  \nonumber \\
&= - \psi^{dl} \psi_d^m  \bar{\psi}^{bj} \{  \bar{\psi}^{cn}, \psi^{aj}\} + \psi^{dl} [\psi^{aj}  \bar{\psi}^{bj}, \psi_d^m] \bar{\psi}^{cn} + [ \psi^{aj}  \bar{\psi}^{bj}, \psi^{dl}] \psi_d^m  \bar{\psi}^{cn} \nonumber\\
&= \frac{1}{2} \epsilon^{ca}  \psi^{dl} \psi_d^m  \bar{\psi}^{bn} + \frac{1}{2} \psi^{bl} \psi^{am}  \bar{\psi}^{cn}  +\frac{1}{2}   \psi^{bm}  \psi^{al} \bar{\psi}^{cn} \label{commutatorforI12} .
\end{align}

\begin{lemma}[cf. \cite{GLP07}]\label{oddevenbmanyparticle1}
Let $Q^{abc}$, $J^{ab}$ be as above. Then the relations (\ref{odd-even}b) hold.
\end{lemma}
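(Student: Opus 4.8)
The plan is to use that $J^{ab}$, being bilinear in the fermions by \eqref{J}, commutes with every $x_j$ and $p_j$ and acts on the remaining factors as a derivation (via \eqref{com1}--\eqref{com3}); thus $[J^{ab},\,\cdot\,]$ merely implements the $su(2)$ rotation carried by the fermionic index. First I would record the two elementary transformation rules
\begin{align*}
[J^{ab}, \psi^{ck}] &= -\tfrac{i}{2}\big(\epsilon^{bc}\psi^{ak} + \epsilon^{ac}\psi^{bk}\big), \\
[J^{ab}, \bar\psi^{ck}] &= -\tfrac{i}{2}\big(\epsilon^{bc}\bar\psi^{ak} + \epsilon^{ac}\bar\psi^{bk}\big),
\end{align*}
the first of which is immediate from \eqref{QJmany1rep1form} and \eqref{J}, and the second from the completely analogous computation using \eqref{com2} together with \eqref{extebracket1}, \eqref{extebracket2}. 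These say that both $\psi^{c\cdot}$ and $\bar\psi^{c\cdot}$ transform as doublets of the $su(2)$ generated by $J^{ab}$.

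Next I would observe that the bilinear $\psi^{dr}\psi_d^j=\epsilon_{de}\psi^{dr}\psi^{ej}$ appearing in the cubic term of $Q^c$ (and likewise $\bar\psi_d^l\bar\psi^{dm}$ in $\bar Q_c$) is an $su(2)$ singlet, i.e. $[J^{ab}, \psi^{dr}\psi_d^j]=0$; this follows from the doublet rule above and the invariance of $\epsilon$. Consequently each of the eight supercharges $Q^c,\bar Q_c,S^c,\bar S_c$ carries exactly one free $su(2)$ index, sitting on a single doublet factor, so $[J^{ab},\,\cdot\,]$ can only rotate that index. Carrying this out and matching against the renaming $S^a=Q^{11a}$, $\bar S^a=Q^{12a}$, $Q^a=-Q^{21a}$, $\bar Q^a=-Q^{22a}$, one obtains in every case
$$[J^{ab}, X^c] = \tfrac{i}{2}\big(\epsilon^{ca}X^b + \epsilon^{cb}X^a\big), \qquad X \in \{Q, \bar Q, S, \bar S\},$$
which is precisely (\ref{odd-even}b).

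To make the argument self-contained for the nontrivial case, I would then verify the relation directly for $Q^c$ in the form \eqref{Q}, \eqref{weylmanyparticle1rep}. The linear piece $p_r\psi^{cr}$ and the correction piece $-\tfrac{i}{2}F_{rjk}\delta^{jk}\psi^{cr}$ are handled by the first doublet rule, each producing $\tfrac{i}{2}\epsilon^{ca}(\cdots)^b+\tfrac{i}{2}\epsilon^{cb}(\cdots)^a$ with the matching pieces of $Q^b,Q^a$. For the genuine cubic term I would feed $J^{ab}=i(\psi^{aj}\bar\psi^{bj}+\psi^{bj}\bar\psi^{aj})$ into \eqref{commutatorforI12}, once as written and once with $a\leftrightarrow b$. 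This yields the two expected contributions, the main cubic part of $Q^b$ (resp. $Q^a$) multiplied by $\tfrac{i}{2}\epsilon^{ca}$ (resp. $\tfrac{i}{2}\epsilon^{cb}$), together with a spurious contribution proportional to $F_{lmn}(\psi^{bl}\psi^{am}+\psi^{al}\psi^{bm})\bar\psi^{cn}$. Assembling the three pieces reproduces the claimed right-hand side for $Q^c$, and the cases of $\bar Q_c$, $S^c$, $\bar S_c$ follow from the same two doublet rules, the latter two being immediate since they involve a single fermion by \eqref{S}.

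The main obstacle is precisely the disposal of the spurious term $F_{lmn}(\psi^{bl}\psi^{am}+\psi^{al}\psi^{bm})\bar\psi^{cn}$, which must vanish for the computation to close. I expect this to follow from the total symmetry of $F_{lmn}$ in its indices combined with the anticommutativity $\psi^{bm}\psi^{al}=-\psi^{al}\psi^{bm}$: relabelling $l\leftrightarrow m$ in the first summand turns it into the negative of the second, so the two cancel. Getting the many $\epsilon$-contractions and signs to line up across all four types of supercharge, and reconciling them with the symmetrised right-hand side of (\ref{odd-even}b) under the index renaming, is the only delicate bookkeeping; the structural ``singlet $\times$ doublet'' picture of the first two paragraphs is what guarantees in advance that everything must combine into the stated doublet rotation.
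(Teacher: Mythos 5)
Your proposal is correct and follows essentially the same route as the paper's proof: the doublet rule for $\psi$ is the paper's \eqref{QJmany1rep1form}, the cubic term is handled through \eqref{commutatorforI12}, and the extra bilinear terms are discarded by a symmetry argument. The only cosmetic difference is that you kill the spurious term $F_{lmn}(\psi^{bl}\psi^{am}+\psi^{al}\psi^{bm})\bar{\psi}^{cn}$ by relabelling $l\leftrightarrow m$ against the symmetry of $F_{lmn}$, whereas the paper observes that the offending terms are antisymmetric in $a,b$ and so drop out under the symmetrisation built into $J^{ab}=J^{ba}$ (which also yields your stronger singlet statement $[J^{ab},\psi^{dl}\psi_d^m]=0$ without invoking $F$); both arguments are valid.
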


\begin{proof}
Firstly let us note that the sum of the last two terms in (\ref{commutatorforI12}) is anti-symmetric in $a$ and $b$ and $J^{ab}=J^{ba}$. Therefore we have by applying (\ref{commutatorforI12})
\begin{align}\label{JabFpsiorder3}
[J^{ab}, F_{lmn} \psi^{dl} \psi_d^m \bar{\psi}^{cn}]= \frac{i}{2} \epsilon^{ca} F_{lmn}  \psi^{dl} \psi_d^m  \bar{\psi}^{bn} + \frac{i}{2} \epsilon^{cb}  F_{lmn} \psi^{dl} \psi_d^m  \bar{\psi}^{an}.
\end{align}
Then
\begin{align*}
[J^{ab}, Q^{21c}]&= -[J^{ab}, Q^c] = -[J^{ab}, p_l \psi^{cl}] - i F_{lmn} [J^{ab}, \langle \psi^{dl}\psi_d^m \bar{\psi}^{cn}\rangle] .
\end{align*}
Therefore we get from (\ref{QJmany1rep1form}) and (\ref{JabFpsiorder3}) that
\begin{align}\label{JabQ21cmanyparticle1rep}
[J^{ab}, Q^{21c}]&= \frac{i}{2}\big( \epsilon^{bc} p_l \psi^{al} +  \epsilon^{ac} p_l \psi^{bl}  -i \epsilon^{ca} F_{lmn} \langle \psi^{dl}\psi_d^m \bar{\psi}^{bn}\rangle-i \epsilon^{cb} F_{lmn} \langle \psi^{dl}\psi_d^m \bar{\psi}^{an}\rangle \big) \\
&= - \frac{i}{2} ( \epsilon^{cb} Q^a + \epsilon^{ca} Q^b) = i\epsilon^{c(a}Q^{|21|b)}\nonumber, 
\end{align}
as required in (\ref{odd-even}b). Further, we consider 
\begin{align*}
[J^{ab}, S^c] = -2 x_l [J^{ab}, \psi^{cl}] =i x_l ( \epsilon^{bc} \psi^{al} + \epsilon^{ac} \psi^{bl} )= \frac{i}{2}( \epsilon^{cb} S^a + \epsilon^{ca} S^b ) =  i\epsilon^{c(a}Q^{|11|b)},
\end{align*}
which coincides with the corresponding relation in (\ref{odd-even}b). The remaining relations can be proven similarly.
\end{proof}

\begin{lemma}\label{oddevencmanypartilcle1}Let $Q^{abc}$, $I^{ab}$ be as above. Then relations  (\ref{odd-even}c) hold.
\end{lemma}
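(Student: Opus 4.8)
The plan is to reduce everything to commutators of $I^{ab}$ with single fermions, since $I^{ab}$ in \eqref{I} contains no $x_j$ or $p_j$ and hence commutes with them; thus $[I^{ab},Q^{cdf}]$ only feels the fermionic factors of \eqref{Q}, \eqref{Qbar}, \eqref{S}. The first step is to record the elementary commutators. From \eqref{forproofofIS} (recalling $I^{11}=-i\psi_a^j\psi^{aj}$, $I^{22}=i\bar\psi^{aj}\bar\psi_a^j$), and a short computation for $I^{12}=-\tfrac{iN}{2}+i\bar\psi^{aj}\psi_a^j$ using \eqref{extebracket1}, \eqref{extebracket2}, I would obtain
\[
[I^{11},\psi^{cr}]=0,\quad [I^{11},\bar\psi^{cr}]=i\psi^{cr},\quad [I^{22},\psi^{cr}]=-i\bar\psi^{cr},\quad [I^{22},\bar\psi^{cr}]=0,
\]
\[
[I^{12},\psi^{cr}]=-\tfrac{i}{2}\psi^{cr},\qquad [I^{12},\bar\psi^{cr}]=\tfrac{i}{2}\bar\psi^{cr}.
\]
These exhibit $I^{ab}$ as the $sl(2)$ rotating the $\psi/\bar\psi$ doublet carried by the middle index of $Q^{cdf}$.

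Given these, the charges $S^f,\bar S^f$ (that is $Q^{11f},Q^{12f}$) are linear in the fermions, so the relations (\ref{odd-even}c) for them follow at once; for instance $[I^{11},S^f]=-2x_j[I^{11},\psi^{fj}]=0$ and $[I^{11},\bar S^f]=-2x_j[I^{11},\bar\psi^{fj}]=iS^f$. The Cartan generator $I^{12}$ is equally direct on all eight charges: every fermionic monomial occurring in $Q^{cdf}$ has a definite net $\psi$-number, namely $+1$ for $d=1$ and $-1$ for $d=2$ (the cubic pieces $\psi\psi\bar\psi$ in $Q^f$ and $\bar\psi\bar\psi\psi$ in $\bar Q^f$ share this number with their linear parts). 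Since $I^{12}$ measures precisely this number, the Leibniz rule \eqref{com1} gives $[I^{12},Q^{cdf}]=\mp\tfrac{i}{2}Q^{cdf}$ (top sign for $d=1$), which is (\ref{odd-even}c) for $a=1,b=2$.

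The substantive cases are $I^{11}$ and $I^{22}$ acting on $Q^f,\bar Q^f$, whose cubic terms get mapped between the $\psi\psi\bar\psi$ and $\bar\psi\bar\psi\psi$ structures. I would treat $[I^{11},\bar Q^f]=iQ^f$ in detail. Expanding by \eqref{com1} and using the simplified forms \eqref{weylmanyparticle1rep}, \eqref{weylmanyparticle1rep2ndsuper}, the linear parts and the $\delta$-contraction terms match the claim immediately, and the statement collapses to the purely fermionic cubic identity
\[
F_{lmn}\big(\psi_d^l\bar\psi^{dm}\psi^{fn}+\bar\psi_d^l\psi^{dm}\psi^{fn}\big)=F_{rjk}\,\psi^{br}\psi_b^j\bar\psi^{fk}.
\]
In the same way $[I^{11},Q^f]=0$ reduces to $F_{rjk}\psi^{br}\psi_b^j\psi^{fk}=0$, which holds because every surviving monomial contains two $\psi$'s of equal doublet index sitting at particle positions that are symmetric under $F_{rjk}$.

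I expect the displayed cubic identity to be the main obstacle. The route is to normal-order both summands on the left with \eqref{extebracket1}, \eqref{extebracket2}: the $\delta$-terms produced cancel between the two summands, and after relabelling dummy doublet and particle indices (using the total symmetry of $F_{rjk}$) the two summands become equal, reducing the left-hand side to $-2F_{lmn}\psi_d^l\psi^{fn}\bar\psi^{dm}$. A $2$-dimensional (Schouten/Fierz) rearrangement of the $\epsilon$-contraction, together with the vanishing of any term carrying two equal-doublet fermions at $F$-symmetric positions, then identifies this with the right-hand side. It is worth noting that, unlike the closure relation \eqref{odd}, only the symmetry of $F_{rjk}$ and not the WDVV equations \eqref{WDVV} is needed here. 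Finally, the remaining relations in (\ref{odd-even}c) — those for $I^{22}$ and for $Q^{21f},Q^{22f}$ — follow by the same manipulations, or by the $\psi\leftrightarrow\bar\psi$ conjugation symmetry, exactly as the analogous cases were left to the reader in Lemma \ref{oddevenbmanyparticle1}.
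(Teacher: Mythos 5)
Your proposal is correct and follows essentially the same route as the paper: reduce everything to the action of $I^{ab}$ on the fermions, apply the Leibniz rule to the cubic terms of the supercharges, and use only the total symmetry of $F_{rjk}$ together with the two-dimensional fermion identities (the paper likewise never invokes WDVV in this lemma). The only differences are organizational — you package the $I^{12}$ cases as a $\psi$-number grading argument and verify $[I^{11},\bar Q^f]=iQ^f$ where the paper verifies the conjugate relation $[I^{22},Q^{21a}]=i\bar Q^a$ — and your key cubic identity, including the cancellation of the $\delta$-terms and the reduction to $-2F_{lmn}\psi_d^l\psi^{fn}\bar\psi^{dm}$, checks out.
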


\begin{proof}
Let us first consider  $[I^{11}, Q^{21a}]$. Using formulae (\ref{com1}), (\ref{com2}) we have
\begin{align}\label{blah1}
[ \psi_d^r \psi^{dr}, \psi^{bl} \psi_b^m \bar{\psi}^{an}] &= \psi^{bl} \psi_b^m [ \psi_d^r \psi^{dr}, \bar{\psi}^{an}] = - \psi^{bl} \psi_b^m \psi^{an}.
\end{align}
It follows that $F_{lmn} [ \psi_d^r \psi^{dr}, \psi^{bl} \psi_b^m \bar{\psi}^{an}]=0$ and hence
\begin{align}\label{blah2}
[I^{11}, Q^{21a}]=i [ \psi_d^r \psi^{dr}, Q^a]=i[\psi_d^r \psi^{dr}, p_l \psi^{al}]=0, 
\end{align}
as required for (\ref{odd-even}c). 

Let us now consider $[I^{22}, Q^{21a}]$. We have
\begin{align}\label{IQmany1rep1form}
[I^{22}, \psi^{al}]=i[\bar{\psi}^{dr} \bar{\psi}_{d}^r , \psi^{al}]= - i\bar{\psi}^{al}, 
\end{align}
and hence
\begin{align}\label{IQmany1rep2form}
[ \bar{\psi}^{dr} \bar{\psi}_d^r, \psi^{bl} \psi_b^m \bar{\psi}^{an}] &= - [ \psi^{bl} \psi_b^m , \bar{\psi}^{dr} \bar{\psi}_d^r ] \bar{\psi}^{an}\nonumber \\
&= ( \psi^{bl} [ \bar{\psi}^{dr} \bar{\psi}_d^r, \psi_b^m ] + [ \bar{\psi}^{dr} \bar{\psi}_d^r , \psi^{bl}]\psi_b^m ) \bar{\psi}^{an}\nonumber \\
&= - \psi^{bl} \bar{\psi}_b^m \bar{\psi}^{an} - \bar{\psi}^{bl} \psi_b^m \bar{\psi}^{an}.
\end{align}
By reordering terms in (\ref{IQmany1rep2form}) we obtain
\begin{align*}
[ \bar{\psi}^{dr} \bar{\psi}_d^r, \psi^{bl} \psi_b^m \bar{\psi}^{an}] &= ( \bar{\psi}_b^m \psi^{bl} + \delta^{lm} ) \bar{\psi}^{an} +  \bar{\psi}^{bl} ( \bar{\psi}^{an} \psi_b^m - \frac{1}{2} \delta^{a}_b \delta^{nm})\\
&= -\bar{\psi}_b^m   \bar{\psi}^{an} \psi^{bl}- \frac{1}{2} \bar{\psi}^{am} \delta^{ln} + \delta^{lm}  \bar{\psi}^{an} - \bar{\psi}_b^l  \bar{\psi}^{an} \psi^{bm}  - \frac{1}{2}  \bar{\psi}^{al} \delta^{nm}. 
\end{align*}
Therefore 
\begin{align}\label{IQmany1rep3form}
F_{lmn} [ \bar{\psi}^{dr} \bar{\psi}_d^r, \psi^{bl} \psi_b^m \bar{\psi}^{an}] = -2 F_{lmn}  \bar{\psi}_b^l  \bar{\psi}^{an} \psi^{bm}.
\end{align}
Note that $F_{lmn} \bar{\psi}_c^l \bar{\psi}^{an} \psi^{cm}=0$ if $c$ is fixed such that $c \neq a$. Hence (\ref{IQmany1rep3form}) can be rearranged as $-2 F_{lmn}  \bar{\psi}_a^l  \bar{\psi}^{am} \psi^{an}$ which is also equal to $-F_{lmn}  \bar{\psi}_b^l  \bar{\psi}^{bm} \psi^{an}$. Therefore 
\begin{align}\label{I22Q21amanyparticle1}
[I^{22}, Q^{21a}]&= -i [  \bar{\psi}^{dr} \bar{\psi}_d^r, Q^a]= -i \big( - p_l \bar{\psi}^{al} + i F_{lmn} ( - \bar{\psi}_b^l  \bar{\psi}^{bm} \psi^{an} + \frac{1}{2}  \bar{\psi}^{al} \delta^{nm} ) \big)= i \bar{Q}^a, 
\end{align}
as required for (\ref{odd-even}c). 

Further, let us consider $[I^{12}, Q^{21a}]= i [  \psi_d^r \bar{\psi}^{dr}, Q^a]$. Then by (\ref{commutatorforI12}) we have

\begin{align*}
[ \psi_d^r \bar{\psi}^{dr}, \psi^{bl} \psi_b^m \bar{\psi}^{an}] = \frac{1}{2} \psi^{bl} \psi_b^m \bar{\psi}^{an}. 
\end{align*}
Therefore, with the help of \eqref{QJmany1rep1form} we get
\begin{align}\label{I12Q21amanyparticle2}
[I^{12}, Q^{21a}] &= \frac{i}{2} \big( p_l \psi^{al} + i F_{lmn} ( \psi^{bl} \psi_b^m \bar{\psi}^{an} - \frac{1}{2} \psi^{al} \delta^{mn}) \big) = \frac{i}{2} Q^a, 
\end{align}
which matches with (\ref{odd-even}c). 

Let us now consider the generator $Q^{11a}$.  Firstly, it is immediate that $[I^{11}, Q^{11a}]=0$, as required. In addition, we have by (\ref{IQmany1rep1form}) that 
\begin{align*}
[I^{22}, Q^{11a}]= i [ \bar{\psi}^{dr} \bar{\psi}_d^r , S^a]=-2 i x_j [ \bar{\psi}^{dr}\bar{\psi}_d^r, \psi^{aj}]= -i \bar{S}^a, 
\end{align*}
and
\begin{align*}
[I^{12}, S^a]= -i [ \psi_d^r \bar{\psi}^{dr}, S^a]= i x_j \psi^{aj}=-\frac{i}{2} S^a, 
\end{align*}
as required for (\ref{odd-even}c). The remaining relations in (\ref{odd-even}c)  can be checked similarly. 
\end{proof}

Let $A_i$, $B_i$ ($i=1, 2, 3$) be operators. In the following theorem we will use the identity
\begin{align}\label{comm4}
\{A_1 A_2 A_3, B_1 B_2 B_3\} &= A_1 A_2 \{ A_3, B_1\} B_2 B_3 + A_1 A_2 B_1 B_2 \{ B_3, A_3\} - A_1 A_2 B_1 \{ B_2, A_3 \} B_3  - \nonumber\\
&- A_1 \{A_2, B_1\} B_2 B_3 A_3 - A_1 B_1 B_2 \{ B_3 , A_2\} A_3 + A_1 B_1 \{ B_2, A_2\} B_2 A_3 \nonumber\\
&+ \{A_1, B_1\} B_2 B_3 A_2 A_3 + B_1 B_2 \{ B_3, A_1\} A_2 A_3 - B_1 \{ B_2, A_1\} B_3 A_2 A_3.
\end{align}

We will use the following relations. We have by (\ref{com1}) and (\ref{com3})
\begin{align}\label{fortermBB'1}
\{ \psi^{ar}, \bar{\psi}_d^l \bar{\psi}^{dm} \psi_c^n \}&= \bar{\psi}_d^l [\bar{\psi}^{dm} \psi_c^n, \psi^{ar}] + \bar{\psi}^{dm} \psi_c^n \{ \bar{\psi}_d^l, \psi^{ar}\}= - \frac{1}{2} \bar{\psi}^{al} \psi_c^n \delta^{rm} - \frac{1}{2} \bar{\psi}^{am} \psi_c^n \delta^{rl},
\end{align}
and similarly,
\begin{align}\label{fortermBB'2}
\{\bar{\psi}_c^l, \psi^{br} \psi_b^j \bar{\psi}^{ak}\} = -\frac{1}{2} \psi_c^r \bar{\psi}^{ak} \delta^{jl} - \frac{1}{2} \psi_c^j \bar{\psi}^{ak} \delta^{rl}.
\end{align}

\begin{theorem}\label{hamiltonian1manyparticle}
For all $a, b =1, 2$ we have $\{Q^a, \bar{Q}_b  \}=-2 H \delta^a_b$, where the Hamiltonian $H$ is given by 
\begin{align}\label{hamiltonianformulamanyparticle1rep}
H= \frac{p^2}{4} - \frac{\partial_iF_{jlk}}{2} ( \psi^{bi} \psi_b^j  \bar{\psi}_d^l \bar{\psi}^{dk} - \psi_b^i \bar{\psi}^{bj} \delta^{lk} + \frac{1}{4} \delta^{ij} \delta^{lk} ) + \frac{1}{16} F_{ijk} F_{lmn} \delta^{nm} \delta^{jl} \delta^{ik}
\end{align}
with $p^2=\sum_{i=1}^N p_i^2$.
\end{theorem}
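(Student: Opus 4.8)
The plan is to work with the normal-ordered supercharges supplied by \eqref{weylmanyparticle1rep} and \eqref{weylmanyparticle1rep2ndsuper}, so that
\[
Q^a = p_r\psi^{ar} + iF_{rjk}\bigl(\psi^{cr}\psi_c^j\bar\psi^{ak} - \tfrac12\psi^{ar}\delta^{jk}\bigr),\qquad
\bar Q_b = p_l\bar\psi_b^l + iF_{lmn}\bigl(\bar\psi_d^l\bar\psi^{dm}\psi_b^n - \tfrac12\bar\psi_b^l\delta^{nm}\bigr).
\]
I would then expand $\{Q^a,\bar Q_b\}$ into four blocks sorted by the number of momenta they carry: the momentum--momentum block $\{p_r\psi^{ar},p_l\bar\psi_b^l\}$, the two mixed momentum--cubic blocks, and the cubic--cubic block. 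Since the momenta commute with the fermions, the first block reduces immediately through \eqref{extebracket1}, using $\{\psi^{ar},\bar\psi_b^l\}=-\tfrac12\delta^{rl}\delta^a_b$, to $-\tfrac12 p^2\delta^a_b$, which is exactly the kinetic contribution $-2\cdot\tfrac{p^2}{4}\delta^a_b$ expected from \eqref{hamiltonianformulamanyparticle1rep}.

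For the mixed blocks the essential point is that $p_r$ does not commute with the structure functions: moving $p_r$ past $F_{lmn}(x)$ produces a derivative term, since $[p_r,F_{lmn}]=-i\partial_r F_{lmn}$. I would therefore organise each mixed block into terms still linear in $p$ and carrying a bare factor $F_{lmn}$, and terms carrying $\partial_r F_{lmn}$ and no momentum; the fermionic reorderings required are precisely those recorded in \eqref{fortermBB'1} and \eqref{fortermBB'2}. The first checkpoint is to show that the linear-in-$p$ pieces of the two mixed blocks cancel each other, which follows from the symmetry of $F_{rjk}$ in its indices together with \eqref{fortermBB'1}, \eqref{fortermBB'2}. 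The surviving $\partial F$ pieces, once normal-ordered (the reordering itself generating fermion bilinears and scalars), should reproduce the three $\partial_i F_{jlk}$ terms in \eqref{hamiltonianformulamanyparticle1rep}: the quartic $\psi^{bi}\psi_b^j\bar\psi_d^l\bar\psi^{dk}$, the bilinear $\psi_b^i\bar\psi^{bj}\delta^{lk}$, and the scalar $\tfrac14\delta^{ij}\delta^{lk}$.

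The cubic--cubic block, with $A'^a=iF_{rjk}(\psi^{cr}\psi_c^j\bar\psi^{ak}-\tfrac12\psi^{ar}\delta^{jk})$ and $B'_b=iF_{lmn}(\bar\psi_d^l\bar\psi^{dm}\psi_b^n-\tfrac12\bar\psi_b^l\delta^{nm})$, carries the overall factor $F_{rjk}F_{lmn}$ and is where the real work lies. I would expand $\{A'^a,B'_b\}$ into one triple--triple anticommutator plus three lower-order ones; the triple--triple part is handled by the nine-term identity \eqref{comm4}, each summand being evaluated with \eqref{fortermBB'1} and \eqref{fortermBB'2}. The fully contracted contributions deliver the scalar $FF$ term $\tfrac{1}{16}F_{ijk}F_{lmn}\delta^{nm}\delta^{jl}\delta^{ik}$, whereas the remaining $FF$ fermion-quartic terms must cancel outright, since \eqref{hamiltonianformulamanyparticle1rep} contains no quartic (or bilinear) proportional to $FF$. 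This cancellation is where the WDVV equations \eqref{WDVV} enter: after relabelling the summed indices and exploiting the total symmetry of $F_{rjk}$, the surviving quartic combinations assemble into differences of the form $F_{rjk}F_{kmn}-F_{rmk}F_{kjn}$ (with the repeated index contracted through the fermionic anticommutators), which vanish by \eqref{WDVV}. The same mechanism enforces the proportionality to $\delta^a_b$: the potentially off-diagonal ($a\neq b$) pieces pair up into WDVV combinations and drop out.

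To finish I would collect the kinetic, $\partial F$ and $FF$ contributions, verify that the $a\neq b$ parts cancel and that the diagonal entries are independent of $a$, and read off $-2H\delta^a_b$ with $H$ as in \eqref{hamiltonianformulamanyparticle1rep}. I expect the main obstacle to be computational rather than conceptual: the systematic bookkeeping of the nine \eqref{comm4} terms, their evaluation via \eqref{fortermBB'1}--\eqref{fortermBB'2}, and their normal ordering into the precise index structure of \eqref{hamiltonianformulamanyparticle1rep} is delicate, and it is crucial to keep track of the exact point at which \eqref{WDVV} is invoked to kill the unwanted quartic terms.
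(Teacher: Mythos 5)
Your proposal follows essentially the same route as the paper's proof: the same normal-ordered rewriting of the supercharges via \eqref{weylmanyparticle1rep}--\eqref{weylmanyparticle1rep2ndsuper}, the same block decomposition of $\{Q^a,\bar Q_b\}$ by powers of momentum, the evaluation of the mixed blocks through $[p_r,F_{lmn}]$ together with \eqref{fortermBB'1}--\eqref{fortermBB'2}, and the treatment of the cubic--cubic block via the nine-term identity \eqref{comm4} with the WDVV equations \eqref{WDVV} killing the unwanted $FF$ quartics and the off-diagonal $a\neq b$ pieces. The outline is correct and identifies all the steps the paper actually carries out.
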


\begin{proof}
Let us consider $\{Q^{a}, \bar{Q}_c\}$, where
 \begin{align*}
Q^a = \overbrace{p_r \psi^{ar}}^{A} + \overbrace{ i F_{rjk} \langle \psi^{br} \psi_b^j  \bar{\psi}^{ak} \rangle}^{B}, \quad \bar{Q}_c = \overbrace{p_l  \bar{\psi}_c^l}^{A'} + \overbrace{i F_{lmn} \langle  \bar{\psi}_d^l  \bar{\psi}^{dm} \psi_c^n \rangle}^{B'}.
\end{align*}
We have
\begin{align*}
\{A, A' \}= - \frac{1}{2} \delta^{a}_c p^2.
\end{align*}
Further on, by (\ref{weylmanyparticle1rep2ndsuper}) we have
\begin{align*}
\{A, B'\} &= i \{ \psi^{ar} p_r , F_{lmn} \langle \bar{\psi}_d^l \bar{\psi}^{dm} \psi_c^n \rangle\}\\
&=  i \{ \psi^{ar} p_r , F_{lmn}  \bar{\psi}_d^l \bar{\psi}^{dm} \psi_c^n\} -\frac{i}{2} \delta^{nm} \{ p_r \psi^{ar}, F_{lmn} \bar{\psi}_c^l \}\\
&= i \psi^{ar} \bar{\psi}_d^l \bar{\psi}^{dm} \psi_c^n [p_r, F_{lmn}] + i\{ \psi^{ar}, \bar{\psi}_d^l \bar{\psi}^{dm} \psi_c^n \} F_{lmn} p_r -\\
&- \frac{i}{2} \delta^{nm} \psi^{ar} \bar{\psi}_c^l [p_r, F_{lmn}] + \frac{i}{4} \delta^{nm} \delta^{a}_c  F_{rmn} p_r.
\end{align*}
By (\ref{fortermBB'1}) we have $$F_{lmn} \{ \psi^{ar}, \bar{\psi}_d^l \bar{\psi}^{dm} \psi_c^{n} \}= - F_{lmn} \bar{\psi}^{al} \psi_c^n \delta^{rm}.$$
Therefore, 
\begin{align}\label{AB'manyparticle1}
\{A, B'\} =  i \psi^{ar} \bar{\psi}_d^l \bar{\psi}^{dm} \psi_c^n [p_r, F_{lmn}] - i  \bar{\psi}^{al} \psi_c^n F_{lnr} p_r- \frac{i}{2} \delta^{nm} \psi^{ar} \bar{\psi}_c^l [p_r, F_{lmn}] + \frac{i}{4} \delta^{nm} \delta^{a}_c  F_{rmn} p_r. 
\end{align}
Similarly, using (\ref{fortermBB'2}) we obtain
\begin{align}\label{BA'manyparticle1}
\{B, A'\} = i  \bar{\psi}_c^l \psi^{br} \psi_b^j  \bar{\psi}^{ak} [ p_l, F_{rjk}] - i \psi_c^r  \bar{\psi}^{ak} F_{rkj} p_j - \frac{i}{2} \delta^{jk}  \bar{\psi}_c^l \psi^{ar} [p_l, F_{rjk}] + \frac{i}{4} \delta^{jk} \delta^{a}_c F_{rjk} p_r. 
\end{align}
Note that $\bar{\psi}^{al} \psi_c^n F_{lnr} p_r + \psi_c^r \bar{\psi}^{ak} F_{rkj} p_j =\frac{1}{2} \delta^{ln} \delta^{ac} F_{lnr} p_r$. Then, after canceling out terms and simplifying we have
\begin{align}\label{(0)}
\{A, B'\}+ \{B, A'\} =  \partial_r F_{ljk} ( \psi^{ar} \bar{\psi}_d^l \bar{\psi}^{dk} \psi_c^j +  \bar{\psi}_c^l \psi^{br} \psi_b^j  \bar{\psi}^{ak}) + \frac{1}{4} \partial_r F_{lmn} \delta^{nm} \delta^{rl} \delta^{a}_c.
\end{align}
In particular, we note that using the symmetry of $F_{ljk}$ we have that
\begin{align}\label{(1)}
\partial_{r} F_{ljk} \psi^{ar} \bar{\psi}_d^l \bar{\psi}^{dk} \psi_c^j= \partial_r F_{ljk} ( \psi^{ar} \psi_c^j \bar{\psi}_d^l \bar{\psi}^{dk} + \psi^{ar} \bar{\psi}_c^k \delta^{lj}),
\end{align}
and
\begin{align}\label{(2)}
\partial_{r} F_{ljk}  \bar{\psi}_c^l \psi^{br} \psi_b^j  \bar{\psi}^{ak}= \partial_{r} F_{ljk} ( \psi^{br} \psi_b^j  \bar{\psi}_c^l  \bar{\psi}^{ak} - \psi_c^r  \bar{\psi}^{ak} \delta^{lj} ).
\end{align}
Note that if $a \neq c$, we have 
\begin{align}\label{(3)}
\psi^{ar} \bar{\psi}_c^k  = \psi_c^r \bar{\psi}^{ak} , \quad \text{and} \quad \psi^{ar} \psi_c^j = - \psi^{aj} \psi_c^r, \quad \bar{\psi}_c^l \bar{\psi}^{ak} = - \bar{\psi}_a^k \bar{\psi}^{cl}.
\end{align}
Using the symmetry $\partial_r F_{ljk} = \partial_l F_{rjk}$ and $F_{ljk} = F_{kjl}$ it follows from (\ref{(1)}), (\ref{(2)}) and (\ref{(3)}) that the sum of expressions in (\ref{(1)}) and (\ref{(2)}) vanishes if $ a \neq c$. Therefore we get from (\ref{(1)}), (\ref{(2)}), (\ref{(3)}) that
\begin{align}\label{(4)}
\partial_r F_{ljk} ( \psi^{ar} \bar{\psi}_d^l \bar{\psi}^{dk} \psi_c^j +  \bar{\psi}_c^l \psi^{br} \psi_b^j  \bar{\psi}^{ak}) & =  \partial_r F_{ljk} ( \psi^{ar} \psi_a^j \bar{\psi}_d^l \bar{\psi}^{dk}+ \psi^{br} \psi_b^j \bar{\psi}_a^l \bar{\psi}^{ak} - \psi_d^r \bar{\psi}^{dk} \delta^{lj}) \delta^a_c.
\end{align}
Note that 
\begin{align}\label{LAMBDA}
\psi^{ar} \psi_a^j = \psi^{\widehat{a}j} \psi_{\widehat{a}}^r, \quad \text{and} \quad \bar{\psi}^{ar} \bar{\psi}_a^j = \bar{\psi}^{\widehat{a}j} \bar{\psi}_{\widehat{a}}^r,
\end{align}
 here $\widehat{a}\neq a$. Therefore the right-hand side of (\ref{(4)}) equals 
\begin{align}\label{relationfor2ndrep}
\partial_r F_{ljk} ( \psi^{br} \psi_b^j \bar{\psi}_d^l \bar{\psi}^{dk} - \psi_d^r \bar{\psi}^{dk} \delta^{lj} ) \delta^a_c. 
\end{align}
Therefore in total expression (\ref{(0)}) becomes
\begin{align*}
\{A, B' \} + \{B, A'\}= \partial_r F_{ljk} ( \psi^{br} \psi_b^j \bar{\psi}_d^l \bar{\psi}^{dk} - \psi_d^r \bar{\psi}^{dk} \delta^{lj} + \frac{1}{4}\delta^{rl} \delta^{jk}  )\delta^a_c. 
\end{align*}

Finally, let us consider the term $\{B, B'\}$. We first show that 
\begin{align}\label{BB'noweyl}
C:=F_{rjk} F_{lmn} \{ \psi^{br} \psi_b^j \bar{\psi}^{ak}, \bar{\psi}_d^l \bar{\psi}^{dm} \psi_c^n \}=0.
\end{align}
By using (\ref{comm4}) we obtain
\begin{align*}
C&= F_{rjk} F_{lmn} \big( \psi^{br} \psi_b^j \bar{\psi}_d^l \bar{\psi}^{dm} \{ \psi_c^n, \bar{\psi}^{ak}\} - \psi^{br} \bar{\psi}^{dm} \psi_c^n \bar{\psi}^{ak} \{ \psi_b^j, \bar{\psi}_d^l \} \\
&+ \psi^{br} \bar{\psi}_d^l \psi_c^n \bar{\psi}^{ak} \{ \bar{\psi}^{dm}, \psi_b^j \} + \bar{\psi}^{dm} \psi_c^n \psi_b^j \bar{\psi}^{ak} \{ \psi^{br}, \bar{\psi}_d^l \} - \bar{\psi}_d^l \psi_c^n \psi_b^j \bar{\psi}^{ak} \{ \psi^{br}, \bar{\psi}^{dm}\} \big) \\
&= F_{rjk} F_{lmn}  \big(  \frac{1}{2} \delta^a_c \delta^{nk} \psi^{br} \psi_b^j \bar{\psi}_d^l \bar{\psi}^{dm}  + \frac{1}{2} \delta^{lj} \psi^{br}  \bar{\psi}_b^m \psi_c^n  \bar{\psi}^{ak} 
 + \frac{1}{2} \delta^{mj} \psi^{br}  \bar{\psi}_b^l \psi_c^n  \bar{\psi}^{ak} \\ &+ \frac{1}{2} \delta^{rl} \bar{\psi}^{dm} \psi_b^j \psi_c^n  \bar{\psi}^{ak} + \frac{1}{2} \delta^{rm}  \bar{\psi}^{bl} \psi_b^j \psi_c^n  \bar{\psi}^{ak} \big). 
\end{align*}
Then using the symmetry of $F_{lmn}$ under the swap of $l$ and $m$ we obtain
\begin{align*}
C= F_{rjk} F_{lmn} \big( \frac{1}{2} \delta^a_c \delta^{nk} \psi^{br} \psi_b^j \bar{\psi}_d^l \bar{\psi}^{dm} + \delta^{jl} \psi^{br} \bar{\psi}_b^m \psi_c^n \bar{\psi}^{ak} + \delta^{rl} \bar{\psi}^{bm} \psi_b^j \psi_c^n \bar{\psi}^{ak} \big).
\end{align*}
Note that by (\ref{extebracket1}),  (\ref{extebracket2}) we have
\begin{align}\label{forBB1'}
\psi^{br}  \bar{\psi}_b^m \psi_c^n  \bar{\psi}^{ak} = - \psi^{br} \psi_c^n  \bar{\psi}_b^m  \bar{\psi}^{ak} - \frac{1}{2} \psi_c^r  \bar{\psi}^{ak} \delta^{nm}, 
\end{align}
and 
\begin{align}\label{forBB'2}
 \bar{\psi}^{bm} \psi_b^j \psi_c^n  \bar{\psi}^{ak} &=- \psi_b^j \bar{\psi}^{bm} \psi_c^n \bar{\psi}^{ak} + \psi_c^n \bar{\psi}^{ak} \delta^{mj} \nonumber\\
 &= - \psi^{bj} \psi_c^n  \bar{\psi}_b^m  \bar{\psi}^{ak} - \frac{1}{2} \psi_c^j  \bar{\psi}^{ak} \delta^{nm} + \psi_c^n  \bar{\psi}^{ak} \delta^{mj}. 
\end{align} 
Further on by (\ref{WDVV}) we have $F_{rjk} F_{rmn} = F_{rnk} F_{rmj}$ and therefore some terms in the right-hand side of (\ref{forBB1'}), (\ref{forBB'2}) enter the relation 
\begin{align}\label{forBB'6}
F_{rjk} F_{rmn} \psi_c^n \bar{\psi}^{ak} \delta^{mj} = \frac{1}{2} F_{rjk} F_{jmn} \psi_c^r \bar{\psi}^{ak} \delta^{mn} + \frac{1}{2} F_{rjk} F_{rmn} \psi_c^j \bar{\psi}^{ak} \delta^{mn}.
\end{align}
Then by using (\ref{forBB1'})-(\ref{forBB'6}) and the symmetry of $F_{rjk}$ under the swap of $r$ and $j$ we obtain
\begin{align*}
C&= F_{rjk} F_{lmn} \big( \frac{1}{2} \delta^a_c \delta^{nk} \psi^{br} \psi_b^j \bar{\psi}_d^l \bar{\psi}^{dm} - \delta^{jl} \psi^{br} \psi_c^n \bar{\psi}_b^m \bar{\psi}^{ak} - \delta^{rl} \psi^{bj} \psi_c^n \bar{\psi}_b^m \bar{\psi}^{ak} \big)\\
&= F_{rjk} F_{lmn} \big( \frac{1}{2} \delta^a_c \delta^{nk} \psi^{br} \psi_b^j \bar{\psi}_d^l \bar{\psi}^{dm} - 2 \delta^{jl} \psi^{br} \psi_c^n \bar{\psi}_b^m \bar{\psi}^{ak}  \big).
\end{align*}
Note that for $c \neq a $ we have $C=0$, since  $F_{rjk} F_{lmn} \delta^{jl} \psi^{br} \psi_c^n \bar{\psi}_b^m \bar{\psi}^{ak}=0$ by using (\ref{WDVV}). Further on,  if $c=a$ then by using (\ref{WDVV}) we have
\begin{align}\label{finalc}
C= F_{rjk} F_{klm} \big( \frac{1}{2} \psi^{br} \psi_b^j \bar{\psi}_d^l \bar{\psi}^{dm} - 2  \psi^{br} \psi_a^j \bar{\psi}_b^l \bar{\psi}^{am}  \big).
\end{align}
Note that for $b \neq a$, $F_{rjk} \psi^{br} \psi_a^j=0$. Hence
\begin{align}\label{forBB'5}
F_{rjk} F_{klm} \psi^{br} \psi_a^j \bar{\psi}_b^l \bar{\psi}^{am}  = F_{rjk} F_{klm} \psi^{ar} \psi_a^j \bar{\psi}_a^l \bar{\psi}^{am} ,
\end{align}
which is equal to $\frac{1}{4} F_{rjk} F_{klm} \psi^{br} \psi_b^j \bar{\psi}_d^l \bar{\psi}^{dm}$ because of relations (\ref{LAMBDA}).  This proves that $C=0$.  Then the term $\{B, B'\}$ takes the following form:
\begin{align*}
\{B, B'\} &=  F_{rjk} F_{lmn} \big( \frac{1}{2} \delta^{nm} \{ \psi^{br} \psi_b^j \bar{\psi}^{ak}, \bar{\psi}_c^l \} + \frac{1}{2} \delta^{jk} \{ \bar{\psi}_d^l \bar{\psi}^{dm} \psi_c^n , \psi^{ar} \} - \frac{1}{4} \delta^{jk} \delta^{nm} \{ \psi^{ar}, \bar{\psi}_c^l \} \big).
\end{align*}
By using formulae (\ref{fortermBB'1}), (\ref{fortermBB'2}) and (\ref{WDVV}) we obtain
\begin{align*}
\{B, B'\}&= -\frac{1}{2} F_{rjk} F_{lmn} \big(  \psi_c^r \bar{\psi}^{ak}\delta^{nm}  \delta^{jl}  +   \bar{\psi}^{al} \psi_c^n \delta^{mr}\delta^{jk} - \frac{1}{4}\delta^{jk} \delta^{nm} \delta^{rl} \delta^a_c \big)\\
&= - \frac{1}{2} F_{rjk} F_{lmn} \delta^{nm} \delta^{jl} \{ \psi_c^r, \bar{\psi}^{ak} \} + \frac{1}{8} F_{rjk} F_{lmn} \delta^{jk} \delta^{nm} \delta^{rl} \delta^a_c \\
&= - \frac{1}{8} F_{rjk} F_{lmn} \delta^{nm} \delta^{jl} \delta^{rk} \delta^a_c. 
\end{align*}
Therefore, the statement follows. 
\end{proof}

\begin{lemma}\label{conformalinvarianceF} Let $T^{22}=H$ be given by Theorem \ref{hamiltonian1manyparticle}. Let $T^{11}=K$ and $T^{12}=-D$ be given by \eqref{K}, \eqref{D}. Then  relations (\ref{T}) hold.
\end{lemma}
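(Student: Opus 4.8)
The plan is to verify the three independent $sl(2)$ relations encoded in \eqref{T}. Substituting $T^{11}=K$, $T^{22}=H$, $T^{12}=T^{21}=-D$ and using $\epsilon^{11}=\epsilon^{22}=0$, $\epsilon^{12}=-\epsilon^{21}=-1$, the relations \eqref{T} reduce precisely to
\[
[D,K]=iK, \qquad [K,H]=-2iD, \qquad [D,H]=-iH.
\]
The organising observation is that, by \eqref{K} and \eqref{D}, both $K$ and $D$ are built from $x_j,p_j$ alone; since the bosonic variables commute with every fermionic variable $\psi^{aj},\bar{\psi}_a^j$, the generators $K$ and $D$ commute with all fermions. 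Consequently each bracket with $H$ collapses to the action of $K$ or $D$ on the $x$-dependent coefficient functions appearing in \eqref{hamiltonianformulamanyparticle1rep}.

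First I would dispose of the two relations not involving the potential by direct canonical computation. From \eqref{extebracket1} one gets $[p_j,x^2]=-2ix_j$, hence $[x_jp_j,x^2]=-2ix^2$, and \eqref{D}, \eqref{K} give $[D,K]=-\tfrac12[x_jp_j,x^2]=iK$. For $[K,H]$ I would note that the whole potential part of $H$ is a function of $x$ (multiplying fermions) and therefore commutes with $K=x^2$, so that $[K,H]=\tfrac14[x^2,p^2]$. A short computation with \eqref{extebracket1} yields $[x^2,p^2]=2i\sum_j\{x_j,p_j\}$, which equals $-8iD$ by \eqref{D}; thus $[K,H]=-2iD$.

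The relation $[D,H]=-iH$ is where the homogeneity hypotheses enter, and it is the step I would treat most carefully. Splitting $H$ as in \eqref{hamiltonianformulamanyparticle1rep}, the kinetic term satisfies $[D,p^2/4]=-ip^2/4$, since $[x_jp_j,p^2]=2ip^2$. For the potential part, because the fermions commute with $D$ it suffices to compute $[D,g]$ for each $x$-dependent coefficient $g$. Using $[x_jp_j,g]=-i\sum_j x_j\partial_j g$ together with Euler's identity $\sum_j x_j\partial_j g=dg$ for $g$ homogeneous of degree $d$, one finds $[D,g]=\tfrac{id}{2}g$. By the standing assumption every third derivative $F_{rjk}$ is homogeneous of degree $-1$, so each coefficient in \eqref{hamiltonianformulamanyparticle1rep}---the first derivative $\partial_i F_{jlk}$ and the quadratic combinations $F_{ijk}F_{lmn}$---is homogeneous of degree $-2$. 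Taking $d=-2$ gives $[D,g]=-ig$ uniformly, hence $[D,V]=-iV$ for the full potential, and combining with the kinetic part produces $[D,H]=-iH$.

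The only genuine obstacle is the bookkeeping in this last relation: one must check that every summand of \eqref{hamiltonianformulamanyparticle1rep}, including the purely numerical term $\tfrac{1}{16}F_{ijk}F_{lmn}\delta^{nm}\delta^{jl}\delta^{ik}$ and the cross term carrying $\delta^{ij}\delta^{lk}$, indeed has degree exactly $-2$ in $x$. This is immediate from the degree $-1$ homogeneity of the $F_{rjk}$, and once it is verified the three computed brackets reassemble into \eqref{T}.
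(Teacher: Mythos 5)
Your proof is correct and follows essentially the same route as the paper's: a direct canonical computation of $[K,H]$ and $[K,D]$, and the degree $-2$ homogeneity of $H$ (traced through each coefficient of \eqref{hamiltonianformulamanyparticle1rep}) for the dilatation relation. The paper's version is merely terser, asserting the homogeneity of $H$ without itemising the summands as you do.
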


\begin{proof} Firstly, we have that $[K, H]=\frac{1}{4} [x^2, p^2] = \frac{i}{2} \{x_r, p_r\}=-2iD$, as required. Moreover, since $H$ is homogeneous in $x$ of degree $-2$ it follows that $[H, D]=iH$ as required. Further on, $[K, D]=-\frac{1}{2} [x_k^2, x_j p_j] = i K$, 
which is the corresponding relation (\ref{T}). 
\end{proof}

\begin{lemma}\label{oddmanyparticle1rep}Let $Q^{abc}$, $I^{ab}$, $T^{ab}$, $J^{ab}$ be as above. Then  relations (\ref{odd}) hold. 
\end{lemma}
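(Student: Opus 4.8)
The plan is to verify \eqref{odd} one anticommutator at a time. I would first use the renaming dictionary $Q^a=-Q^{21a}$, $\bar Q^a=-Q^{22a}$, $S^a=Q^{11a}$, $\bar S^a=Q^{12a}$ to read off, for each pair, which generators on the right-hand side of \eqref{odd} can survive: the first index of each $Q^{Ace}$ feeds the $T^{AB}$ slot, the second feeds $I^{cd}$, and the free third index feeds $J^{ef}$, so a given term is present only when the corresponding pair of $\epsilon$'s is nonzero. This organises the computation into ten type-pairs: $\{Q,\bar Q\}$ yields $H=T^{22}$; $\{S,\bar S\}$ yields $K=T^{11}$; $\{Q,\bar S\}$ and $\{S,\bar Q\}$ yield the mixture of $D=-T^{12}$, $J^{ef}$ and $I^{12}$; $\{Q,S\}$ and $\{\bar Q,\bar S\}$ yield $I^{11}$ and $I^{22}$; and the four like-type brackets $\{Q,Q\}$, $\{\bar Q,\bar Q\}$, $\{S,S\}$, $\{\bar S,\bar S\}$ must vanish.

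The easy cases come first. The bracket $\{Q^a,\bar Q_b\}=-2H\delta^a_b$ is precisely Theorem \ref{hamiltonian1manyparticle} and settles the $T^{22}$ entry. Since $S^a=-2x_j\psi^{aj}$ and $\bar S_a=-2x_j\bar\psi_a^j$ are linear in the fermions, the brackets involving only $S,\bar S$ are immediate from \eqref{extebracket1}: $\{\psi^{aj},\psi^{bk}\}=\{\bar\psi_a^j,\bar\psi_b^k\}=0$ gives $\{S^a,S^b\}=\{\bar S^a,\bar S^b\}=0$, while $\{\psi^{aj},\bar\psi_b^k\}=-\tfrac12\delta^{jk}\delta^a_b$ gives $\{S^a,\bar S_b\}\propto x^2\delta^a_b=K\delta^a_b$ with exactly the coefficient demanded by \eqref{odd}. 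For the like-type brackets of $Q$ and $\bar Q$ I would split each supercharge into its $p_r\psi^{ar}$ and $iF_{rjk}\langle\cdots\rangle$ parts, as in the proof of Theorem \ref{hamiltonian1manyparticle}. The momentum--momentum piece vanishes by $\{\psi,\psi\}=0$; the surviving $p\psi$ against $F\psi^3$ cross terms and the $F\psi^3$ against $F\psi^3$ sextic term must then cancel. I expect this to go exactly as the vanishing of the term $C$ in Theorem \ref{hamiltonian1manyparticle}: normal-ordering with \eqref{com1}--\eqref{com3}, using the total symmetry of $F_{rjk}$ to drop antisymmetric remainders, and invoking the WDVV equations \eqref{WDVV} to kill the remaining sextic fermion expression.

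The substance of the lemma is in the four mixed brackets $\{Q^a,S_b\}$, $\{\bar Q^a,\bar S_b\}$, $\{Q^a,\bar S_b\}$, $\{S^a,\bar Q_b\}$. In each, I would anticommute a fermion-cubic term against a term linear in $x$, so that reducing $p\psi$ against $x\psi$ produces the commutator $[p_r,x_l]=-i\delta_{rl}$ (this is where $D=-\tfrac14\{x_j,p_j\}$ enters the $\{Q,\bar S\}$ and $\{S,\bar Q\}$ cases), while the $F$-dependent contributions contract an $x$ against a particle index of $F_{rjk}$. The key point, and what I expect to be the main obstacle, is that this contraction is governed by \eqref{alphalambda}, $x_rF_{rjk}=-(2\alpha+1)\delta_{jk}$: it is precisely this identity that converts the $F$-dependent fermion bilinears into multiples of $I^{cd}$ and $J^{ef}$ and, combined with the $-i\delta_{rl}$ from $[p,x]$, produces the coefficients $-(\alpha+1)$ in front of $I$ and $\alpha$ in front of $J$.

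I would therefore carry out $\{Q^a,S_b\}$ first to fix $I^{11}$, checking that the factor $(2\alpha+1)$ from \eqref{alphalambda} combines with the $[p,x]$ term to give exactly $-(\alpha+1)$; the conjugate computation yields $I^{22}$ from $\{\bar Q^a,\bar S_b\}$. The brackets $\{Q^a,\bar S_b\}$ and $\{S^a,\bar Q_b\}$ are the richest, since they must reproduce the $D$, $J^{ef}$ and $I^{12}$ pieces simultaneously, and matching the $J$-coefficient $\alpha$ against the $I$-coefficient $(\alpha+1)$ is the sharpest consistency test of \eqref{alphalambda}. Throughout I would normal-order using \eqref{com1}--\eqref{com3} together with \eqref{extebracket1}--\eqref{extebracket2} and discard antisymmetric remainders by the symmetry of $F_{rjk}$, exactly in the spirit of the manipulations already used in the preceding lemmas.
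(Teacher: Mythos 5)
Your proposal is correct and follows essentially the same route as the paper's proof: the easy brackets ($\{Q,\bar Q\}$ via Theorem \ref{hamiltonian1manyparticle}, the $S,\bar S$ brackets from linearity in the fermions, and the vanishing of $\{Q^a,Q^b\}$ via cancellation of the cross terms and WDVV for the sextic term) are handled exactly as in the paper, and the mixed brackets are likewise reduced by contracting $x_r$ against $F_{rjk}$ through \eqref{alphalambda} to produce the $I$ and $J$ terms with coefficients $(\alpha+1)$ and $\alpha$, with the $[p,x]$ commutator supplying the $D$ piece in $\{Q^{21a},Q^{12b}\}$.
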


\begin{proof}
Firstly let us consider 
\begin{align*}
\{ Q^{21a}, Q^{11f}\}= - \{ Q^a, S^f\}. 
\end{align*}
Note that 
\begin{align*}
\{ p_r \psi^{ar},  x_l \psi^{fl}\}=-i \psi^{ar} \psi^{fr}=-i \epsilon^{a \widehat{a}} \psi_{\widehat{a}}^r \psi^{fr}, 
\end{align*}
where $\widehat{a}$ is complimentary to $a$. Note that we can assume now that $\widehat{a}=f$. Therefore
\begin{align*}
\{ p_r \psi^{ar},  x_l \psi^{fl}\}=-i\epsilon^{af} \psi_f^r \psi^{fr}=-\frac{i}{2} \epsilon^{af} \psi_d^r \psi^{dr}.
\end{align*}
Further,
\begin{align*}
F_{rjk} \{ \psi^{br} \psi_b^j \bar{\psi}^{ak},  x_l \psi^{fl}\} = \frac{1}{2}\epsilon^{af} x_k F_{krj} \psi_d^r \psi^{dj}.
\end{align*}
Therefore by formula \eqref{alphalambda}
\begin{align}
\{Q^{21a}, Q^{11f}\} =- i \epsilon^{af} \psi_d^r \psi^{dr}+ i \epsilon^{af} x_k F_{krj} \psi_d^r \psi^{dj}= 2(\alpha+1) \epsilon^{af} I^{11} \label{finalcon},
\end{align}
as required for the corresponding relation \eqref{odd}. 

Further on, consider $\{Q^{21a}, Q^{12b}\}=-\epsilon^{bd} \{Q^a, \bar{S}_d\}$. Now, by using formula (\ref{fortermBB'2}) we have
\begin{align*}
\{Q^a, \bar{S}_d\} &= -2\{ p_r \psi^{ar}, x_l \bar{\psi}_d^l \} -2i x_l F_{rjk} ( \{ \psi^{br} \psi_b^j \bar{\psi}^{ak}, \bar{\psi}_d^l \} - \frac{1}{2} \delta^{jk} \{ \psi^{ar}, \bar{\psi}_d^l \})\\
&= 2i \psi^{ar}  \bar{\psi}_d^r + x_r p_r \delta^a_d +2i x_j F_{jrk} \psi_d^r  \bar{\psi}^{ak} - \frac{i}{2} \delta^{jk} \delta^a_d x_r F_{r jk} \\
&=  2i \psi^{ar}  \bar{\psi}_d^r + x_r p_r \delta^a_d -2i (2\alpha+1)\psi_d^r  \bar{\psi}^{ar} + \frac{i \delta^a_d}{2}N(2\alpha+1).
\end{align*}
  Therefore 
\begin{align}\label{Jab1}
 \{Q^{21a}, Q^{12b}\}= -2i \psi^{ar}  \bar{\psi}^{br} + x_r p_r \epsilon^{ab} +2i (2\alpha+1)\psi^{br} \bar{\psi}^{ar} + \frac{i \epsilon^{ab} }{2}N(2\alpha+1).
\end{align}
Let us now note that 
\begin{align*}
I^{12} = -\frac{i}{2} [ \psi_a^j, \bar{\psi}^{aj}] = - i ( \psi^{2j} \bar{\psi}^{1j} - \psi^{1j} \bar{\psi}^{2j} - \frac{N}{2}). 
\end{align*}
Hence the right-hand side of (\ref{odd}) for $\{Q^{21a}, Q^{12b}\}$ is 
\begin{align}\label{Jab2}
 x_r p_r \epsilon^{ab} - \frac{iN}{2} \epsilon^{ab} + 4i \alpha \psi^{(ar}\bar{\psi}^{br)} - 2i (1+\alpha)\epsilon^{ab} ( \psi^{2j} \bar{\psi}^{1j} - \psi^{1j} \bar{\psi}^{2j} - \frac{N}{2}).
\end{align}
By considering various values of $a, b \in \{1, 2\}$, expression (\ref{Jab2}) takes the form
\begin{align}\label{Jab2new}
 x_r p_r \epsilon^{ab} + \frac{i \epsilon^{ab} }{2} N (2\alpha+1) - 2i  \psi^{ar} \bar{\psi}^{br} + 2i (2\alpha +1) \psi^{br} \bar{\psi}^{ar},
\end{align}
which is equal to (\ref{Jab1}) as required, so the corresponding relation \eqref{odd} follows.  

Further on, let us consider relation $\{Q^{21a}, Q^{21b}\}=\{Q^a, Q^b \}$. By using (\ref{extebracket1}) and (\ref{extebracket2}) we have
\begin{align*}
\{Q^a, Q^c\}&=i \{ p_r \psi^{ar}, F_{lmn} \psi^{dl} \psi_{d}^m \bar{\psi}^{cn}\} + i \{ p_l \psi^{cl},  F_{rjk} \psi^{br} \psi_b^j  \bar{\psi}^{ak} \} - \\
&-F_{lmn}F_{rjk} \{ \langle \psi^{dl} \psi_{d}^m \bar{\psi}^{cn}\rangle, \langle \psi^{br} \psi_b^j  \bar{\psi}^{ak} \rangle\}.
\end{align*}
Note that by (\ref{com2}), (\ref{com3}) we have
\begin{align*}
\{ p_r \psi^{ar}, F_{lmn} \psi^{dl} \psi_{d}^m \bar{\psi}^{cn}\} &= \psi^{ar} \psi^{dl} \psi_{d}^m \bar{\psi}^{cn}[ p_r, F_{lmn}] + \{ \psi^{ar}, \psi^{dl} \psi_d^m \bar{\psi}^{cn} \} F_{lmn} p_r \\
&=-i \psi^{ar} \psi^{dl} \psi_{d}^m \bar{\psi}^{cn} \partial_r F_{lmn} + \{ \psi^{ar}, \bar{\psi}^{cn}\} \psi^{dl} \psi_d^m F_{lmn} p_r\\
&= -i \psi^{ar} \psi^{dl} \psi_{d}^m \bar{\psi}^{cn} \partial_r F_{lmn} - \frac{1}{2} \epsilon^{ca}  \psi^{dl} \psi_d^m F_{lmr} p_r.
\end{align*} 
Note also that $\psi^{ar} \psi^{al} \partial_r F_{lmn}=0$ using the symmetry of $\partial_r F_{lmn}$ under the swap of $r$ and $l$.  Then $\psi^{ar} \psi^{dl} \psi_{d}^m \bar{\psi}^{cn} \partial_r F_{lmn} =0$ and hence 
\begin{align}\label{QQmanypartilce1repform1}
\{ p_r \psi^{ar}, F_{lmn} \psi^{dl} \psi_{d}^m \bar{\psi}^{cn}\} = - \frac{1}{2} \epsilon^{ca} F_{lmr} p_r \psi^{dl} \psi_d^m.
\end{align}
Similarly,
\begin{align}\label{QQmanypartilce1repform2}
\{ p_l \psi^{cl},  F_{rjk} \psi^{br} \psi_b^j  \bar{\psi}^{ak} \} &= -i \psi^{cl} \psi^{br} \psi_b^j  \bar{\psi}^{ak} \partial_l F_{rjk} - \frac{1}{2} \epsilon^{ac} F_{rjk} p_k \psi^{br} \psi_b^j \nonumber\\
&= - \frac{1}{2} \epsilon^{ac} F_{rjk} p_k \psi^{br} \psi_b^j.
\end{align}
Note that terms in (\ref{QQmanypartilce1repform1}) and (\ref{QQmanypartilce1repform2}) cancel. Further,  we have
\begin{align}
F_{lmn}F_{rjk} \{ \langle \psi^{dl} \psi_{d}^m \bar{\psi}^{cn}\rangle, \langle \psi^{br} \psi_b^j  \bar{\psi}^{ak} \rangle\}&= F_{lmn}F_{rjk} \{  \psi^{dl} \psi_{d}^m \bar{\psi}^{cn},  \psi^{br} \psi_b^j  \bar{\psi}^{ak} \}\nonumber \\
&+ \frac{1}{4} \epsilon^{ca} F_{lmr}F_{rjj} \psi^{dl} \psi_d^m + \frac{1}{4} \epsilon^{ac} F_{rjk}F_{kmm}\psi^{br} \psi_b^j \label{QQmanyparticle1repform3} \\
&=F_{lmn}F_{rjk} \{  \psi^{dl} \psi_{d}^m \bar{\psi}^{cn},  \psi^{br} \psi_b^j  \bar{\psi}^{ak} \} \nonumber,
\end{align} 
since the last two terms in (\ref{QQmanyparticle1repform3}) cancel. Note that by (\ref{comm4}) we have
\begin{align*}
\{  \psi^{dl} \psi_{d}^m \bar{\psi}^{cn},  \psi^{br} \psi_b^j  \bar{\psi}^{ak} \}&= \psi^{dl} \psi_d^m\big( \psi_b^j \bar{\psi}^{ak} \{ \bar{\psi}^{cn}, \psi^{br} \} - \psi^{br} \bar{\psi}^{ak} \{ \psi_b^j , \bar{\psi}^{cn} \} \big)\\
&+ \psi^{br} \psi_b^j \big(\psi_d^m \bar{\psi}^{cn} \{ \bar{\psi}^{ak}, \psi^{dl}\} - \psi^{dl} \bar{\psi}^{cn} \{ \bar{\psi}^{ak}, \psi_d^m \} \big) \\
&= - \frac{1}{2} \psi^{dl} \psi_d^m \big( \psi^{cj} \delta^{nr} + \psi^{cr} \delta^{jn} \big) \bar{\psi}^{ak} - \frac{1}{2} \psi^{br} \psi_b^j \big( \psi^{al} \delta^{km} + \psi^{am} \delta^{kl} \big) \bar{\psi}^{cn}.
\end{align*}
Therefore using the symmetry of $F_{rjk}$ under the swap of $j$ and $r$, and that of $F_{lmn}$ under the swap of $l$ and $m$ we obtain
\begin{align}\label{QQmanyparticle1repform4} 
F_{lmn}F_{rjk}\{  \psi^{dl} \psi_{d}^m \bar{\psi}^{cn},  \psi^{br} \psi_b^j  \bar{\psi}^{ak} \}=-F_{lmn}F_{rjk} \big( \psi^{dl} \psi_d^m  \psi^{cj} \delta^{nr}+  \psi^{br} \psi_b^j \psi^{al} \delta^{km}\big). 
\end{align}
Further, note that for any $b \in \{1, 2\}$ we have by using (\ref{WDVV}) that $F_{lmr} F_{rjk} \psi^{dl} \psi_d^m \psi^{bj} =0$. Hence the right-hand side of (\ref{QQmanyparticle1repform4}) vanishes. Therefore it follows that $$F_{lmn}F_{rjk} \{ \langle \psi^{dl} \psi_{d}^m \bar{\psi}^{cn}\rangle, \langle \psi^{br} \psi_b^j  \bar{\psi}^{ak} \rangle\}=0$$ and hence that $\{Q^a, Q^b\}=0$ as required. 

Further on it is easy to see that $ \{ Q^{11a}, Q^{11b}\} =\{Q^{12a}, Q^{12b}\}=0$. By Theorem \ref{hamiltonian1manyparticle} we have $\{ Q^{21a}, Q^{22b}\} = -2 H \epsilon^{ba}$. The remaining relations \eqref{odd} can be shown in a similar way.
\end{proof}

\begin{lemma}\label{oddevenamanyparticle1}Let $T^{ab}$, $Q^{abc}$ be as above. Then relations (\ref{odd-even}a) hold. 
\end{lemma}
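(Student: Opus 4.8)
My plan is to check the family (\ref{odd-even}a) one generator at a time, according to whether the $T$-slot is $T^{11}=K$, $T^{22}=H$ or $T^{12}=T^{21}=-D$. Geometrically these identities say that under the conformal $sl(2)$ spanned by $K,D,H$ the supercharges $Q^{cdf}$ transform as a doublet in the first superindex $c$, with $d,f$ inert; so it suffices to compute the action of $K$, $D$, $H$ on the four operators $Q^f=-Q^{21f}$, $\bar Q^f=-Q^{22f}$ (the $c=2$ sector) and $S^f=Q^{11f}$, $\bar S^f=Q^{12f}$ (the $c=1$ sector), and to match each result against $i\epsilon^{c(a}Q^{b)df}$. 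The renaming from Section \ref{sal} lets me read off the target value in each case.

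The $K$ and $D$ rows should be direct. Since $K=x^2$, $S^f=-2x_j\psi^{fj}$ and $\bar S^f=-2x_j\bar\psi^{fj}$ involve only $x$ and fermions, I expect $[K,S^f]=[K,\bar S^f]=0$, matching the vanishing right-hand side when $a=b=1$. In the $c=2$ sector only the free part of (\ref{Q}), (\ref{Qbar}) can fail to commute with $x^2$ (the prepotential term is a function of $x$ times fermions), so $[K,Q^f]=[x^2,p_r\psi^{fr}]=2ix_r\psi^{fr}=-iS^f$, and likewise $[K,\bar Q^f]=-i\bar S^f$, as required. For $D=-\tfrac12 x_jp_j+\tfrac{iN}{2}$ I would use $[D,x_k]=\tfrac i2 x_k$, $[D,p_k]=-\tfrac i2 p_k$ and $[D,\psi]=[D,\bar\psi]=0$; together with the assumed homogeneity of $F_{rjk}$ of degree $-1$ this makes every supercharge an eigenoperator of $\mathrm{ad}_D$, and counting scaling weights should give $[D,Q^f]=-\tfrac i2 Q^f$, $[D,\bar Q^f]=-\tfrac i2\bar Q^f$, $[D,S^f]=\tfrac i2 S^f$, $[D,\bar S^f]=\tfrac i2\bar S^f$, which are exactly the values prescribed by (\ref{odd-even}a) for $T^{12}$.

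The $H$ row is the only substantial part, and here I would deliberately avoid a head-on computation with the four-fermion Hamiltonian (\ref{hamiltonianformulamanyparticle1rep}) in favour of a Jacobi-identity bootstrap from facts already proved. First, the conservation laws $[H,Q^f]=[H,\bar Q^f]=0$ should follow from the graded Jacobi identity $[Q^a,\{Q^c,\bar Q_b\}]+[Q^c,\{\bar Q_b,Q^a\}]+[\bar Q_b,\{Q^a,Q^c\}]=0$: inserting $\{Q^a,\bar Q_b\}=-2H\delta^a_b$ from Theorem \ref{hamiltonian1manyparticle} and $\{Q^a,Q^c\}=0$ from Lemma \ref{oddmanyparticle1rep} yields $\delta^c_b[Q^a,H]+\delta^a_b[Q^c,H]=0$, hence $[Q^a,H]=0$, with the barred version identical via $\{\bar Q_a,\bar Q_c\}=0$. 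Then the conformal-supercharge rows $[H,S^f]=iQ^f$ and $[H,\bar S^f]=i\bar Q^f$ should drop out of the ordinary Jacobi identity for the even pair $K,H$ and the odd $Q^f$: feeding $[K,H]=-2iD$ from Lemma \ref{conformalinvarianceF}, the freshly obtained $[K,Q^f]=-iS^f$, $[D,Q^f]=-\tfrac i2 Q^f$ and the conservation law $[H,Q^f]=0$ into $[K,[H,Q^f]]+[H,[Q^f,K]]+[Q^f,[K,H]]=0$ should collapse it to $i[H,S^f]+Q^f=0$, i.e. $[H,S^f]=iQ^f$, and analogously for $\bar S^f$. I expect the main obstacle to be organisational rather than computational: choosing the Jacobi triples so that every ingredient is one of the already-established relations, and disposing of the remaining instances of (\ref{odd-even}a) by the same mechanism.
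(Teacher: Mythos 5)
Your proposal is correct, and for the one genuinely computational relation it takes a different route from the paper. The $K$ and $D$ rows and the conservation law $[H,Q^a]=[H,\bar Q_a]=0$ are handled the same way in both: the paper also gets $[K,Q^a]=-iS^a$, $[K,S^a]=0$ and the $D$-weights from homogeneity, and it also obtains $[H,Q^a]=0$ structurally, from $H=-\tfrac12\{Q^a,\bar Q_a\}$ (fixed $a$) together with $(Q^a)^2=0$, which is the same nilpotency/super-Jacobi mechanism you invoke via $[\{A,B\},A]=-[A^2,B]$. The divergence is at $[H,S^a]=iQ^a$: the paper proves this head-on, computing $\partial_rF_{jlk}[\psi^{br}\psi_b^j\bar\psi_d^l\bar\psi^{dk}-\psi_b^r\bar\psi^{bj}\delta^{lk},\psi^{am}]$ and then using the Euler identity $x_m\partial_mF_{rjl}=-F_{rjl}$ to reassemble $Q^a$ (formulas \eqref{2)}--\eqref{HS^amanyparticle1rep}), whereas you extract it from the ordinary Jacobi identity for $(K,H,Q^f)$ using $[K,H]=-2iD$ from Lemma \ref{conformalinvarianceF} and the already-established $[H,Q^f]=0$, $[K,Q^f]=-iS^f$, $[D,Q^f]=-\tfrac i2Q^f$; the signs do close up ($i[H,S^f]+Q^f=0$ gives $[H,S^f]=iQ^f$), and there is no circularity since Lemma \ref{conformalinvarianceF}, Theorem \ref{hamiltonian1manyparticle} and the relevant part of Lemma \ref{oddmanyparticle1rep} are all proved without appeal to (\ref{odd-even}a). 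What your bootstrap buys is that one never has to touch the four-fermion term of $H$ again; what the paper's direct computation buys is independence from Lemma \ref{conformalinvarianceF} and an explicit consistency check on the cubic fermionic structure of $Q^a$, which the authors then reuse almost verbatim for the second representation.
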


\begin{proof}
Firstly, it is easy to see that $[T^{11}, Q^{21a}]=-[K, Q^a]= -2i x_r \psi^{ar} = iS^a$, and $[T^{11}, Q^{11a}]=[K,S^a]=0$, and  $[T^{12}, Q^{11a}]=-[D, S^a]= -\frac{i}{2}Q^{11a}$. Moreover, we have
 $[T^{12}, Q^{21a}]=[D, Q^a]= \frac{i}{2} Q^{21a}$ as $Q^a$ is homogeneous in $x$ of degree $-1$. This gives relations  
(\ref{odd-even}a) for commutators between $K, D$ and $Q^a, S^a$. 

Further, we have
\begin{align*}
[ \psi^{br} \psi_b^j \bar{\psi}_d^l \bar{\psi}^{dk} , \psi^{am}]=\frac{1}{2} \psi^{br} \psi_b^j ( \bar{\psi}^{al} \delta^{km} + \bar{\psi}^{ak} \delta^{lm}),
\end{align*}
therefore 
\begin{align}\label{2)}
\partial_r F_{jlk} [ \psi^{br} \psi_b^j \bar{\psi}_d^l \bar{\psi}^{dk}, \psi^{am}]= \partial_r F_{jlm} \psi^{br} \psi_b^j \bar{\psi}^{al}. 
\end{align}
Note also that 
\begin{align}\label{3)}
\partial_r F_{jlk} [\psi_b^r \bar{\psi}^{bj} \delta^{lk}, \psi^{am}]= \frac{1}{2} \partial_{r} F_{lmk} \psi^{ar} \delta^{lk}. 
\end{align}
Hence we get from (\ref{2)}) and (\ref{3)}) that 
\begin{align}\label{blah}
\partial_r F_{jlk} [ \psi^{br} \psi_b^j \bar{\psi}_d^l \bar{\psi}^{dk}- \psi_b^r \bar{\psi}^{bj} \delta^{lk} , \psi^{am}] &= \partial_r F_{jlm} \psi^{br} \psi_b^j \bar{\psi}^{al} - \frac{1}{2} \partial_{r} F_{lmk} \psi^{ar} \delta^{lk} \\
&= \partial_m F_{rjl} \langle \psi^{br} \psi_b^j \bar{\psi}^{al} \rangle \nonumber,
\end{align}
in view of (\ref{weylmanyparticle1rep}). Therefore 
\begin{align}\label{HS^amanyparticle1rep}
[ H, S^a]&= i p_r \psi^{ar} + x_m \partial_m F_{rjl} \langle \psi^{br} \psi_b^j \bar{\psi}^{al}\rangle\\
&= i p_r \psi^{ar} - F_{rjl} \langle \psi^{br} \psi_b^j \bar{\psi}^{al}\rangle \nonumber\\
&= i Q^a\nonumber, 
\end{align}
as required for (\ref{odd-even}a). Further on, by Theorem \ref{hamiltonian1manyparticle} we have
$T^{22}=H= - \frac{1}{2} \{Q^a, \bar{Q}_a\}.$ Since $(Q^a)^2=0$ we get that $[H, Q^a]=0$ as required. %Similarly, $[T^{22}, \bar{Q}_a]=0$. 
The remaining relations (\ref{odd-even}a) can be shown in a similar way. 
\end{proof}

\begin{lemma}\label{thezerothrelationslemma}Let $T^{ab}$, $I^{ab}$, $J^{ab}$ be as above. Then relations (\ref{thezerorelations}) hold. 
\end{lemma}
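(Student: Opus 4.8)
The relations \eqref{thezerorelations} assert that the three $sl(2)$ triples $\{T^{ab}\}$, $\{I^{ab}\}$, $\{J^{ab}\}$ pairwise commute, and the plan is to split them into three groups. First I would dispose of every commutator involving $K=T^{11}$ and $-D=T^{12}$: by \eqref{K} and \eqref{D} these operators depend only on $x$ and $p$, whereas $I^{ab}$ and $J^{ab}$ are built solely from the fermions $\psi,\bar\psi$ by \eqref{I}, \eqref{J}. Since the brackets \eqref{extebracket1} never mix bosonic and fermionic variables, $[K,I^{cd}]=[K,J^{ef}]=[D,I^{cd}]=[D,J^{ef}]=0$ follow immediately. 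Thus the only parts of \eqref{thezerorelations} with genuine content are $[H,I^{cd}]=0$, $[H,J^{ef}]=0$ (where $H=T^{22}$ does contain fermions) and $[I^{cd},J^{ef}]=0$.

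For the two relations involving $H$ I would avoid expanding the quartic-fermion Hamiltonian \eqref{hamiltonianformulamanyparticle1rep} directly, and instead exploit the fact, established in Theorem \ref{hamiltonian1manyparticle}, that $H=-\tfrac14\sum_a\{Q^a,\bar Q_a\}$. The graded Jacobi identity gives $[J^{ef},H]=-\tfrac14\sum_a\big(\{[J^{ef},Q^a],\bar Q_a\}+\{Q^a,[J^{ef},\bar Q_a]\}\big)$, and each inner commutator is already known: the action of $J^{ef}$ on the supercharges is the content of relations (\ref{odd-even}b), proved in Lemma \ref{oddevenbmanyparticle1}. Substituting these, every term becomes an anticommutator of two supercharges, which Theorem \ref{hamiltonian1manyparticle} and Lemma \ref{oddmanyparticle1rep} evaluate to multiples of $H$ (the $\{Q,Q\}$ and $\{\bar Q,\bar Q\}$ pieces vanishing). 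Summing over $a$ and using $\epsilon^{ef}=-\epsilon^{fe}$, one checks that the surviving $H$-terms cancel in pairs, giving $[J^{ef},H]=0$. The relation $[H,I^{cd}]=0$ would be proved in exactly the same way, now feeding in the commutators $[I^{ab},Q^{cdf}]$ computed in Lemma \ref{oddevencmanypartilcle1} (relations (\ref{odd-even}c)).

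The remaining relation $[I^{cd},J^{ef}]=0$ is purely fermionic, and I would establish it by direct computation from \eqref{I}, \eqref{J}. For instance, using $[\psi_c^j\psi^{cj},\psi^{ak}]=0$ (as $\{\psi,\psi\}=0$) together with the first identity in \eqref{forproofofIS}, one finds $[\psi_c^j\psi^{cj},\psi^{ak}\bar\psi^{bk}]=-\psi^{ak}\psi^{bk}$; symmetrising in $a,b$ this vanishes, so $[I^{11},J^{ab}]=0$. The commutators $[I^{22},J^{ab}]$ and $[I^{12},J^{ab}]$ are handled analogously, using the second identity in \eqref{forproofofIS} and the brackets \eqref{extebracket2}.

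The conceptual work is light, since all the necessary odd--even and odd--odd relations are already in hand; the main obstacle is purely bookkeeping. In the Jacobi computations for $[H,J^{ef}]$ and $[H,I^{cd}]$ one must track several index contractions simultaneously and verify that the $H$-proportional terms cancel for every choice of $e,f$, and in the direct computation one must run through all combinations of $I^{11},I^{22},I^{12}$ against $J^{ab}$. In each case it is the antisymmetry of $\epsilon$ together with the symmetry $J^{ab}=J^{ba}$ that forces the cancellations.
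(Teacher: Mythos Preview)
Your proposal is correct and follows essentially the same approach as the paper: trivial vanishing for $K,D$ against $I,J$ by bosonic/fermionic separation, direct fermionic computation for $[I^{cd},J^{ef}]$ using \eqref{forproofofIS}, and the Jacobi identity applied to $H=-\tfrac12\{Q^a,\bar Q_a\}$ together with Lemmas \ref{oddevenbmanyparticle1}, \ref{oddevencmanypartilcle1} and Theorem \ref{hamiltonian1manyparticle} for $[H,I^{cd}]$ and $[H,J^{ef}]$. The only cosmetic difference is the order of presentation (the paper treats $[I,J]$ first).
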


\begin{proof}
Let us firstly  consider $[I^{ab}, J^{cd}]$. We have by (\ref{com1}) and (\ref{forproofofIS}) that $$[\psi_a^j \psi^{aj}, \psi^{ck} \bar{\psi}^{dk}]= \psi^{dk} \psi^{ck}.$$ Therefore
\begin{align*}
[I^{11}, J^{cd}]&=2[\psi_a^j \psi^{aj}, \psi^{(ck} \bar{\psi}^{dk)}]=0,
\end{align*}
as required. Further,  we have by (\ref{com1}), (\ref{com2}) that
\begin{align*}
[[\psi_a^j ,  \bar{\psi}^{aj}], \psi^{ck}  \bar{\psi}^{dk}] &= 2 [ \psi_a^j  \bar{\psi}^{aj}, \psi^{ck}  \bar{\psi}^{dk}] \\
&= 2 (\psi_a^j [  \bar{\psi}^{aj}, \psi^{ck}  \bar{\psi}^{dk}]+ [ \psi_a^j , \psi^{ck}  \bar{\psi}^{dk}]  \bar{\psi}^{aj} )\\
&= 2 ( \psi_a^j  \bar{\psi}^{dk} \{ \psi^{ck},  \bar{\psi}^{aj}\} - \psi^{ck}  \bar{\psi}^{aj} \{  \bar{\psi}^{dk}, \psi_a^j \}) =0.
\end{align*}
Therefore, 
\begin{align*}
[I^{12}, J^{cd}]=[ [\psi_a^j ,  \bar{\psi}^{aj}],  \psi^{(ck} \bar{\psi}^{dk)}]=0,
\end{align*}
which is the corresponding relation (\ref{thezerorelations}). In addition we have by (\ref{com1}) and (\ref{forproofofIS}) that 
\begin{align*}
[  \bar{\psi}^{aj} \bar{\psi}_a^j , \psi^{ck} \bar{\psi}^{dk}] = - \bar{\psi}^{ck} \bar{\psi}^{dk}. 
\end{align*}
Therefore,
\begin{align*}
[I^{22}, J^{cd}]= -2 [ \bar{\psi}^{aj} \bar{\psi}_a^j ,  \psi^{(ck} \bar{\psi}^{dk)}]=0, 
\end{align*}
as required. 

Let us now consider relations $[I^{ab}, T^{cd}]$, $(a, b, c, d=1, 2$).  It is easy to see that for $T^{12}=-D$ and $T^{11}=K$ relations (\ref{thezerorelations}) hold.  Further, we have $T^{22}=H= -\frac{1}{2} \{ Q^c, \bar{Q}_c\}$. Then by (\ref{com1}) we obtain
\begin{align*}
[I^{ab}, H]&= - \frac{1}{2} ( [I^{ab}, Q^c \bar{Q}_c ] + [I^{ab}, \bar{Q}_c Q^c] ) \\
&= - \frac{1}{2}  ( Q^c [ I^{ab}, \bar{Q}_c] + [I^{ab}, Q^c]\bar{Q}_c + \bar{Q}_c [ I^{ab}, Q^c] + [I^{ab},  \bar{Q}_c] Q^c) \\
&=- \frac{1}{2}  (- Q_{\widehat{c}} [ I^{ab}, \bar{Q}^{\widehat{c}}] + [I^{ab}, Q^c]\bar{Q}_c + \bar{Q}_c [ I^{ab}, Q^c] - [I^{ab},  \bar{Q}^{\widehat{c}}] Q_{\widehat{c}}),
\end{align*}
where $\widehat{c}$ is complimentary to $c$. 
%as required for (\ref{thezerorelations}). 
%
Then by Lemma \ref{oddevencmanypartilcle1} we have $$[I^{ab}, Q^c]= -[I^{ab}, Q^{21c}]=-\frac{i}{2} ( \epsilon^{ 1a} Q^{2bc} + \epsilon^{1b} Q^{2ac}) \quad \text{and} \quad  [ I^{ab}, \bar{Q}^c] =-\frac{ i}{2} (\epsilon^{2a} Q^{2bc} + \epsilon^{2b} Q^{2ac}).$$ 
Therefore by considering various values of $a,  b \in \{1, 2\}$  and by using Lemma \ref{oddmanyparticle1rep} and Theorem \ref{hamiltonian1manyparticle} we obtain the following:
\begin{align*}
[I^{11}, H]&= \frac{i}{2}( Q_{\widehat{c}} Q^{\widehat{c}} + Q^{\widehat{c}} Q_{\widehat{c}})=0,\\
[I^{22}, H]&= \frac{i}{2} ( \bar{Q}^c \bar{Q}_c + \bar{Q}_c \bar{Q}^c )=0,\\
[I^{12}, H]&=  \frac{i}{2} ( Q_{\widehat{c}} \bar{Q}^{\widehat{c}} + Q^c \bar{Q}_c + \bar{Q}_c Q^c + \bar{Q}^{\widehat{c}}Q_ {\widehat{c}})=0,
\end{align*}
which are the corresponding relations (\ref{thezerorelations}).

Similarly we have
\begin{align*}
[J^{ab}, H]=- \frac{1}{2}  (- Q_{\widehat{c}} [ J^{ab}, \bar{Q}^{\widehat{c}}] + [J^{ab}, Q^c]\bar{Q}_c + \bar{Q}_c [ J^{ab}, Q^c] - [J^{ab},  \bar{Q}^{\widehat{c}}] Q_{\widehat{c}}).
\end{align*}
By Lemma \ref{oddevenbmanyparticle1} we have
\begin{align*}
[J^{ab}, Q^c]=\frac{i}{2} (\epsilon^{ca} Q^b + \epsilon^{cb} Q^a) \quad \text{and} \quad [J^{ab}, \bar{Q}^c] = \frac{i}{2} ( \epsilon^{ca} \bar{Q}^b + \epsilon^{cb} \bar{Q}^a). 
\end{align*}
Therefore by considering various values of $a, b \in \{1, 2\}$ we obtain:
\begin{align}\label{J11H}
[J^{11}, H] =-\frac{i}{2} (-\epsilon^{\widehat{c}1} Q_{\widehat{c}} \bar{Q}^1 + \epsilon^{c1} Q^1 \bar{Q}_c + \epsilon^{c1} \bar{Q}_c Q^1 -\epsilon^{\widehat{c}1} \bar{Q}^1Q_{\widehat{c}}),
\end{align}
\begin{align*}
[J^{12}, H] =& - \frac{i}{4} (-\epsilon^{\widehat{c}1} Q_{\widehat{c}} \bar{Q}^2 - \epsilon^{\widehat{c}2} Q_{\widehat{c}} \bar{Q}^1 + \epsilon^{c1} Q^2 \bar{Q}_c + \epsilon^{c2} Q^1 \bar{Q}_c \nonumber\\
&+ \epsilon^{c1} \bar{Q}_c Q^2 + \epsilon^{c2} \bar{Q}_c Q^1 - \epsilon^{\widehat{c}1} \bar{Q}^2 Q_{\widehat{c}} -\epsilon^{\widehat{c}2} \bar{Q}^1 Q_{\widehat{c}} ),
\end{align*}
\begin{align}\label{J22H}
[J^{22}, H] =-\frac{i}{2} (-\epsilon^{\widehat{c}2} Q_{\widehat{c}} \bar{Q}^2 + \epsilon^{c2} Q^2 \bar{Q}_c + \epsilon^{c2} \bar{Q}_c Q^2 -\epsilon^{\widehat{c}2} \bar{Q}^2Q_{\widehat{c}}).
\end{align}
Then by considering various values of $c \in \{1, 2\}$ in (\ref{J11H})--(\ref{J22H}) and by using Lemma \ref{oddmanyparticle1rep} and Theorem \ref{hamiltonian1manyparticle} we obtain that
\begin{align*}
[J^{11}, H]=[J^{12}, H]=[J^{22}, H]=0, 
\end{align*}
as required for (\ref{thezerorelations}). 
\end{proof}

\section{The second representation} 
\label{2nd rep section}

Let now the supercharges be of the form
\begin{align}\label{newQ}
Q^a& = p_r \psi^{ar} + i F_{rjk} \psi^{br} \psi_b^j  \bar{\psi}^{ak} ,
\end{align}
\begin{align}\label{newQbar}
\bar{Q}_c &= p_l  \bar{\psi}_c^l + i F_{lmn}  \bar{\psi}_d^l  \bar{\psi}^{dm} \psi_c^n,
\end{align}
so we do not have anti-symmetrisation in the cubic fermionic terms. 
Let generators $K$,  $I^{ab}$, $J^{ab}$, and $S^a, \bar S_a$ be given by formulas
\eqref{K}, \eqref{I}, \eqref{J},  \eqref{S} same as in the first representation, while the generator $D$ is now given by
\begin{align}
\label{newD}
D= -\frac{1}{2}x_j  p_j+\frac{i}{2} (\alpha+1)N. 
\end{align}

\begin{theorem}\label{hamiltonianmanyparticle2} For all $a, b = 1, 2$ we have $\{Q^a, \bar{Q}_b\}=-2 H \delta^a_b$, where the Hamiltonian $H$ is 
\begin{align}
\label{newH}
H= \frac{p^2}{4} - \frac{\partial_rF_{jlk}}{2} ( \psi^{br} \psi_b^j  \bar{\psi}_d^l \bar{\psi}^{dk} - \psi_b^r \bar{\psi}^{bj} \delta^{lk} ) + \frac{i}{4} \delta^{nm} F_{rmn} p_r. 
\end{align}
\end{theorem}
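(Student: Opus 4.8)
The plan is to compute the anticommutator $\{Q^a, \bar{Q}_b\}$ directly, exactly mirroring the structure of the proof of Theorem \ref{hamiltonian1manyparticle} but without the anti-symmetrisation brackets $\langle \dots \rangle$ in the cubic fermionic terms. I would decompose each supercharge as $Q^a = A + B$ with $A = p_r \psi^{ar}$ and $B = i F_{rjk} \psi^{br} \psi_b^j \bar{\psi}^{ak}$, and similarly $\bar{Q}_c = A' + B'$ with $A' = p_l \bar{\psi}_c^l$ and $B' = i F_{lmn} \bar{\psi}_d^l \bar{\psi}^{dm} \psi_c^n$. Then $\{Q^a, \bar{Q}_c\} = \{A, A'\} + \{A, B'\} + \{B, A'\} + \{B, B'\}$, and I would evaluate the four pieces separately.

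For $\{A, A'\}$ the relation \eqref{extebracket1} gives immediately $-\tfrac12 \delta^a_c p^2$, reproducing the kinetic term. For the mixed terms $\{A, B'\}$ and $\{B, A'\}$ I would use the identities \eqref{com2}, \eqref{com3} together with \eqref{fortermBB'1}, \eqref{fortermBB'2}, exactly as in the first representation; the computation is in fact simpler here since there is no anti-symmetrisation, so the extra $\delta$-correction terms of \eqref{weylmanyparticle1rep}, \eqref{weylmanyparticle1rep2ndsuper} are absent. I expect the sum $\{A, B'\} + \{B, A'\}$ to collect into fermionic quartic terms together with a term involving $[p_r, F_{lmn}] = -i\partial_r F_{lmn}$ and a term $\tfrac{i}{4}\delta^{nm} F_{rmn} p_r \delta^a_c$; the latter is precisely the momentum-dependent term $\tfrac{i}{4}\delta^{nm} F_{rmn} p_r$ appearing in \eqref{newH}, which is the key structural difference from Theorem \ref{hamiltonian1manyparticle}. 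Here the symmetry relations \eqref{(1)}--\eqref{relationfor2ndrep}, notably \eqref{LAMBDA}, will again be used to reduce the quartic fermionic part to the form $\partial_r F_{ljk}(\psi^{br}\psi_b^j \bar{\psi}_d^l \bar{\psi}^{dk} - \psi_d^r \bar{\psi}^{dk}\delta^{lj})\delta^a_c$.

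The main obstacle, as in the first representation, will be the purely quartic-in-$F$ term $\{B, B'\}$, which requires showing that the analogue of $C$ in \eqref{BB'noweyl} vanishes (or contributes only the constant term $\tfrac{1}{16}F_{ijk}F_{lmn}\cdots$ which is now absent). I would expand $\{B, B'\}$ using \eqref{comm4}, reorganise fermions via \eqref{extebracket1}, \eqref{extebracket2}, and crucially invoke the WDVV equations \eqref{WDVV} in the form $F_{rjk}F_{rmn} = F_{rnk}F_{rmj}$ together with the identities \eqref{forBB1'}--\eqref{forBB'5} and \eqref{LAMBDA} to force cancellation. Since the supercharges \eqref{newQ}, \eqref{newQbar} differ from \eqref{Q}, \eqref{Qbar} only by the $\delta$-type correction terms coming from anti-symmetrisation, the difference between the two Hamiltonians should be exactly the shift from the constant $\tfrac{1}{16}F_{ijk}F_{lmn}\delta^{nm}\delta^{jl}\delta^{ik}$ and the $-\tfrac14 \delta^{ij}\delta^{lk}$ correction inside the quartic bracket of \eqref{hamiltonianformulamanyparticle1rep} into the momentum term $\tfrac{i}{4}\delta^{nm}F_{rmn}p_r$ of \eqref{newH}; tracking these correction terms carefully is where the real care is needed. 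Finally, I would verify that the off-diagonal case $a \neq c$ yields zero by the same anti-symmetry arguments as in \eqref{(3)}, establishing $\{Q^a, \bar{Q}_b\} = -2H\delta^a_b$.
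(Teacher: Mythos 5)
Your proposal is correct and follows essentially the same route as the paper: the same $A+B$, $A'+B'$ decomposition, reuse of the mixed-term computations from Theorem \ref{hamiltonian1manyparticle} without the $\delta$-corrections of \eqref{weylmanyparticle1rep}, \eqref{weylmanyparticle1rep2ndsuper}, the vanishing of $\{B,B'\}$ via $C=0$ from \eqref{BB'noweyl}, and the identification of the surviving momentum term. The only nitpick is bookkeeping: the term appearing in $\{A,B'\}+\{B,A'\}$ is $-\tfrac{i}{2}\delta^{nm}F_{rmn}p_r\,\delta^a_c$, which becomes $+\tfrac{i}{4}\delta^{nm}F_{rmn}p_r$ in $H$ only after dividing by $-2$.
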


\begin{proof}
 Let us denote terms in (\ref{newQ}), (\ref{newQbar}) as follows:
\begin{align*}
Q^a = \overbrace{p_r \psi^{ar}}^{A} +\overbrace{ i F_{rjk} \psi^{br} \psi_b^j  \bar{\psi}^{ak}}^{B}, \quad 
\bar{Q}_c = \overbrace{p_l  \bar{\psi}_c^l}^{A' }+ \overbrace{ i F_{lmn}  \bar{\psi}_d^l  \bar{\psi}^{dm} \psi_c^n}^{B'}.
\end{align*}
Then, analogues of relations (\ref{AB'manyparticle1}), (\ref{BA'manyparticle1}) are
\begin{align}\label{AB'manyparticle2}
\{A, B'\}&= i \psi^{ar} \bar{\psi}_d^l \bar{\psi}^{dm} \psi_c^n [p_r, F_{lmn}] - i  \bar{\psi}^{al} \psi_c^n F_{lnr} p_r, 
\end{align}
and 
\begin{align}\label{BA'manyparticle2}
 \{B,A'\}&= i  \bar{\psi}_c^l \psi^{br} \psi_b^j  \bar{\psi}^{ak} [ p_l, F_{rjk}] - i \psi_c^r  \bar{\psi}^{ak} F_{rkj} p_j,
\end{align}
respectively. Then using (\ref{AB'manyparticle2}) and (\ref{BA'manyparticle2}) an analogue of equality (\ref{(0)}) is (cf. \eqref{relationfor2ndrep})
\begin{align*}
\{A, B'\}+ \{B, A'\} &=  \partial_r F_{ljk} ( \psi^{ar} \bar{\psi}_d^l \bar{\psi}^{dk} \psi_c^j +  \bar{\psi}_c^l \psi^{br} \psi_b^j  \bar{\psi}^{ak}) - \frac{i}{2} \delta^{nl}F_{lnr}p_r \delta^{a}_c \\
&=  \partial_rF_{jlk} ( \psi^{br} \psi_b^j  \bar{\psi}_d^l \bar{\psi}^{dk} - \psi_b^r \bar{\psi}^{bj} \delta^{lk} )\delta^a_c-\frac{i}{2} \delta^{nl}F_{lnr}p_r \delta^{a}_c.
\end{align*}
Further on we have $\{B, B'\}=0$ (cf. (\ref{BB'noweyl})). Therefore in total, we get that 
\begin{align}
\{Q^a, \bar{Q}_c \}&=-
\frac{p^2}{2}\delta^a_c + \{A, B'\} + \{B, A'\}\nonumber \\
&= -\frac{p^2}{2} \delta^a_c+ \partial_rF_{jlk} ( \psi^{br} \psi_b^j  \bar{\psi}_d^l \bar{\psi}^{dk} - \psi_b^r \bar{\psi}^{bj} \delta^{lk} )\delta^a_c - \frac{i}{2} \delta^{nm} F_{rmn} p_r\delta^a_c \label{newQQbar},
\end{align}
and hence the statement follows. 
\end{proof}

\begin{lemma}\label{conformalinvarianceF2} Let $T^{ab}$ be given by \eqref{newD}, \eqref{newH} and \eqref{K}. Then relations (\ref{T}) hold.
\end{lemma}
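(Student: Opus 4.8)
The plan is to verify the three $sl(2)$ relations in \eqref{T} for the generators $T^{11}=K$, $T^{22}=H$, and $T^{12}=T^{21}=-D$, where now $K=x^2$ is as in \eqref{K}, $H$ is the Hamiltonian \eqref{newH} from Theorem \ref{hamiltonianmanyparticle2}, and $D$ is the \emph{modified} dilatation operator \eqref{newD}. Unpacking the index conventions in \eqref{T}, the three nontrivial relations read $[K,H]=-2iD$, $[K,D]=iK$, and $[H,D]=iH$, so I would establish these three commutators in turn, exactly paralleling the proof of Lemma \ref{conformalinvarianceF} but tracking the effect of the extra constant term in $D$.

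First I would compute $[K,D]$. Since $K=x^2$ commutes with the constant $\tfrac{i}{2}(\alpha+1)N$ in \eqref{newD}, we have $[K,D]=-\tfrac12[x^2,x_jp_j]$, which is the same purely bosonic bracket as in Lemma \ref{conformalinvarianceF} and yields $[K,D]=iK$ unchanged. Next I would compute $[H,D]$: the key observation is that the constant in $D$ drops out of the commutator, so $[H,D]=-\tfrac12[H,x_jp_j]$, and this measures the scaling weight of $H$. Here I must check that every term of \eqref{newH} is homogeneous of degree $-2$ in $x$. The kinetic term $p^2/4$ and the four-fermion term $\partial_r F_{jlk}(\cdots)$ have degree $-2$ because the $F_{jlk}$ are homogeneous of degree $-1$ and differentiating adds $-1$; crucially, the new term $\tfrac{i}{4}\delta^{nm}F_{rmn}p_r$ also has degree $-2$, since $F_{rmn}$ contributes degree $-1$ and $p_r$ contributes degree $-1$. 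Homogeneity of degree $-2$ gives $[x_jp_j,H]=-2iH$ and hence $[H,D]=iH$, as required.

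The remaining and most delicate relation is $[K,H]=-2iD$, and this is where the modification of $D$ earns its keep. I would split $[K,H]=[x^2,H]$ term by term. The four-fermion piece commutes with $x^2$, so only the kinetic term and the new linear-in-$p$ term contribute. From $[x^2,p^2/4]=\tfrac{i}{2}\{x_r,p_r\}=ix_rp_r-\tfrac{N}{2}$ (using \eqref{extebracket1}), one reproduces the bosonic part; then $[x^2,\tfrac{i}{4}\delta^{nm}F_{rmn}p_r]$ must be evaluated using $[x^2,p_r]=2ix_r$ together with the trace condition \eqref{alphalambda} in the form $x_rF_{rmn}=-(2\alpha+1)\delta_{mn}$, which converts $\delta^{nm}x_rF_{rmn}$ into $-(2\alpha+1)N$. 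Assembling these contributions should produce precisely $-2iD$ with $D$ given by \eqref{newD}, the extra constant $\tfrac{i}{2}(\alpha+1)N$ being exactly what is generated by the new linear term. The main obstacle is therefore bookkeeping: one must keep careful track of the constant and $N$-dependent pieces and confirm that the $\alpha$-dependence coming from \eqref{alphalambda} matches the $\tfrac{i}{2}(\alpha+1)N$ shift in \eqref{newD}, so that all three relations close into the $sl(2)$ algebra \eqref{T} despite $H$ no longer being formally self-adjoint in this gauge.
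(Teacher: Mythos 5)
Your proposal follows the paper's proof essentially step for step: the same three commutators, with $[K,D]$ unaffected by the constant shift, $[H,D]=iH$ obtained from homogeneity of degree $-2$ (including the new term $\tfrac{i}{4}\delta^{nm}F_{rmn}p_r$), and $[K,H]$ computed term by term using $[x^2,p_r]=2ix_r$ together with the trace condition \eqref{alphalambda} to produce the constant $\tfrac{N}{2}(2\alpha+1)$ that accounts for the shift in \eqref{newD}. The approach is correct.

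Two intermediate signs need fixing for the bookkeeping you describe to actually close. First, with $[x_j,p_k]=i\delta_{jk}$ one has $\{x_r,p_r\}=2x_rp_r-iN$, so $\tfrac{i}{2}\{x_r,p_r\}=ix_rp_r+\tfrac{N}{2}$, not $ix_rp_r-\tfrac{N}{2}$; only with the plus sign does $[K,H]=ix_rp_r+\tfrac{N}{2}+\tfrac{N}{2}(2\alpha+1)=ix_rp_r+(\alpha+1)N$ equal $-2iD$. Second, for $H$ homogeneous of degree $-2$ the Euler relation gives $[x_jp_j,H]=+2iH$ (equivalently $[H,x_jp_j]=-2iH$), which combined with $D=-\tfrac12 x_jp_j+\mathrm{const}$ yields $[H,D]=iH$; as written, your displayed $[x_jp_j,H]=-2iH$ would give $[H,D]=-iH$. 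These are arithmetic slips rather than gaps in the argument, and once corrected your verification coincides with the paper's.
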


\begin{proof} Firstly, we have  that
\begin{align*}
[K, H]&=\frac{1}{4} [x^2, p^2] + \frac{i}{4}\delta^{nm} F_{rmn} [x^2, p_r] = \frac{i}{2} \{x_r, p_r\} + \frac{N}{2} (2\alpha+1) =-2iD, 
\end{align*}
as required. Moreover we have $[F_{rmn} p_r, x_j p_j]=-iF_{rmn}p_r + ix_j \partial_j F_{rmn} p_r=-2iF_{rmn}p_r$. Then it is easy to see that $[H, D]=iH$, as required. Further on, $[K, D]=-\frac{1}{2} [x^2, x_j p_j] = i K$, 
which is the corresponding relation (\ref{T}). 
\end{proof}

We note that since $I$ and $J$ keep the same form as in the first representation, the statement of the Lemmas \ref{Jmanyparticle1}, \ref{Imanyparticle1} hold. 

\begin{lemma} Let  $Q^{abc}$, $I^{ab}$, $J^{ab}$ be given by \eqref{newQ}, \eqref{newQbar}, \eqref{S}, \eqref{I}, \eqref{J}. Then relations (\ref{odd-even}b), (\ref{odd-even}c) hold. 
\end{lemma}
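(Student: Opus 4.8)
The plan is to avoid fresh computation by reducing the statement to the already-established first-representation results, Lemmas~\ref{oddevenbmanyparticle1} and~\ref{oddevencmanypartilcle1}, together with the observation that $I^{ab}$ and $J^{ab}$ are purely fermionic and commute with every $x_j$ and $p_j$. First I would record the exact difference between the two families of supercharges. Expanding the anti-symmetrisation by \eqref{weylmanyparticle1rep} and \eqref{weylmanyparticle1rep2ndsuper}, the second-representation charges \eqref{newQ}, \eqref{newQbar} satisfy $Q^a_{(2)} = Q^a_{(1)} + \tfrac{i}{2}F_{rjj}\psi^{ar}$ and $\bar{Q}_{c,(2)} = \bar{Q}_{c,(1)} + \tfrac{i}{2}F_{lnn}\bar{\psi}_c^l$, where the subscript indicates the representation; the conformal charges $S^a,\bar{S}_a$ given by \eqref{S} and the generators $I^{ab},J^{ab}$ given by \eqref{I}, \eqref{J} coincide in the two representations.

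Since $S^a=Q^{11a}$ and $\bar{S}_a=Q^{12a}$ are unchanged, the instances of \eqref{odd-even}b and \eqref{odd-even}c carrying first index equal to $1$ are literally those proved in Lemmas~\ref{oddevenbmanyparticle1}, \ref{oddevencmanypartilcle1} and require nothing new. For the dynamical charges $Q^{21a}=-Q^a$ and $Q^{22a}=-\bar{Q}^a$ I would split each commutator as $[J^{ab},Q^c_{(2)}]=[J^{ab},Q^c_{(1)}]+\tfrac{i}{2}F_{rjj}[J^{ab},\psi^{cr}]$, and analogously for $I^{ab}$ and for $\bar{Q}^c$. The first summand is controlled by Lemmas~\ref{oddevenbmanyparticle1}, \ref{oddevencmanypartilcle1}, while in the second summand $F_{rjj}$ passes through the commutator by \eqref{com1} because it commutes with the purely fermionic $I^{ab},J^{ab}$. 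Hence the correction term transforms exactly like the free piece $p_r\psi^{cr}$, whose transformation is governed by \eqref{QJmany1rep1form} for $J^{ab}$ and by \eqref{IQmany1rep1form} together with \eqref{forproofofIS} for $I^{ab}$.

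The crux of the argument — and the only place where something could go wrong — is the bookkeeping that the correction terms reassemble into the correction terms of the partner charges dictated by the right-hand sides of \eqref{odd-even}b,c. Concretely, under $J^{ab}$ the term $\tfrac{i}{2}F_{rjj}\psi^{ar}$ in $Q^a_{(2)}$ must rotate into the same term carried by $Q^b_{(2)}$, which is immediate from the $\epsilon$-contraction in \eqref{QJmany1rep1form}; under $I^{ab}$ the same term must map, via $[I^{22},\psi^{ar}]=-i\bar{\psi}^{ar}$ from \eqref{IQmany1rep1form}, onto the correction $\tfrac{i}{2}F_{lnn}\bar{\psi}^{al}$ that $\bar{Q}^a_{(2)}$ acquires in the first display above, up to the factor prescribed by \eqref{odd-even}c. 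Verifying this matching for the handful of index choices, and noting that the remaining relations follow by the identical manipulation, completes the proof; I expect this index-matching to be routine once the two correction terms are recognised as $I$- and $J$-partners of one another.
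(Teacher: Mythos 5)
Your proof is correct: with $Q^a_{(2)}-Q^a_{(1)}=\tfrac{i}{2}F_{rjj}\psi^{ar}$ and $\bar Q_{c,(2)}-\bar Q_{c,(1)}=\tfrac{i}{2}F_{lnn}\bar\psi_c^l$, the correction terms do rotate into one another under $I^{ab}$ and $J^{ab}$ with exactly the coefficients demanded by the right-hand sides of (\ref{odd-even}b,c) (I checked the signs for $J^{ab}$, $I^{11}$, $I^{12}$ and $I^{22}$; they all close), and the relations involving $S^a,\bar S_a$ are indeed verbatim those of the first representation. The organisation is, however, different from the paper's. The paper does not subtract the two families of supercharges; it re-runs the first-representation computations on the bare cubic terms, exploiting the fact that the key intermediate identities (\ref{JabFpsiorder3}), (\ref{blah1}), (\ref{IQmany1rep3form}) and the analogue of (\ref{I12Q21amanyparticle2}) were already derived for $F_{lmn}\psi^{dl}\psi_d^m\bar\psi^{cn}$ \emph{before} the antisymmetrisation corrections were inserted, so the second-representation commutators are obtained by simply deleting the $\delta^{jk}$-terms from (\ref{JabQ21cmanyparticle1rep}), (\ref{I22Q21amanyparticle1}), (\ref{I12Q21amanyparticle2}). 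Your decomposition instead uses Lemmas \ref{oddevenbmanyparticle1} and \ref{oddevencmanypartilcle1} as black boxes and confines all new work to a single linear correction term; this is cleaner and makes it transparent why nothing can fail, since the correction is a function of $x$ times $\psi^{ar}$ (resp.\ $\bar\psi_c^l$) and therefore transforms under the purely fermionic $I^{ab},J^{ab}$ exactly as the kinetic term does. What the paper's version buys in exchange is the explicit closed form of each second-representation commutator, which it quotes later; both arguments are complete.
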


\begin{proof}
Relations (\ref{odd-even}b),(\ref{odd-even}c) are easy to verify by an adaptation of the proof of Lemmas \ref{oddevenbmanyparticle1} and \ref{oddevencmanypartilcle1} respectively. Indeed let us first consider relations (\ref{odd-even}b) for $[J^{ab}, Q^{21c}]$, which  now takes the form (cf. (\ref{JabQ21cmanyparticle1rep}))
\begin{align*}
[J^{ab}, Q^{21c}]&= \frac{i}{2}\big( \epsilon^{bc} p_l \psi^{al} +  \epsilon^{ac} p_l \psi^{bl}  -i \epsilon^{ca} F_{lmn}  \psi^{dl}\psi_d^m \bar{\psi}^{bn}-i \epsilon^{cb} F_{lmn} \psi^{dl}\psi_d^m \bar{\psi}^{an} \big) \\
&= - \frac{i}{2} ( \epsilon^{cb} Q^a + \epsilon^{ca} Q^b) = i\epsilon^{c(a}Q^{|21|b)}\nonumber, 
\end{align*}
as required for (\ref{odd-even}b).

Further on, let us consider relations (\ref{odd-even}c) for $[I^{ab}, Q^{21c}]$. Expression (\ref{I22Q21amanyparticle1}) now takes the form
\begin{align*}
[I^{22}, Q^{21a}]&= -i [  \bar{\psi}^{dr} \bar{\psi}_d^r, Q^a]= i \big( p_l \bar{\psi}^{al} + i F_{lmn}  \bar{\psi}_b^l  \bar{\psi}^{bm} \psi^{an}  \big)= i \bar{Q}^a, 
\end{align*}
as required. The analogue of (\ref{I12Q21amanyparticle2}) is
\begin{align*}
[I^{12}, Q^{21a}] &= \frac{i}{2} \big( p_l \psi^{al} + i F_{lmn}  \psi^{bl} \psi_b^m \bar{\psi}^{an}  \big) = \frac{i}{2} Q^a, 
\end{align*}
which matches  (\ref{odd-even}c). Finally,  it is easy to see that $[I^{11}, Q^{21a}]=0$ (cf. (\ref{blah1}), (\ref{blah2}) in Lemma \ref{oddevencmanypartilcle1}).  Relations (\ref{odd-even}) for $S^a$ take the same form as in Lemmas \ref{oddevenbmanyparticle1} and \ref{oddevencmanypartilcle1}. The remaining relations can be checked in a similar way.
\end{proof}
 
\begin{lemma}\label{oddmanyparticle2rep} Let $Q^{abc}$, $I^{ab}$, $J^{ab}$,  $T^{ab}$  be given by formulas \eqref{newQ}, \eqref{newQbar}, \eqref{S}, \eqref{I}, \eqref{J}, \eqref{K},  \eqref{newD},   \eqref{newH}. Then relations (\ref{odd}) hold. 
\end{lemma}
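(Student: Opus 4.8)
Lemma \ref{oddmanyparticle2rep} asserts that the odd-odd anticommutation relations \eqref{odd} hold for the second representation. My plan is to follow the same structure used in Lemma \ref{oddmanyparticle1rep} for the first representation, adapting each computation to account for two changes: the absence of anti-symmetrisation in the cubic fermionic terms of $Q^a$, $\bar Q_c$ (see \eqref{newQ}, \eqref{newQbar}), and the modified dilatation generator \eqref{newD}. The key observation is that most of the brackets in \eqref{odd} reduce to computations already carried out, so I would organise the proof around the cases that genuinely differ.

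\begin{proof}
We verify \eqref{odd} case by case. The relation $\{Q^{21a}, Q^{22b}\}=-2H\epsilon^{ba}$ is exactly the content of Theorem \ref{hamiltonianmanyparticle2}. The relations $\{Q^{11a}, Q^{11b}\}=\{Q^{12a}, Q^{12b}\}=0$ follow immediately since $S^a$, $\bar S_a$ are unchanged from the first representation and are linear in the fermions. For $\{Q^{21a}, Q^{21b}\}=\{Q^a,Q^b\}$ the argument of Lemma \ref{oddmanyparticle1rep} goes through with the anti-symmetrisation removed: the cross terms of the form \eqref{QQmanypartilce1repform1}, \eqref{QQmanypartilce1repform2} still cancel, and the purely quartic-fermion contribution $F_{lmn}F_{rjk}\{\psi^{dl}\psi_d^m\bar\psi^{cn}, \psi^{br}\psi_b^j\bar\psi^{ak}\}$ reduces as in \eqref{QQmanyparticle1repform4} to terms that vanish by \eqref{WDVV}; the only difference is that the extra $F_{rjj}$-type terms collected in \eqref{QQmanyparticle1repform3} are now absent from the outset, so $\{Q^a,Q^b\}=0$ follows more directly. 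The remaining $su(1,1|2)$-type relations among $S^a,\bar S_a$ and the mixed brackets $\{Q^{11a},Q^{12b}\}$ are verified exactly as in the first representation, since those generators and their fermionic content are identical.

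The relations that must be recomputed are those producing the $I^{11}$ and $J^{ab}$ terms on the right-hand side of \eqref{odd}, namely $\{Q^{21a},Q^{11f}\}$ and $\{Q^{21a},Q^{12b}\}$, because these involve the coupling constant $\alpha$ through the condition \eqref{alphalambda} and through the shifted generator $D$ in \eqref{newD}. I would recompute $\{Q^{21a},Q^{11f}\}=-\{Q^a,S^f\}$ following \eqref{finalcon}: the leading term gives $-i\epsilon^{af}\psi_d^r\psi^{dr}$ and the cubic term contributes $i\epsilon^{af}x_k F_{krj}\psi_d^r\psi^{dj}$, which by \eqref{alphalambda} again collapses to $2(\alpha+1)\epsilon^{af}I^{11}$; since the cubic part of $Q^a$ in \eqref{newQ} differs from that in \eqref{Q} only by the anti-symmetrisation, and $S^f$ is linear in $x$, the relevant bracket $F_{rjk}\{\psi^{br}\psi_b^j\bar\psi^{ak}, x_l\psi^{fl}\}$ gives the same value. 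For $\{Q^{21a},Q^{12b}\}=-\epsilon^{bd}\{Q^a,\bar S_d\}$ I would repeat the computation leading to \eqref{Jab1}, using \eqref{fortermBB'2} and \eqref{alphalambda}; the result matches \eqref{Jab1}, and the identification with the right-hand side of \eqref{odd} in terms of $J^{ab}$ and $I^{12}$ proceeds exactly as in \eqref{Jab2}, \eqref{Jab2new}.

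The one point requiring care, and the main obstacle, is the appearance of $D$ in the closure of the superalgebra. In the first representation the $ix_rp_r$ contributions in \eqref{Jab1} combined with the imaginary constant in \eqref{D} to reproduce $T^{12}=-D$ correctly; here $D$ carries the shifted constant $\frac{i}{2}(\alpha+1)N$ from \eqref{newD} rather than $\frac{iN}{2}$. I would therefore track the additive c-number terms through \eqref{Jab1}--\eqref{Jab2new} carefully and confirm that the combination $x_rp_r\epsilon^{ab}+\tfrac{i}{2}N(2\alpha+1)\epsilon^{ab}$ is precisely $-2D\epsilon^{ab}$ for the new $D$, so that the relation $\{Q^{21a},Q^{12b}\}$ closes on the correct multiple of the even generators with the $\alpha$-dependent coefficients demanded by \eqref{odd}. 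Once the constant terms are matched, the remaining relations in \eqref{odd} follow by the same substitutions, completing the verification.
\end{proof}
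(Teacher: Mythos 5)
Your overall strategy coincides with the paper's: keep every bracket that is insensitive to the removal of the anti-symmetrisation and to the modified dilatation generator, note that $\{Q^{21a},Q^{22b}\}$ is Theorem \ref{hamiltonianmanyparticle2}, that $\{Q^{21a},Q^{21b}\}$ vanishes because \eqref{QQmanypartilce1repform1}, \eqref{QQmanypartilce1repform2} still cancel and the right-hand side of \eqref{QQmanyparticle1repform4} still vanishes, that the brackets involving only $S^a,\bar S_a$ are unchanged, and that $\{Q^{21a},Q^{11f}\}$ is unaffected (indeed the extra term $-\tfrac12 F_{rjk}\psi^{ar}\delta^{jk}$ contributed by the anti-symmetrisation anticommutes with $x_l\psi^{fl}$). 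The paper does exactly this and then recomputes only $\{Q^{21a},Q^{12b}\}$, which is also the step you isolate.

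However, the bookkeeping you propose for that step is wrong. The c-number $\tfrac{i}{2}N(2\alpha+1)\epsilon^{ab}$ in \eqref{Jab1} originates from the term $-\tfrac12\psi^{ar}\delta^{jk}$ in \eqref{weylmanyparticle1rep} combined with \eqref{alphalambda}; since that term is absent from \eqref{newQ}, the anticommutator $\{Q^{21a},Q^{12b}\}$ in the second representation carries \emph{no} c-number at all (this is \eqref{Jab21}). The identity you set out to verify, namely that $x_rp_r\epsilon^{ab}+\tfrac{i}{2}N(2\alpha+1)\epsilon^{ab}$ equals $-2D\epsilon^{ab}$ for $D$ as in \eqref{newD}, is false for general $\alpha$: the new $D$ gives $-2D\epsilon^{ab}=x_rp_r\epsilon^{ab}-i(\alpha+1)N\epsilon^{ab}$, and $-i(\alpha+1)N=\tfrac{i}{2}N(2\alpha+1)$ only when $\alpha=-\tfrac34$. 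The correct mechanism is that on the right-hand side of \eqref{odd} the constant $-i(\alpha+1)N\epsilon^{ab}$ from $-2\epsilon^{ab}D$ cancels against the constant $+i(\alpha+1)N\epsilon^{ab}$ contributed by $2(\alpha+1)\epsilon^{ab}I^{12}$ (recall $I^{12}$ contains the additive piece $\tfrac{iN}{2}$ by \eqref{I}), so that both sides are free of c-numbers. This is a local slip rather than a structural flaw --- performing the computation honestly forces the correction --- but as written the verification you describe would not close.
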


\begin{proof}
We first note that by Theorem \ref{hamiltonianmanyparticle2} we have $\{Q^a, \bar{Q}^c \}= \epsilon^{cb} \{ Q^a, \bar{Q}_b\} = -2 H \epsilon^{ca}$ which is the corresponding relation (\ref{odd}). 
The anticommutator $\{Q^{21a}, Q^{21b}\}$ vanishes since the terms (\ref{QQmanypartilce1repform1}), (\ref{QQmanypartilce1repform2}) cancel each other and the right-hand side of (\ref{QQmanyparticle1repform4}) vanishes. Further on it is immediate that $\{Q^{21a}, Q^{11f}\}$ is the same as in the first representation. Similarly for  $\{Q^{22a}, Q^{22b}\}$, $\{Q^{22a}, Q^{12f} \}$. Note also that  $\{Q^{11a}, Q^{11b}\}$, $\{Q^{12a}, Q^{12b}\}$, $\{Q^{11a}, Q^{12b}\}$ take the same form as in Lemma \ref{oddmanyparticle1rep}. 

Further on, let us consider $\{Q^{21a}, Q^{12b}\}$. The left-hand side of (\ref{odd}) now takes the form (cf.(\ref{Jab1}) and the change in the generator $D$)
\begin{align}\label{Jab21}
 \{Q^{21a}, Q^{12b}\}= -2i \psi^{ar}  \bar{\psi}^{br} + x_r p_r \epsilon^{ab} +2i (1+2\alpha)\psi^{br} \bar{\psi}^{ar},
\end{align}
and the right-hand side of (\ref{odd}) becomes (cf. (\ref{Jab2new})) 
\begin{align*}
\{Q^{21a}, Q^{12b}\} &=  x_r p_r \epsilon^{ab} + 4i \alpha \psi^{(ar}\bar{\psi}^{br)} -2i(1+\alpha) \epsilon^{ab} ( \psi^{2r} \bar{\psi}^{1r} - \psi^{1r} \bar{\psi}^{2r} )\nonumber \\
&= -2i \psi^{ar}  \bar{\psi}^{br} + x_r p_r \epsilon^{ab} +2i (1+2\alpha)\psi^{br} \bar{\psi}^{ar},
\end{align*}
which is equal to (\ref{Jab21}) as required.  The remaining relations can be checked similarly. 
\end{proof}

\begin{lemma} Let $T^{ab}$ and $Q^{abc}$ be given by \eqref{K}, \eqref{newD}, \eqref{newH}, \eqref{newQ},  \eqref{newQbar}, \eqref{S}. Then relations (\ref{odd-even}a) hold. 
\end{lemma}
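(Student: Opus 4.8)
The plan is to verify the commutation relations \eqref{odd-even}a, namely $[T^{ab}, Q^{cdf}]= i \epsilon^{c(a}Q^{b)df}$, for the second representation. Since $T^{11}=K$, $T^{22}=H$, and $T^{12}=-D$ are given by \eqref{K}, \eqref{newD}, \eqref{newH}, the task splits into commutators of each of these three generators with the supercharges $Q^a, \bar Q_a, S^a, \bar S_a$ (i.e.\ $Q^{21f}, Q^{22f}, Q^{11f}, Q^{12f}$). First I would dispatch the relations involving $K$ and $D$, which are purely kinematic. For $K=x^2$ the only nonzero commutators come from the momentum terms, so $[K, Q^a]=-2ix_r\psi^{ar}=iS^a$ and $[K,\bar Q_a]=iS_a$, while $[K,S^a]=[K,\bar S_a]=0$ since these involve only $x_j$. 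For $D$, the new constant shift $\tfrac{i}{2}(\alpha+1)N$ in \eqref{newD} commutes with everything, so the homogeneity argument is unchanged: $Q^a$ is homogeneous of degree $-1$ in $x$ giving $[D,Q^a]=\tfrac{i}{2}Q^a$, and $S^a$ is homogeneous of degree $+1$ giving $[D,S^a]=-\tfrac i2 S^a$, each matching the corresponding case of \eqref{odd-even}a.

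The substantive part is the relations involving $H$, specifically $[H,S^a]=iQ^a$ (and its conjugate $[H,\bar S_a]=i\bar Q_a$), together with $[H,Q^a]=[H,\bar Q_a]=0$. For the latter I would use the trick already employed in Lemma \ref{oddmanyparticle1rep}: by Theorem \ref{hamiltonianmanyparticle2} we have $H=-\tfrac12\{Q^a,\bar Q_a\}$ (no sum), and since $\{Q^a,Q^a\}=(Q^a)^2=0$ one gets $[H,Q^a]=0$ directly from the graded Jacobi identity, and similarly $[H,\bar Q_a]=0$. The main computation is therefore $[H,S^a]=iQ^a$. Here I would commute $S^a=-2x_j\psi^{aj}$ against the Hamiltonian \eqref{newH} term by term. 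The $\tfrac{p^2}{4}$ piece produces $ip_r\psi^{ar}$ exactly as in \eqref{HS^amanyparticle1rep}. The fermionic quartic-and-quadratic term $-\tfrac{\partial_r F_{jlk}}{2}(\psi^{br}\psi_b^j\bar\psi_d^l\bar\psi^{dk}-\psi_b^r\bar\psi^{bj}\delta^{lk})$ can be handled by re-using relations \eqref{2)}, \eqref{3)}, \eqref{blah} from Lemma \ref{oddevenamanyparticle1}, since that fermionic structure is identical to the first representation; after multiplying by $x_m$ from $S^a$ and using homogeneity this contributes $-F_{rjl}\psi^{br}\psi_b^j\bar\psi^{al}$, which is precisely the cubic fermion term of $Q^a$ in \eqref{newQ} (without anti-symmetrisation).

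The genuinely new feature, and the step I expect to be the main obstacle, is the extra linear-in-momentum term $\tfrac{i}{4}\delta^{nm}F_{rmn}p_r$ in \eqref{newH}, which has no analogue in the first representation's Hamiltonian \eqref{hamiltonianformulamanyparticle1rep}. I would compute $[\tfrac{i}{4}\delta^{nm}F_{rmn}p_r, S^a]=-\tfrac{i}{2}\delta^{nm}F_{rmn}[p_r x_j,\psi^{aj}]$-type contribution; since $S^a$ contains $x_j$, this commutator is nonzero and produces a term proportional to $x_j\partial_? F_{rmn}p_r\psi^{aj}$ together with a term from $[p_r,x_j]=-i\delta_{rj}$. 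The key point will be to show these extra contributions either cancel among themselves or combine correctly so that the total still assembles into $iQ^a$ with the non-anti-symmetrised cubic term of \eqref{newQ}. I anticipate that the condition \eqref{alphalambda}, $x_r F_{rjk}=-(2\alpha+1)\delta_{jk}$, and the degree $-1$ homogeneity of $F_{rjk}$ are exactly what is needed to make this work, mirroring how \eqref{alphalambda} entered the verification of \eqref{odd} in Lemma \ref{oddmanyparticle2rep}; the shifted constant in $D$ was precisely chosen to be consistent with this. Once $[H,S^a]=iQ^a$ is established, the conjugate relation $[H,\bar S_a]=i\bar Q_a$ follows by an identical computation, and the remaining cases of \eqref{odd-even}a are then either trivial or follow by the same arguments, completing the proof.
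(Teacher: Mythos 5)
Your proposal follows essentially the same route as the paper: kinematic checks for $K$ and $D$, the identity $H=-\tfrac12\{Q^a,\bar Q_a\}$ together with $(Q^a)^2=0$ for $[H,Q^a]=0$, and a term-by-term computation of $[H,S^a]$ reusing formula \eqref{blah}, with the new term $\tfrac{i}{4}\delta^{nm}F_{rmn}p_r$ contributing $-\tfrac12\delta^{nm}F_{rnm}\psi^{ar}$ via $[p_r,x_j]$, which cancels exactly the $\tfrac12 F_{rjl}\delta^{jl}\psi^{ar}$ left over once the anti-symmetrisation is dropped from the cubic term --- note that only the degree $-1$ homogeneity of $F_{rjk}$ is needed for this cancellation, not condition \eqref{alphalambda}. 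The only other caveat is a sign: with the paper's conventions one gets $[D,Q^a]=-\tfrac i2 Q^a$ and $[D,S^a]=+\tfrac i2 S^a$ (equivalently $[T^{12},Q^{21a}]=\tfrac i2 Q^{21a}$ and $[T^{12},Q^{11a}]=-\tfrac i2 Q^{11a}$), opposite to what you wrote, though this does not affect the structure of the argument.
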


\begin{proof}
Firstly, it is easy to see that $[T^{11}, Q^{21a}]=-[K, Q^a]= -2i x_r \psi^{ar} = iS^a$, and $[T^{11}, Q^{11a}]=[K,S^a]=0$, and  $[T^{12}, Q^{11a}]=-[D, S^a]= -\frac{i}{2}Q^{11a}$. Moreover, we have
 $[T^{12}, Q^{21a}]= \frac{i}{2} Q^{21a}$ as $Q^a$ is homogeneous in $x$ of degree $-1$. 

Let us recall that from the proof of Lemma \ref{oddevenamanyparticle1} (formula (\ref{blah})) we have 
\begin{align*}
\partial_r F_{jlk} [ \psi^{br} \psi_b^j \bar{\psi}_d^l \bar{\psi}^{dk}- \psi_b^r \bar{\psi}^{bj} \delta^{lk}, \psi^{am}]=\delta^{km} \partial_k F_{rjl} ( \psi^{br} \psi_b^j \bar{\psi}^{al}- \frac{1}{2} \delta^{jl} \psi^{ar}). 
\end{align*}
Therefore an analogue of (\ref{HS^amanyparticle1rep}) takes the form
\begin{align*}
[H, S^a]&= -\frac{1}{2}  [p_r^2,  x_m \psi^{am}] + x_m \partial_r F_{jlk} [ \psi^{br} \psi_b^j \bar{\psi}_d^l \bar{\psi}^{dk}- \psi_b^r \bar{\psi}^{bj} \delta^{lk}, \psi^{am}] - \frac{1}{2}\delta^{nm} F_{rnm} \psi^{ar}\\
&= i p_r \psi^{ar} - F_{rjl} \psi^{br} \psi_b^j \bar{\psi}^{al}= iQ^a, 
\end{align*}
as required for the corresponding relation  (\ref{odd-even}a). Further on, we have that $[T^{22}, Q^a]=0$ and similarly, $[T^{22}, \bar{Q}_a]=0$, (cf. Lemma \ref{oddevenamanyparticle1}). The remaining relations can be checked similarly. 
\end{proof}

\begin{lemma}
Let $T^{ab}$, $I^{ab}$, $J^{ab}$ be given by \eqref{K}, \eqref{newD},  \eqref{newH},  \eqref{I}, \eqref{J}. Then relations (\ref{thezerorelations}) hold. 
\end{lemma}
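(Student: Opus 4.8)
The plan is to mirror the proof of Lemma~\ref{thezerothrelationslemma}, exploiting the fact that almost all of the ingredients carry over unchanged to the second representation. The relations in \eqref{thezerorelations} split into three families, $[T^{ab},I^{cd}]=0$, $[T^{ab},J^{ef}]=0$ and $[I^{cd},J^{ef}]=0$, and I would dispose of them in order of increasing difficulty.

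First I would observe that $I^{ab}$ and $J^{ab}$ are given by the very same formulas \eqref{I}, \eqref{J} as in the first representation, so the purely fermionic relations $[I^{cd},J^{ef}]=0$ are identical to those established in Lemma~\ref{thezerothrelationslemma}; the computation via \eqref{com1} and \eqref{forproofofIS} requires no modification. Next, for $T^{11}=K=x^2$ and $T^{12}=-D$ with $D$ given by \eqref{newD}, both generators are purely bosonic up to the additive c-number $\tfrac{i}{2}(\alpha+1)N$ in $D$, which is central. Since $I^{cd}$ and $J^{ef}$ are built solely from the fermions $\psi,\bar\psi$, which commute with $x_j$, $p_j$ and with constants, it follows immediately that $[K,I^{cd}]=[K,J^{ef}]=0$ and $[D,I^{cd}]=[D,J^{ef}]=0$.

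The only substantive part is the commutator with $T^{22}=H$. Here I would use the identity $H=-\tfrac12\{Q^c,\bar Q_c\}$ supplied by Theorem~\ref{hamiltonianmanyparticle2} and expand, exactly as in Lemma~\ref{thezerothrelationslemma},
\[
[I^{ab},H]=-\tfrac12\bigl(Q^c[I^{ab},\bar Q_c]+[I^{ab},Q^c]\bar Q_c+\bar Q_c[I^{ab},Q^c]+[I^{ab},\bar Q_c]Q^c\bigr),
\]
and similarly for $[J^{ab},H]$. The explicit values of $[I^{ab},Q^c]$, $[I^{ab},\bar Q_c]$, $[J^{ab},Q^c]$, $[J^{ab},\bar Q^c]$ were just recorded, since the odd--even relations \eqref{odd-even}b,c take the same form in the second representation, and the odd relations \eqref{odd} were verified in Lemma~\ref{oddmanyparticle2rep}. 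Feeding these into the expansion and lowering or raising indices with $\epsilon$ reproduces verbatim the cancellations of Lemma~\ref{thezerothrelationslemma}, yielding $[I^{11},H]=[I^{22},H]=[I^{12},H]=0$ and $[J^{11},H]=[J^{12},H]=[J^{22},H]=0$.

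The point I would check most carefully --- and the only place the second representation could in principle interfere --- is that $H$ enters the argument exclusively through its form as the anticommutator $-\tfrac12\{Q^c,\bar Q_c\}$, so that the different bosonic structure of \eqref{newH} relative to \eqref{hamiltonianformulamanyparticle1rep} is irrelevant: everything is controlled by the odd and odd--even relations, which coincide with those of the first representation. I do not expect any genuine obstacle here; the main task is bookkeeping to confirm that the shifted constant in \eqref{newD} is central and that the supercharge relations are indeed index-for-index the same, after which the proof is a transcription of Lemma~\ref{thezerothrelationslemma}.
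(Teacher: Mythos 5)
Your proposal is correct and follows essentially the same route as the paper, which likewise reduces everything to the observation that $I^{ab}$, $J^{ab}$ are unchanged from the first representation and that the commutators with $H$ in Lemma \ref{thezerothrelationslemma} use only the presentation $H=-\tfrac12\{Q^c,\bar Q_c\}$ together with the odd and odd--even relations. The extra checks you flag (centrality of the constant shift in \eqref{newD}, coincidence of the supercharge commutation relations) are exactly the points the paper implicitly relies on.
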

The proof of the lemma is the same as the proof of Lemma \ref{thezerothrelationslemma} for the first  representation since $I^{ab}$ and $J^{ab}$ keep the same form, and the proof of commutation relations with $H$ in Lemma~\ref{thezerothrelationslemma} relies only on relations (\ref{odd}) which express $H$ as the anticommutator of the supercharges $Q^a$ and $\bar Q_a$.

\section{Hamiltonians}
\label{Hamsect}

We now proceed to explicit calculations of Hamiltonians appearing in  Theorem~\ref{hamiltonian1manyparticle} and Theorem \ref{hamiltonianmanyparticle2}. We start with a Coxeter root system case. 

\subsection{Coxeter systems}

In this case we take $\mathcal R$ to be a Coxeter root system in $V\cong \mathbb R^N$ \cite{cox}. More exactly, let $\mathcal  R$ be a collection of vectors which spans $V$ and is invariant under orthogonal reflections %$s_\gamma$ 
about all the hyperplanes $(\gamma, x)=0$, $\gamma\in \mathcal  R$, where $(\cdot,  \cdot)$ is the standard scalar product in $V$. We also assume that $\mathcal R$ can be decomposed as a disjoint union of its subsets $\mathcal  R_+$ and $- \mathcal R_+$ such that each subsystem $\mathcal  R_+$ and $\mathcal R_-$ contains no collinear vectors. Furthermore, let us assume that squared length $(\gamma, \gamma)=2$ for any $\gamma \in  \mathcal R$, and that $\mathcal R$ is irreducible. Non-equal choices of length of roots in the cases when the Coxeter group has two orbits on $\mathcal R$ are covered by considerations in Subsection \ref{genralV} below.

The corresponding function $F$ has the form
\begin{equation}
\label{prepotCox}
F(x_1, \ldots, x_N) = \frac{\lambda}{2}\sum_{\gamma \in \mathcal R_+} (\gamma, x)^2 \log (\gamma, x),
\end{equation}
where $\lambda \in {\mathbb C}$. 
It is established in \cite{MG}, \cite{Veselov} that $F$ satisfies generalized WDVV equations \eqref{WDVV}. 

Recall the following property. 
\begin{lemma}[Chapter $5$, \cite{bourbaki}]\label{coxeterno}
For any $u, v \in V$ 
\begin{align*}
\sum_{\gamma \in \mathcal R_+} (\gamma,u) (\gamma, v) = h (u, v),
\end{align*}
where $h$ is the Coxeter number of $\mathcal R$. 
\end{lemma}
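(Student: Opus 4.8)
The plan is to recognise the left-hand side as a $W$-invariant symmetric bilinear form on $V$, where $W$ is the finite Coxeter group generated by the reflections in $\mathcal R$, and then to identify it by a Schur-type argument. First I would introduce
\[
B(u,v) = \sum_{\gamma \in \mathcal R_+} (\gamma, u)(\gamma, v),
\]
which is clearly symmetric and bilinear. Since $(-\gamma,u)(-\gamma,v) = (\gamma,u)(\gamma,v)$ and $\mathcal R = \mathcal R_+ \sqcup (-\mathcal R_+)$, one has $\sum_{\gamma \in \mathcal R}(\gamma,u)(\gamma,v) = 2B(u,v)$. The key point is that every $w \in W$ is orthogonal and permutes $\mathcal R$, so substituting $\gamma \mapsto w\gamma$ in the full sum gives $B(wu,wv) = B(u,v)$; thus $B$ is $W$-invariant.

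Writing $B(u,v) = (Mu,v)$ for a linear operator $M$ on $V$, symmetry of $B$ makes $M$ self-adjoint, and $W$-invariance together with orthogonality of the $W$-action gives $wMw^{-1} = M$ for all $w \in W$. Being self-adjoint, $M$ is diagonalisable with real eigenvalues, and each of its eigenspaces is $W$-invariant because $M$ commutes with the action. Since $\mathcal R$ is irreducible, the reflection representation of $W$ on $V$ is irreducible, so $M$ can have only one eigenspace, namely $V$ itself; hence $M = c\,\mathrm{Id}$ and $B(u,v) = c\,(u,v)$ for some constant $c$.

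To evaluate $c$ I would take traces. Choosing an orthonormal basis $e_1,\dots,e_N$ of $V$ and summing $B(e_i,e_i)$ yields $cN$ on one side and
\[
\sum_{i=1}^N \sum_{\gamma \in \mathcal R_+} (\gamma, e_i)^2 = \sum_{\gamma \in \mathcal R_+} (\gamma,\gamma) = 2|\mathcal R_+|
\]
on the other, using the normalisation $(\gamma,\gamma)=2$. Thus $c = 2|\mathcal R_+|/N = |\mathcal R|/N$, and the classical relation $|\mathcal R| = N h$ for an irreducible finite Coxeter system gives $c = h$, as required.

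The only genuinely non-trivial inputs are the two classical facts about finite Coxeter groups that I would cite directly: that irreducibility of $\mathcal R$ implies irreducibility of the reflection representation (so that the eigenspace argument collapses $M$ to a scalar), and that the number of roots equals the rank times the Coxeter number. Everything else is a routine manipulation of the defining sum.
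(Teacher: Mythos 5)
Your proof is correct, and it is essentially the standard argument: the paper itself does not prove this lemma but cites Bourbaki, where the result is obtained in the same way — $W$-invariance of the form $\sum_{\gamma}(\gamma,u)(\gamma,v)$ plus irreducibility of the reflection representation force proportionality to $(u,v)$, and the constant is fixed by a trace computation together with $|\mathcal R| = Nh$. All the ingredients you use (the normalisation $(\gamma,\gamma)=2$, irreducibility of $\mathcal R$) are among the paper's standing assumptions, so nothing is missing.
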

Lemma \ref{coxeterno} has the following corollary.

\begin{lemma}\label{constraintonF} Let $F$ be given by (\ref{prepotCox}). Then
\begin{align*}
x_i F_{ijk} = \lambda h \delta_{jk}.
\end{align*}
%
% with 
%
%\begin{align*}
%\lambda= -\frac{(1+2\alpha)}{h}.
%\end{align*}
\end{lemma}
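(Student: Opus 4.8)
The plan is to compute $x_i F_{ijk}$ directly from the explicit form \eqref{prepotCox} of $F$ and then invoke Lemma~\ref{coxeterno} to collapse the resulting sum over roots into a multiple of $\delta_{jk}$. First I would record the relevant third derivatives of $F$. Since $F = \frac{\lambda}{2}\sum_{\gamma \in \mathcal R_+} (\gamma,x)^2 \log(\gamma,x)$ and each summand depends on $x$ only through the linear form $(\gamma,x)$, differentiating three times with respect to $x_i, x_j, x_k$ brings down the factors $\gamma_i \gamma_j \gamma_k$ (the components of $\gamma$) together with the third derivative of $\tfrac{\lambda}{2}t^2\log t$ in the variable $t = (\gamma,x)$. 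A short calculation gives $\frac{d^3}{dt^3}\left(\tfrac{\lambda}{2}t^2 \log t\right) = \tfrac{\lambda}{t}$, so that
\begin{align*}
F_{ijk} = \lambda \sum_{\gamma \in \mathcal R_+} \frac{\gamma_i \gamma_j \gamma_k}{(\gamma,x)}.
\end{align*}

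Next I would contract with $x_i$. Summing over $i$ produces $\sum_i \gamma_i x_i = (\gamma, x)$ in the numerator, which cancels exactly against the denominator $(\gamma,x)$. This is the key simplification: it removes the $x$-dependence entirely, leaving
\begin{align*}
x_i F_{ijk} = \lambda \sum_{\gamma \in \mathcal R_+} \gamma_j \gamma_k.
\end{align*}

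Finally I would apply Lemma~\ref{coxeterno} with $u$ and $v$ taken to be the coordinate basis vectors $e_j$ and $e_k$, so that $(\gamma, e_j) = \gamma_j$ and $(\gamma, e_k) = \gamma_k$. The lemma then gives $\sum_{\gamma \in \mathcal R_+} \gamma_j \gamma_k = h\,(e_j, e_k) = h\,\delta_{jk}$, whence $x_i F_{ijk} = \lambda h\, \delta_{jk}$, as claimed.

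I do not expect any genuine obstacle here, since the statement is called a corollary and the mechanism is precisely the cancellation described above followed by the quoted lemma; the only points requiring mild care are the correct constant in the third derivative of $t^2 \log t$ and keeping the summation conventions straight so that the contraction with $x_i$ is performed before invoking Lemma~\ref{coxeterno}. One could note in passing that this result is consistent with \eqref{alphalambda}: comparing $x_r F_{rjk} = \lambda h \delta_{jk}$ with $x_r F_{rjk} = -(2\alpha+1)\delta_{jk}$ identifies the parameter $\alpha$ via $\lambda h = -(2\alpha+1)$, which fixes the value of $\alpha$ entering the superalgebra in terms of the coupling $\lambda$ and the Coxeter number $h$.
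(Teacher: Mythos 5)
Your proposal is correct and matches the paper's proof essentially verbatim: the paper likewise writes $x_i F_{ijk} = \lambda \sum_{\gamma \in \mathcal R_+} \frac{x_i \gamma_i \gamma_j \gamma_k}{(\gamma, x)} = \lambda \sum_{\gamma \in \mathcal R_+} \gamma_j \gamma_k$ and then applies Lemma \ref{coxeterno} with $u=e_j$, $v=e_k$. Your computed third derivative $\frac{d^3}{dt^3}\bigl(\tfrac{\lambda}{2}t^2\log t\bigr)=\tfrac{\lambda}{t}$ and the resulting constant are also correct.
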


\begin{proof}
Let $\gamma\in \mathcal R$ have coordinates $\gamma = (\gamma_1, \ldots, \gamma_N)$. By   Lemma \ref{coxeterno} we have
\begin{align*}
x_i F_{ijk} = \lambda \sum_{\gamma \in \mathcal R_+} \frac{x_i \gamma_i \gamma_j \gamma_k}{(\gamma, x)} = \lambda \sum_{\gamma \in \mathcal R_+} \gamma_j \gamma_k = \lambda h (e_j, e_k)= \lambda h \delta_{jk}. 
\end{align*} 
\end{proof}

The following identity will be useful below:
\begin{align}\label{thezeroterm}
\sum_{\substack{\beta, \gamma \in \mathcal R_+ \\ \beta \neq \gamma}} \frac{(\beta, \gamma)}{(\beta, x)(\gamma, x)}= 0. 
%\frac{1}{4}\sum_{\substack{\beta \neq \gamma \\ \beta, \gamma \in R_+}}\frac{(\gamma, \beta)}{\gamma(x) \beta(x)}=0.
\end{align}
It follows from the observation that the left-hand side is non-singular at all the hyperplanes $(\beta,x)=0$, $\beta \in \mathcal R_+$.

Let us choose now
\begin{equation}
\label{alpha}
\alpha = -\frac{h \lambda +1}{2}.
\end{equation}
Then $h\lambda = -(2\alpha+1)$, so by Lemma  \ref{constraintonF} function $F$ satisfies the required condition \eqref{alphalambda}. Thus it leads to $D(2,1; \alpha)$ superconformal mechanics with the Hamiltonians given by Theorems  \ref{hamiltonian1manyparticle}, \ref{hamiltonianmanyparticle2}. We now simplify these Hamiltonians for the root system case. 

\begin{theorem}
\label{1strepham}
Let function $F$ be given by \eqref{prepotCox}.  
Then the Hamiltonian $H$ given by (\ref{hamiltonianformulamanyparticle1rep})  is supersymmetric with the superconformal algebra $D(2,1; \alpha)$, where $\alpha$ is given by \eqref{alpha}. The rescaled Hamiltonian
 $H_1=4H$ has the form
\begin{align*}
H_1= -\Delta  +  \sum_{\gamma \in \mathcal R_+} \frac{2 \lambda (\lambda+1)}{(\gamma, x)^2} + \Phi,
\end{align*}
where $\Delta = - p^2$ is the Laplacian in $V$ and the fermionic term
\begin{equation}
\label{Phi}
\Phi= 2 \lambda  \sum_{\gamma \in \mathcal R_+} \frac{\gamma_i \gamma_j \gamma_k \gamma_l}{(\gamma, x)^2}  \psi^{bi} \psi_b^j  \bar{\psi}_d^l \bar{\psi}^{dk} - 4 \lambda \sum_{\gamma \in \mathcal R_+} \frac{\gamma_i \gamma_j}{(\gamma, x)^2} \psi_b^i \bar{\psi}^{bj}.
\end{equation}
\end{theorem}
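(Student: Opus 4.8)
The plan is to treat the two assertions separately. For the superconformal statement I would simply verify that the prepotential \eqref{prepotCox} meets the standing hypotheses on $F$ and then invoke the results of Section \ref{1st rep section}. The third derivatives $F_{rjk}$ are homogeneous of degree $-1$ by inspection, the generalised WDVV equations \eqref{WDVV} hold by \cite{MG}, \cite{Veselov}, and Lemma \ref{constraintonF} gives $x_iF_{ijk}=\lambda h\,\delta_{jk}$. With the choice \eqref{alpha} one has $\lambda h=-(2\alpha+1)$, so this is exactly the required identity \eqref{alphalambda}. Hence all hypotheses of Theorem \ref{hamiltonian1manyparticle} and of the lemmas of Section \ref{1st rep section} establishing the $D(2,1;\alpha)$ relations are satisfied, and the generators \eqref{Q}--\eqref{S} close into $D(2,1;\alpha)$ with $H$ supersymmetric. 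This part is bookkeeping; the substance lies in evaluating $H_1=4H$ explicitly.

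For the explicit form I would first record the two derivatives needed. Differentiating \eqref{prepotCox} gives $F_{ijk}=\lambda\sum_{\gamma\in\mathcal R_+}\gamma_i\gamma_j\gamma_k/(\gamma,x)$, and hence $\partial_iF_{jlk}=-\lambda\sum_{\gamma\in\mathcal R_+}\gamma_i\gamma_j\gamma_k\gamma_l/(\gamma,x)^2$. Substituting the latter into the two fermionic terms of \eqref{hamiltonianformulamanyparticle1rep} and multiplying by $4$ reproduces the two summands of $\Phi$ in \eqref{Phi} at once: the quartic term acquires the coefficient $2\lambda$, while the quadratic term, after contracting $\delta^{lk}$ and using $(\gamma,\gamma)=2$, acquires $-4\lambda$. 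Since $p^2=-\Delta$, the fermionic part $\Phi$ and the Laplacian are then immediate.

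The remaining work, and the only genuinely non-routine step, is to collect the two purely bosonic contributions into the stated potential. The term coming from $-\tfrac12\partial_iF_{jlk}\,\delta^{ij}\delta^{lk}=-\tfrac12\partial_iF_{ill}$ evaluates, using $(\gamma,\gamma)=2$ twice, to $2\lambda\sum_{\gamma\in\mathcal R_+}(\gamma,x)^{-2}$. For the quadratic-in-$F$ term I would contract the three Kronecker deltas to reduce $\tfrac14F_{ijk}F_{lmn}\delta^{nm}\delta^{jl}\delta^{ik}$ to $\tfrac14\sum_{i,j,m}F_{iji}F_{jmm}$; since $F_{iji}=2\lambda\sum_{\gamma}\gamma_j/(\gamma,x)$ this becomes $\lambda^2\sum_{\gamma,\beta\in\mathcal R_+}(\gamma,\beta)\big/\big((\gamma,x)(\beta,x)\big)$. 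Here is the crux: the off-diagonal part $\gamma\neq\beta$ vanishes by the identity \eqref{thezeroterm}, arising from non-singularity at each hyperplane, so only the diagonal survives and yields $2\lambda^2\sum_{\gamma}(\gamma,x)^{-2}$ via $(\gamma,\gamma)=2$. Adding the two bosonic pieces gives $2\lambda(\lambda+1)\sum_{\gamma\in\mathcal R_+}(\gamma,x)^{-2}$, completing the formula for $H_1$. I expect this cross-term cancellation to be the main obstacle; everything else is direct differentiation together with the normalisation $(\gamma,\gamma)=2$.
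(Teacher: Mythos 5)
Your proposal is correct and follows essentially the same route as the paper: verify the hypotheses \eqref{WDVV} and \eqref{alphalambda} via \cite{MG}, \cite{Veselov} and Lemma \ref{constraintonF} to invoke the results of Section \ref{1st rep section}, then substitute $\partial_iF_{jlk}=-\lambda\sum_{\gamma}\gamma_i\gamma_j\gamma_l\gamma_k/(\gamma,x)^2$ into \eqref{hamiltonianformulamanyparticle1rep}, using $(\gamma,\gamma)=2$ and the non-singularity identity \eqref{thezeroterm} to reduce the double sum to its diagonal. All coefficients check out, including the key cancellation of cross terms that yields $2\lambda(\lambda+1)$.
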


\begin{proof}
By formula (\ref{hamiltonianformulamanyparticle1rep}) we have  that 
\begin{align*}
H= \frac{p^2}{4} + \Psi + U, 
\end{align*}
where potential 
$$
U= - \frac{1}{8} \partial_i F_{jlk} \delta^{ij} \delta^{lk} + \frac{1}{16} F_{ijk} F_{lmn} \delta^{nm} \delta^{jl} \delta^{ik}
$$
and
$$
\Psi = -\frac12 \partial_iF_{jlk} ( \psi^{bi} \psi_b^j  \bar{\psi}_d^l \bar{\psi}^{dk} - \psi_b^i \bar{\psi}^{bj} \delta^{lk} ).
$$
 Let us firstly simplify $U$. We have
\begin{align*}
F_{jlk} = \lambda  \sum_{\gamma \in  \mathcal R_+ } \frac{\gamma_j \gamma_l \gamma_k}{(\gamma, x)}.
\end{align*}
Then
\begin{align}\label{root1}
\partial_i F_{jlk}  \delta^{ij} \delta^{lk} &= - \lambda \sum_{\gamma \in \mathcal R_+} \frac{\gamma_i \gamma_j \gamma_l \gamma_k}{(\gamma, x)^2}  \delta^{ij} \delta^{lk} = - 4 \lambda \sum_{\gamma \in \mathcal R_+} \frac{1}{(\gamma, x)^2}
\end{align}
and
\begin{align}\label{FF}
F_{ijk} F_{lmn} \delta^{nm} \delta^{jl} \delta^{ik} = 4\lambda^2 \sum_{\beta, \gamma  \in \mathcal R_+} \frac{(\beta, \gamma)}{(\beta, x)(\gamma, x)}=   \sum_{\gamma \in \mathcal R_+ } \frac{8 \lambda^2}{(\gamma, x)^2}
\end{align}
because of identity \eqref{thezeroterm}.
The statement follows from formulas (\ref{root1}), \eqref{FF}. 
\end{proof}

The following theorem can be easily checked directly.

\begin{theorem}
\label{2ndrepham}
 For the function $F$ given by \eqref{prepotCox}  the Hamiltonian $H$ given by (\ref{newH})  is supersymmetric with the superconformal algebra $D(2,1; \alpha)$, where $\alpha$ is given by \eqref{alpha}. The rescaled Hamiltonian
 $H_2=4H$ has the form
\begin{align*}
H_2= -\Delta  +  \sum_{\gamma \in \mathcal R_+} \frac{2 \lambda}{(\gamma, x)}\partial_\gamma + \Phi,
\end{align*}
where $\Phi$ is defined by \eqref{Phi}. 
\end{theorem}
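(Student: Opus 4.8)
The plan is to deduce everything from the general second-representation results once the Coxeter prepotential \eqref{prepotCox} is shown to satisfy their hypotheses, and then to simplify the abstract Hamiltonian \eqref{newH} using the explicit form of $F$. First I would record that the supersymmetry claim is immediate: by Lemma \ref{constraintonF} together with the choice \eqref{alpha} we have $x_i F_{ijk} = \lambda h \delta_{jk} = -(2\alpha+1)\delta_{jk}$, so $F$ obeys \eqref{alphalambda}; the derivatives $F_{rjk}$ are manifestly homogeneous of degree $-1$; and $F$ satisfies the WDVV equations \eqref{WDVV} by \cite{MG, Veselov}. Hence Theorem \ref{hamiltonianmanyparticle2}, Lemma \ref{conformalinvarianceF2}, Lemma \ref{oddmanyparticle2rep} and the remaining lemmas of Section \ref{2nd rep section} (verifying \eqref{JI}, \eqref{odd-even}, \eqref{thezerorelations}) apply verbatim, giving the superconformal algebra $D(2,1;\alpha)$ with $\alpha$ as in \eqref{alpha}.

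The remaining content is the explicit computation of $H_2 = 4H$ from \eqref{newH}, which I would carry out term by term. Multiplying \eqref{newH} by $4$ gives three pieces: $p^2 = -\Delta$; the fermionic piece $-2\partial_r F_{jlk}(\psi^{br}\psi_b^j\bar\psi_d^l\bar\psi^{dk} - \psi_b^r\bar\psi^{bj}\delta^{lk})$; and the first-order piece $i\delta^{nm}F_{rmn}p_r$. Using $F_{jlk} = \lambda\sum_{\gamma\in\mathcal R_+}\gamma_j\gamma_l\gamma_k/(\gamma,x)$ I would compute $\partial_r F_{jlk} = -\lambda\sum_{\gamma\in\mathcal R_+}\gamma_r\gamma_j\gamma_l\gamma_k/(\gamma,x)^2$, so that the fermionic piece becomes $2\lambda\sum_{\gamma\in\mathcal R_+}\gamma_r\gamma_j\gamma_l\gamma_k/(\gamma,x)^2\,(\psi^{br}\psi_b^j\bar\psi_d^l\bar\psi^{dk} - \psi_b^r\bar\psi^{bj}\delta^{lk})$. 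The first summand reproduces the quartic term of $\Phi$ after relabelling $r\to i$, and in the second summand the contraction $\delta^{lk}\gamma_l\gamma_k = (\gamma,\gamma) = 2$ produces the factor $-4\lambda$ of the quadratic term of $\Phi$, so this piece is exactly $\Phi$ as defined in \eqref{Phi}.

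For the first-order piece I would use $\delta^{nm}F_{rmn} = \sum_m F_{rmm} = \lambda\sum_{\gamma\in\mathcal R_+}\gamma_r(\gamma,\gamma)/(\gamma,x) = 2\lambda\sum_{\gamma\in\mathcal R_+}\gamma_r/(\gamma,x)$, again by $(\gamma,\gamma)=2$. Writing $p_r = -i\partial/\partial x_r$ and $\partial_\gamma = \gamma_r\,\partial/\partial x_r$ gives $\gamma_r p_r = -i\partial_\gamma$, whence $i\delta^{nm}F_{rmn}p_r = 2\lambda\sum_{\gamma\in\mathcal R_+}\partial_\gamma/(\gamma,x)$, which is the middle term of $H_2$. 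Collecting the three contributions yields the claimed formula.

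Finally I would note where the (modest) content lies. There is no purely potential term here, in contrast with the first representation $H_1$ of Theorem \ref{1strepham}: the role played there by the quadratic potential $\sum_{\gamma\in\mathcal R_+}2\lambda(\lambda+1)/(\gamma,x)^2$ is replaced by the first-order operator $\sum_{\gamma\in\mathcal R_+}2\lambda\,\partial_\gamma/(\gamma,x)$, so $H_2$ is not formally self-adjoint, matching the potential-free gauge mentioned in the introduction. Unlike the $H_1$ computation, no appeal to the cancellation identity \eqref{thezeroterm} is needed, since the quadratic-in-$F$ term is absent from \eqref{newH}; the only structural input beyond differentiation is the normalisation $(\gamma,\gamma)=2$, which I expect to be the single point requiring care when matching coefficients.
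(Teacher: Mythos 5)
Your verification is correct and is precisely the ``direct check'' the paper alludes to when it states that Theorem \ref{2ndrepham} ``can be easily checked directly'': the supersymmetry part follows by feeding Lemma \ref{constraintonF} and the choice \eqref{alpha} into the general results of Section \ref{2nd rep section}, and the term-by-term simplification of $4H$ (including the observation that $(\gamma,\gamma)=2$ is the only input and that \eqref{thezeroterm} is not needed, in contrast with $H_1$) matches the structure of the proof of Theorem \ref{1strepham}. No gaps.
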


\begin{proposition}
Hamiltonians $H_1, H_2$ from Theorems \ref{1strepham}, \ref{2ndrepham} satisfy gauge relation
\begin{align*}
\delta^{-1} \circ H_2 \circ \delta= H_1,
\end{align*}
where $\delta= \prod_{\beta \in \mathcal R_+} (\beta, x)^\lambda$. 
\end{proposition}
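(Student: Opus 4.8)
The plan is to conjugate $H_2$ term by term by $\delta=\prod_{\beta\in\mathcal R_+}(\beta,x)^\lambda$ and check that the result reproduces $H_1$. The first observation is that the fermionic term $\Phi$ in \eqref{Phi} is a multiplication operator in the bosonic coordinates $x$ tensored with fermion bilinears and quadrilinears, and $\delta$ is a function of $x$ alone; hence $\delta$ commutes with $\Phi$ and $\delta^{-1}\Phi\,\delta=\Phi$. Thus the fermionic parts of $H_1$ and $H_2$ agree automatically, and the whole content of the proposition lives in the purely bosonic differential operators.

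Next I would record the elementary conjugation identity $\delta^{-1}\partial_i\,\delta=\partial_i+(\partial_i\log\delta)$, where $\partial_i\log\delta=\lambda\sum_{\beta\in\mathcal R_+}\frac{\beta_i}{(\beta,x)}$, so that $\sum_i(\partial_i\log\delta)\partial_i=\lambda\sum_{\beta}\frac{\partial_\beta}{(\beta,x)}$. Applying this to $-\Delta=-\sum_i\partial_i^2$ gives
\begin{align*}
\delta^{-1}(-\Delta)\,\delta=-\Delta-2\lambda\sum_{\beta\in\mathcal R_+}\frac{\partial_\beta}{(\beta,x)}-(\Delta\log\delta)-|\nabla\log\delta|^2,
\end{align*}
while, since the scalar factor $\tfrac{2\lambda}{(\gamma,x)}$ commutes with $\delta$, the first-order term of $H_2$ conjugates to
\begin{align*}
\delta^{-1}\Big(\sum_{\gamma\in\mathcal R_+}\frac{2\lambda}{(\gamma,x)}\partial_\gamma\Big)\delta=\sum_{\gamma\in\mathcal R_+}\frac{2\lambda}{(\gamma,x)}\partial_\gamma+\sum_{\gamma\in\mathcal R_+}\frac{2\lambda}{(\gamma,x)}(\partial_\gamma\log\delta).
\end{align*}
The key structural point is that the two first-order operators cancel exactly, leaving $-\Delta+\Phi$ plus a purely multiplicative remainder.

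It then remains to evaluate the three scalar pieces and match them against the Calogero--Moser potential in $H_1$. Using $(\beta,\beta)=2$ one gets $-\Delta\log\delta=2\lambda\sum_{\beta}(\beta,x)^{-2}$. For the other two, I would expand $-|\nabla\log\delta|^2=-\lambda^2\sum_{\beta,\gamma}\frac{(\beta,\gamma)}{(\beta,x)(\gamma,x)}$ and $\sum_{\gamma}\frac{2\lambda}{(\gamma,x)}(\partial_\gamma\log\delta)=2\lambda^2\sum_{\beta,\gamma}\frac{(\beta,\gamma)}{(\beta,x)(\gamma,x)}$, and in each the off-diagonal ($\beta\neq\gamma$) sum vanishes by identity \eqref{thezeroterm}, leaving only the diagonal contributions $-2\lambda^2\sum_\beta(\beta,x)^{-2}$ and $4\lambda^2\sum_\beta(\beta,x)^{-2}$. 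Adding the three pieces yields $(2\lambda-2\lambda^2+4\lambda^2)\sum_\beta(\beta,x)^{-2}=2\lambda(\lambda+1)\sum_\beta(\beta,x)^{-2}$, which is precisely the scalar potential of $H_1$. The computation carries no real obstacle beyond careful bookkeeping; the one genuinely substantive input is the vanishing of the off-diagonal cross terms via \eqref{thezeroterm}, which is exactly what collapses the conjugation into the clean potential and produces the arithmetic identity $2\lambda-2\lambda^2+4\lambda^2=2\lambda(\lambda+1)$.
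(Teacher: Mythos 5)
Your computation is correct and is exactly the verification the paper has in mind: the authors simply state that the proof ``follows immediately by making use of the identity \eqref{thezeroterm}'', and your term-by-term conjugation — cancellation of the first-order parts, and the collapse of $-|\nabla\log\delta|^2$ and the cross term to diagonal sums via \eqref{thezeroterm}, giving $2\lambda(\lambda+1)\sum_{\beta\in\mathcal R_+}(\beta,x)^{-2}$ — spells that out in full. Nothing is missing.
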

The proof follows immediately by making use of the identity \eqref{thezeroterm}.

\begin{remark}We note that the Hamiltonian $H_2$ is not self-adjoint under hermitian involution defined by 
\begin{align*}
\psi^{aj\dagger}= \bar{\psi}_a^j,  \quad p_j^\dagger=p_j, \quad x_j^\dagger=x_j, \quad i^\dagger=-i, \quad \text{and} \quad (AB)^\dagger= B^\dagger A^\dagger
\end{align*}
for any two operators $A, B$. One could have considered another ansatz for $\bar{Q}_a$ so that to obtain self-adjoint Hamiltonian. Namely,  let $Q^{a}$ be as in (\ref{newQ}) and consider hermitian conjugate $(Q^{a})^\dagger $. Let $Q^a$, $(Q^a)^\dagger $ $(a=1, 2)$ be the ansatz for the supercharges. Then $$(Q^a)^\dagger= p_r \bar{\psi}_a^r +i F_{rjk} \psi_a^k \bar{\psi}_b^r \bar{\psi}^{bj}.$$ Note that since $F_{rjk}\psi_a^k \bar{\psi}_b^r \bar{\psi}^{bj}=F_{rjk}(\bar{\psi}_b^r \bar{\psi}^{bj} \psi_a^k -  \bar{\psi}_a^r \delta^{kj})$ we may express $(Q^a)^\dagger$ in terms of $\bar{Q}_a$ (see \eqref{newQbar}) as follows  $$(Q^a)^\dagger= \bar{Q}_a - i F_{lmn} \bar{\psi}_a^l \delta^{nm}.$$ We then have
\begin{align*}
\{Q^a, (Q^c)^\dagger \} &= \{Q^a, \bar{Q}_c\} - i \{ Q^a, F_{lmn} \bar{\psi}_c^l\}\delta^{nm} \\
&= \{Q^a, \bar{Q}_c\} - \psi^{ar} \bar{\psi}_c^l \partial_r F_{lmn}\delta^{nm} -  \psi_c^r \bar{\psi}^{ak}  F_{rkl} F_{lmn}\delta^{nm} +\frac{i}{2} F_{rmn} p_r \delta^a_c \delta^{nm}, 
\end{align*} 
with $\{Q^a, \bar{Q}_c\}$ defined by \eqref{newQQbar}. Then supersymmetry algebra constraint $\{Q^a, (Q^c)^\dagger \}=-2\delta^a_c H$ leads to restrictions $\alpha=-\frac{1}{2}$, or $\alpha=-\frac{h+2}{4}$. In both cases the bosonic part of the Hamiltonian $H$ can be seen to be zero.

%In the first case bosonic potential is zero and resulting Hamiltonian is identical to Laplacian. For $\alpha=-\frac{h+2}{4}$, the rescaled Hamiltonian $4H$ is given by
%$$ 4H = p^2 +  \sum_{\gamma \in\mathcal{R}_+} \frac{\gamma_r \gamma_j \gamma_l \gamma_k}{(\gamma,x)^2} \psi^{br} \psi_b^j \bar{\psi}_d^l \bar{\psi}^{dk}.$$ Note that bosonic potential is again zero and can only appear as a result of reordering of fermionic terms.

% $4H= p^2 +  U + \Psi$ where potential $$U=2\sum_{\gamma \in \mathcal{R}_+} \frac{1}{(\gamma,x)^2}$$ and $$\Psi= -2\sum_{\gamma \in\mathcal{R}_+} \frac{\gamma_r  \gamma_k}{(\gamma,x)^2} \bar{\psi}^{dk} \psi_d^r - \sum_{\gamma \in\mathcal{R}_+} \frac{\gamma_r  \gamma_k}{(\gamma,x)^2} \psi^{br} \bar{\psi}_d^l \psi_b^j \bar{\psi}^{dk}.$$
\end{remark}

\subsection{General $\vee$-systems}
\label{genralV}

Let us consider a finite collection of vectors $\mathcal A$ in $V\cong {\mathbb C}^N$ such that the corresponding bilinear form
\begin{align*}
G_{\mathcal A}(u, v)= \sum_{\gamma \in  \mathcal A} (\gamma, u) (\gamma, v), \quad u,v \in V
\end{align*}
is non-degenerate. Let us recall what it means that $\mathcal A$ is a $\vee$-system \cite{Veselov}.  
We can assume by applying a suitable linear transformation to $\mathcal A$ that 
$$
G_{\mathcal A}(u, v) = (u, v)
$$
for any $u, v \in V$. In this case 
$\mathcal A$ is a $\vee$-system if 
for any $\gamma \in \mathcal{A}$ and  for any two-dimensional plane $\pi \subset V$ such that $\gamma \in \pi$ one has 
\begin{align*}
\sum_{\beta \in \mathcal{A}\cap \pi} (\beta, \gamma) \beta = \mu \gamma,
\end{align*}
for  some $\mu= \mu( \gamma, \pi) \in \mathbb{C}$. 

Let $F=F_{\mathcal A}(x_1, \ldots, x_N)$ be the corresponding function
\begin{equation}
\label{prepotVee}
F= \frac{\lambda}{2}\sum_{\gamma \in \mathcal{A}} (\gamma, x)^2 \operatorname{log}(\gamma, x),
\end{equation}
where $\lambda \in {\mathbb C}$. Then $F$ satisfies generalised WDVV equations \eqref{WDVV} (see \cite{Veselov}). Furthermore, the condition 
$$
x_i F_{ijk} = -(2\alpha+1)\delta_{jk}
$$
is satisfied if
$$
\alpha = -\frac12 (\lambda +1).
$$
Therefore this leads to $D(2,1; \alpha)$ superconformal mechanics with the Hamiltonians given by Theorems  \ref{hamiltonian1manyparticle}, \ref{hamiltonianmanyparticle2}, which we present explicitly in the following theorem.
\begin{theorem}
\label{generalthm}
Let function $F$ be given by \eqref{prepotVee}.  
Then the Hamiltonian $H$ given by (\ref{hamiltonianformulamanyparticle1rep})  is supersymmetric with the superconformal algebra $D(2,1; \alpha)$, where $\alpha=-\frac12 (\lambda +1)$. The rescaled Hamiltonian
 $H_1=4H$ has the form
\begin{align*}
H_1= -\Delta + \frac{\lambda}{2}\sum_{\gamma \in \mathcal{A}} \frac{(\gamma, \gamma)^2}{(\gamma, x)^2} + \frac{\lambda^2}{4} \sum_{\gamma, \beta \in \mathcal{A}}  \frac{(\gamma, \gamma)(\beta, \beta) (\gamma, \beta)}{(\gamma, x)(\beta, x)}  + \Phi,
\end{align*}
where $\Delta = - p^2$ is the Laplacian in $V$ and the fermionic term
\begin{equation}
\label{Phi_general}
\Phi=  \sum_{\gamma \in \mathcal A}  \frac{2\lambda\gamma_r \gamma_j \gamma_l \gamma_k}{(\gamma, x)^2}  \psi^{br} \psi_b^j  \bar{\psi}_d^l \bar{\psi}^{dk}  - \sum_{\gamma \in \mathcal{A}} \frac{ 2\lambda\gamma_r \gamma_j (\gamma, \gamma)}{(\gamma, x)^2} \psi_b^r \bar{\psi}^{bj}. 
\end{equation}
Furthermore, the Hamiltonian $H$ given by (\ref{newH})  is also supersymmetric with the superconformal algebra $D(2,1; \alpha)$, where $\alpha = -\frac12 (\lambda +1)$ and the rescaled Hamiltonian
 $H_2=4H$ has the form
$$
H_2= -\Delta +  \lambda \sum_{\gamma \in \mathcal{A}}  \frac{(\gamma, \gamma)}{(\gamma, x)} \partial_{\gamma} + \Phi. 
$$
\end{theorem}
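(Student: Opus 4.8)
The plan is to obtain both Hamiltonians by direct substitution of the prepotential \eqref{prepotVee} into the general formulas \eqref{hamiltonianformulamanyparticle1rep} and \eqref{newH}, exactly as in the proof of Theorem \ref{1strepham}, the only difference being that for a general $\vee$-system the double sum in the bosonic potential need not collapse. The supersymmetry claims themselves require no new work: once $F$ is known to satisfy \eqref{WDVV} (which holds by \cite{Veselov}) and the homogeneity condition \eqref{alphalambda}, the statement that $H$ sits inside $D(2,1;\alpha)$ is precisely the content of Theorems \ref{hamiltonian1manyparticle} and \ref{hamiltonianmanyparticle2}. To verify \eqref{alphalambda} I would compute $x_i F_{ijk} = \lambda\sum_{\gamma\in\mathcal A}\frac{(\gamma,x)\gamma_j\gamma_k}{(\gamma,x)} = \lambda\sum_{\gamma\in\mathcal A}\gamma_j\gamma_k = \lambda\,G_{\mathcal A}(e_j,e_k) = \lambda\,\delta_{jk}$, using the normalisation $G_{\mathcal A}(u,v)=(u,v)$; comparing with \eqref{alphalambda} fixes $\alpha=-\tfrac12(\lambda+1)$.

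The computational core is to evaluate the four terms of \eqref{hamiltonianformulamanyparticle1rep}. First I would record $F_{jlk}=\lambda\sum_{\gamma\in\mathcal A}\frac{\gamma_j\gamma_l\gamma_k}{(\gamma,x)}$ and $\partial_i F_{jlk}=-\lambda\sum_{\gamma\in\mathcal A}\frac{\gamma_i\gamma_j\gamma_l\gamma_k}{(\gamma,x)^2}$. The two fermionic terms then become, after multiplying by $4$, precisely the two summands of $\Phi$ in \eqref{Phi_general}: the quartic-fermion term reproduces $\sum_\gamma\tfrac{2\lambda\gamma_r\gamma_j\gamma_l\gamma_k}{(\gamma,x)^2}\psi^{br}\psi_b^j\bar\psi_d^l\bar\psi^{dk}$ directly, while contracting $\partial_i F_{jlk}$ with $\delta^{lk}$ produces a factor $(\gamma,\gamma)$ and hence the second summand of \eqref{Phi_general}. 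For the bosonic potential I would contract the remaining two terms: $-\tfrac18\partial_i F_{jlk}\delta^{ij}\delta^{lk}$ gives $\tfrac{\lambda}{8}\sum_\gamma\frac{(\gamma,\gamma)^2}{(\gamma,x)^2}$ (from the two contractions $\gamma_i\gamma_i=\gamma_l\gamma_l=(\gamma,\gamma)$), and $\tfrac{1}{16}F_{ijk}F_{lmn}\delta^{nm}\delta^{jl}\delta^{ik}$ gives $\tfrac{\lambda^2}{16}\sum_{\beta,\gamma}\frac{(\beta,\beta)(\gamma,\gamma)(\beta,\gamma)}{(\beta,x)(\gamma,x)}$, where the three Kronecker deltas produce the scalar products $(\beta,\beta)$, $(\beta,\gamma)$, $(\gamma,\gamma)$ respectively. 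Multiplying by $4$ and writing $p^2=-\Delta$ yields the stated form of $H_1$.

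For $H_2$ I would substitute \eqref{prepotVee} into \eqref{newH}. The fermionic part is identical to the above and again produces $\Phi$. The only genuinely new term is $\tfrac{i}{4}\delta^{nm}F_{rmn}p_r$: contracting $\delta^{nm}$ yields $\delta^{nm}F_{rmn}=\lambda\sum_\gamma\frac{(\gamma,\gamma)\gamma_r}{(\gamma,x)}$, and using $p_r=-i\partial_r$ converts this into $\tfrac{\lambda}{4}\sum_\gamma\frac{(\gamma,\gamma)}{(\gamma,x)}\partial_\gamma$ with $\partial_\gamma=\gamma_r\partial_r$; scaling by $4$ gives the first-order operator in $H_2$.

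The point to keep in view, rather than a genuine obstacle, is the contrast with the Coxeter proof of Theorem \ref{1strepham}: there the two bosonic summands combine into the single Calogero term $\sum_{\gamma\in\mathcal R_+}\frac{2\lambda(\lambda+1)}{(\gamma,x)^2}$ thanks to the identity \eqref{thezeroterm} together with $(\gamma,\gamma)=2$. For a general $\vee$-system neither $(\gamma,\gamma)=2$ nor \eqref{thezeroterm} is available, so I would leave the double sum $\sum_{\gamma,\beta}$ intact; this is exactly why the statement of Theorem \ref{generalthm} records the uncontracted double sum. Everything else is routine index bookkeeping in the contractions, the fermionic bilinears being carried through unchanged since all contractions here are between the $F$-coefficients and the Kronecker deltas and require no reordering of the fermions.
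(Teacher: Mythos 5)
Your proposal is correct and follows essentially the same route as the paper, which disposes of this theorem with the single remark ``The proof is similar to the one in the Coxeter case'': one verifies \eqref{WDVV} via \cite{Veselov} and \eqref{alphalambda} via the normalisation $G_{\mathcal A}(u,v)=(u,v)$ (fixing $\alpha=-\tfrac12(\lambda+1)$), then substitutes \eqref{prepotVee} into \eqref{hamiltonianformulamanyparticle1rep} and \eqref{newH}, and all your index contractions check out, including the correct observation that the double sum cannot be collapsed because \eqref{thezeroterm} and $(\gamma,\gamma)=2$ are unavailable for a general $\vee$-system.
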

The proof is similar to the one in the Coxeter case. The following proposition can also be checked directly. 
\begin{proposition}
Hamiltonians $H_1, H_2$ from Theorem \ref{generalthm} satisfy gauge relation
\begin{align*}
\delta^{-1} \circ H_2 \circ \delta= H_1,
\end{align*}
where $\delta= \prod_{\beta \in \mathcal A} (\beta, x)^{\frac{\lambda}{2}(\beta, \beta)}$.
\end{proposition}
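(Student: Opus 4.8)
The plan is to establish the gauge relation $\delta^{-1}\circ H_2\circ\delta = H_1$ for the $\vee$-system case by a direct conjugation computation, exactly paralleling the Coxeter case treated in the preceding proposition. Since $H_1$ and $H_2$ from Theorem \ref{generalthm} differ only in their bosonic parts (the fermionic term $\Phi$ is identical in both), and both contain the same Laplacian $-\Delta = p^2 = -\sum_i \partial_i^2$, the conjugation only needs to be carried out on the kinetic term; the fermionic variables $\psi,\bar\psi$ commute with multiplication by $\delta$, so $\Phi$ is unaffected.

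First I would compute the logarithmic derivative of the gauge factor. Writing $\delta = \prod_{\beta\in\mathcal A}(\beta,x)^{\frac{\lambda}{2}(\beta,\beta)}$, we have
\begin{align*}
\delta^{-1}(\partial_i\delta) = \frac{\lambda}{2}\sum_{\beta\in\mathcal A}\frac{(\beta,\beta)\beta_i}{(\beta,x)}.
\end{align*}
Next I would conjugate each momentum operator: since $\delta^{-1}\circ\partial_i\circ\delta = \partial_i + \delta^{-1}(\partial_i\delta)$, applying this twice gives
\begin{align*}
\delta^{-1}\circ(-\partial_i^2)\circ\delta = -\partial_i^2 - 2\,\delta^{-1}(\partial_i\delta)\,\partial_i - \delta^{-1}(\partial_i^2\delta).
\end{align*}
Summing over $i$, the first-order term reproduces precisely the drift term $\lambda\sum_{\gamma\in\mathcal A}\frac{(\gamma,\gamma)}{(\gamma,x)}\partial_\gamma$ of $H_2$, since $\sum_i\delta^{-1}(\partial_i\delta)\partial_i = \frac{\lambda}{2}\sum_{\gamma}\frac{(\gamma,\gamma)}{(\gamma,x)}\partial_\gamma$ and the factor $2$ absorbs the one-half.

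The main work is matching the zeroth-order (scalar) term. Expanding $-\sum_i\delta^{-1}(\partial_i^2\delta)$ produces two kinds of contributions: diagonal terms from differentiating a single factor twice, which yield $\frac{\lambda}{2}\sum_{\gamma\in\mathcal A}\frac{(\gamma,\gamma)^2}{(\gamma,x)^2}$ (matching the first bosonic sum in $H_1$), and cross terms from differentiating two distinct factors, which combine with the square of the first-order coefficient to give $\frac{\lambda^2}{4}\sum_{\gamma,\beta\in\mathcal A}\frac{(\gamma,\gamma)(\beta,\beta)(\gamma,\beta)}{(\gamma,x)(\beta,x)}$ (matching the second bosonic sum in $H_1$). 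The step I expect to be the crux is the cancellation of the residual singular cross-terms along each hyperplane $(\beta,x)=0$: the expansion a priori contains terms $\frac{\lambda^2}{4}\sum_{\beta\neq\gamma}\frac{(\beta,\gamma)(\beta,\beta)(\gamma,\gamma)}{(\beta,x)(\gamma,x)}$ together with lower-order pieces, and one must invoke the $\vee$-system analogue of the identity \eqref{thezeroterm}. The hard part is therefore verifying that the $\vee$-condition $\sum_{\beta\in\mathcal A\cap\pi}(\beta,\gamma)\beta = \mu\gamma$ guarantees regularity of the relevant rational function at every hyperplane, so that this symmetric double sum is either identically zero or exactly equals the surviving bosonic potential of $H_1$; this is precisely the input that makes $F$ satisfy the WDVV equations and is the structural reason the two gauges are conjugate. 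Once these scalar terms are shown to agree, collecting all pieces completes the proof.
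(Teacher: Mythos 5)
Your overall strategy---a direct conjugation computation---is the right one and is essentially what the paper means by ``can be checked directly'', but the proposal has two genuine problems. First, the claim that ``the conjugation only needs to be carried out on the kinetic term'' is wrong: $H_2$ contains the first-order drift $\lambda\sum_{\gamma\in\mathcal A}\frac{(\gamma,\gamma)}{(\gamma,x)}\partial_\gamma$, which does not commute with multiplication by $\delta$ and must itself be conjugated. Writing $v_i=\delta^{-1}(\partial_i\delta)=\frac{\lambda}{2}\sum_{\beta\in\mathcal A}\frac{(\beta,\beta)\beta_i}{(\beta,x)}$ and $v_\gamma=\sum_i\gamma_i v_i$, one has
\begin{align*}
\delta^{-1}\circ(-\Delta)\circ\delta=-\Delta-2\sum_i v_i\partial_i-\sum_i\partial_i v_i-\sum_i v_i^2,
\qquad
\delta^{-1}\circ\partial_\gamma\circ\delta=\partial_\gamma+v_\gamma .
\end{align*}
The first-order terms from the two contributions cancel each other (they do not ``reproduce the drift of $H_2$''---note the sign), $-\sum_i\partial_iv_i$ gives $\frac{\lambda}{2}\sum_\gamma\frac{(\gamma,\gamma)^2}{(\gamma,x)^2}$, and the two remaining scalars combine as $-\frac{\lambda^2}{4}+\frac{\lambda^2}{2}=+\frac{\lambda^2}{4}$ times the double sum. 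If you omit the scalar $\lambda\sum_\gamma\frac{(\gamma,\gamma)}{(\gamma,x)}v_\gamma$ produced by conjugating the drift, the double sum comes out with coefficient $-\frac{\lambda^2}{4}$ and the computation fails.

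Second, and more importantly, the step you single out as ``the crux''---invoking a $\vee$-system analogue of identity \eqref{thezeroterm} to handle the off-diagonal terms $\beta\neq\gamma$---is neither needed nor available. In Theorem \ref{generalthm} the Hamiltonian $H_1$ already contains the full double sum $\frac{\lambda^2}{4}\sum_{\gamma,\beta\in\mathcal A}\frac{(\gamma,\gamma)(\beta,\beta)(\gamma,\beta)}{(\gamma,x)(\beta,x)}$, diagonal and off-diagonal terms alike, and this is exactly what the conjugation produces, so the computation closes with no further identity. Moreover, for a general $\vee$-system the off-diagonal sum is \emph{not} regular along the hyperplanes and does \emph{not} vanish: the residue at $(\beta,x)=0$ is proportional to $\sum_{\gamma\neq\beta}\frac{(\beta,\beta)(\gamma,\gamma)(\beta,\gamma)}{(\gamma,x)}$, which in the Coxeter case vanishes by the pairing $\gamma\leftrightarrow s_\beta\gamma$ but has no reason to vanish otherwise. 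This is precisely why the Coxeter-case potential in Theorem \ref{1strepham} simplifies via \eqref{thezeroterm} while the general formula retains the double sum. The $\vee$-condition enters only in establishing the WDVV equations (hence the supersymmetry), not in the gauge relation, which holds for an arbitrary configuration $\mathcal A$.
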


\section{Trigonometric  version}\label{trigsect}

In this section we consider prepotential functions $F=F(x_1, \ldots, x_N)$ of the form
\begin{align}\label{trigprepotentialF}
F = \sum_{\alpha \in \mathcal A} c_\alpha f((\alpha,x)),
\end{align}
where $\mathcal A$ is a finite set of vectors in $V\cong {\mathbb C}^N$, $c_\alpha \in \mathbb C$ are some multiplicities of these vectors, and function $f$ is given by 
$$
f(z) = \frac16  z^3  -\frac14 \text{Li}_3(e^{-2z})
$$
so that $f'''(z) = \coth z$. 

We are interested in the supercharges of the form 
\begin{align*}
Q^a& = p_r \psi^{ar} + i F_{rjk} \langle \psi^{br} \psi_b^j  \bar{\psi}^{ak} \rangle,
\end{align*}
\begin{align*}
\bar{Q}_c &= p_l  \bar{\psi}_c^l + i F_{lmn} \langle  \bar{\psi}_d^l  \bar{\psi}^{dm} \psi_c^n \rangle,
\end{align*}
$a,c=1,2$, which is analogous to the first representation considered in Section \ref{1st rep section}.

Function $F$ should satisfy conditions 
\begin{align}\label{WDVVmod}
F_{rjk}F_{kmn}= F_{rmk}F_{kjn}, 
\end{align}
for all $r, j, m, n =1,\ldots, N$ 
but we no longer assume conditions \eqref{alphalambda}. Then we have the following statement on supersymmetry algebra.
\begin{theorem}\label{hamiltonian1manyparticleTrig}
Let us assume that $F$ satisfies conditions \eqref{WDVVmod}. Then  for all $a, b =1, 2$ we have 
$$
\{Q^a, Q^b\} = \{\bar{Q}_a, \bar{Q}_b\} = 0 \quad \text{and}\quad   
\{Q^a, \bar{Q}_b  \}=-2 H \delta^a_b,
$$
 where the Hamiltonian $H$ is given by 
\begin{align*}%\label{hamiltonianformulamanyparticle1rep}
H= \frac{p^2}{4} - \frac{\partial_iF_{jlk}}{2} ( \psi^{bi} \psi_b^j  \bar{\psi}_d^l \bar{\psi}^{dk} - \psi_b^i \bar{\psi}^{bj} \delta^{lk} + \frac{1}{4} \delta^{ij} \delta^{lk} ) + \frac{1}{16} F_{ijk} F_{lmn} \delta^{nm} \delta^{jl} \delta^{ik}. 
\end{align*}
Furthermore, the rescaled Hamiltonian $H_1=4H$ has the form
\begin{align}\label{H1trig}
H_1= - \Delta + \frac{1}{2} \sum_{\alpha \in \mathcal{A}} \frac{c_\alpha(\alpha, \alpha)^2}{\sinh^2(\alpha, x)} + \frac{1}{4} \sum_{\alpha, \beta \in \mathcal{A}}c_\alpha c_\beta (\alpha, \alpha) (\beta, \beta) (\alpha, \beta) \coth(\alpha,x) \coth(\beta, x) + \Phi, 
\end{align}
where $\Delta=-p^2$ is the Laplacian in $V$ and the fermionic term
\begin{align}\label{phitrig}
\Phi=  \sum_{\alpha \in \mathcal{A}}  \frac{2 c_\alpha \alpha_i \alpha_j }{\sinh^2(\alpha, x)}\big(\alpha_l \alpha_k  \psi^{bi} \psi_b^j \bar{\psi}_d^l \bar{\psi}^{dk} -(\alpha, \alpha) \psi_b^i \bar{\psi}^{bj} \big). 
\end{align}
\end{theorem}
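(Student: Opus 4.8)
The plan is to observe that the supersymmetry relations here are governed by exactly the same algebra as in the rational setting of Theorem \ref{hamiltonian1manyparticle}, and then to evaluate the resulting Hamiltonian on the trigonometric prepotential \eqref{trigprepotentialF}. Note first that the Hamiltonian formula asserted here is literally identical to the one in Theorem \ref{hamiltonian1manyparticle}, and the supercharges have the same fermionic structure, including the anti-symmetrisation $\langle\dots\rangle$. The decisive point is that the derivation of $\{Q^a,\bar Q_b\}=-2H\delta^a_b$ in Theorem \ref{hamiltonian1manyparticle} and of $\{Q^a,Q^b\}=0$ in Lemma \ref{oddmanyparticle1rep} never invokes the homogeneity condition \eqref{alphalambda}. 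Indeed, writing $Q^a=A+B$, $\bar Q_c=A'+B'$, the evaluation of $\{A,A'\}$, of $\{A,B'\}+\{B,A'\}$ (culminating in \eqref{relationfor2ndrep}), and of $\{B,B'\}$ (including the vanishing \eqref{BB'noweyl} of $C$) uses only the symmetry of $F_{rjk}$ under permutation of its indices, the identity $[p_r,F_{lmn}]=-i\partial_r F_{lmn}$, the bracket relations \eqref{extebracket1}, \eqref{extebracket2}, and the WDVV equations \eqref{WDVV}. Since \eqref{WDVVmod} is exactly \eqref{WDVV}, I would transcribe those computations verbatim to obtain $\{Q^a,\bar Q_b\}=-2H\delta^a_b$ with the stated $H$ and $\{Q^a,Q^b\}=0$; the relation $\{\bar Q_a,\bar Q_b\}=0$ follows by the same reordering argument applied to $\{Q^{22a},Q^{22b}\}$.

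Second, I would carry out the explicit substitution. From \eqref{trigprepotentialF} and $f'''(z)=\coth z$ one gets $F_{rjk}=\sum_{\alpha\in\mathcal A}c_\alpha\alpha_r\alpha_j\alpha_k\coth(\alpha,x)$, and hence, using $\tfrac{d}{dz}\coth z=-\sinh^{-2}z$, also $\partial_i F_{jlk}=-\sum_{\alpha\in\mathcal A}c_\alpha\alpha_i\alpha_j\alpha_l\alpha_k\sinh^{-2}(\alpha,x)$. Inserting these into $H$ and multiplying by $4$, the fermionic part $-2\partial_i F_{jlk}(\psi^{bi}\psi_b^j\bar\psi_d^l\bar\psi^{dk}-\psi_b^i\bar\psi^{bj}\delta^{lk})$ becomes $\Phi$ as in \eqref{phitrig} after using $\alpha_l\alpha_k\delta^{lk}=(\alpha,\alpha)$. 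The first bosonic term $-\tfrac12\partial_i F_{jlk}\delta^{ij}\delta^{lk}$ yields $\tfrac12\sum_{\alpha}c_\alpha(\alpha,\alpha)^2\sinh^{-2}(\alpha,x)$, again by $\alpha_i\alpha_j\delta^{ij}=(\alpha,\alpha)$. The second bosonic term $\tfrac14 F_{ijk}F_{lmn}\delta^{nm}\delta^{jl}\delta^{ik}$ factorises, since $F_{ijk}\delta^{ik}=\sum_{\alpha}c_\alpha(\alpha,\alpha)\alpha_j\coth(\alpha,x)$ and $F_{lmn}\delta^{nm}=\sum_{\beta}c_\beta(\beta,\beta)\beta_l\coth(\beta,x)$; contracting the remaining $\delta^{jl}$ produces $\sum_j\alpha_j\beta_j=(\alpha,\beta)$ and delivers the full double sum in \eqref{H1trig}. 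This reproduces \eqref{H1trig} and \eqref{phitrig} precisely.

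The only genuine point requiring care — and what I expect to be the main obstacle — is substantiating the claim that the rational derivation truly does not use \eqref{alphalambda}. In the rational setting the homogeneity of $F_{rjk}$ does enter elsewhere (for instance in the conformal relation $[H,D]=iH$), so one must check line by line that in the specific anticommutators $\{Q^{21a},Q^{21b}\}$, $\{Q^{21a},Q^{22b}\}$ and $\{Q^{22a},Q^{22b}\}$ the only structural inputs are WDVV and the index symmetry of $F$. Once this is confirmed, the sole difference from the rational case is that the double sum $F_{ijk}F_{lmn}\delta^{nm}\delta^{jl}\delta^{ik}$ is no longer collapsed by an analogue of the cancellation \eqref{thezeroterm}, so the cross terms with $\alpha\neq\beta$ survive and appear explicitly in the $\coth(\alpha,x)\coth(\beta,x)$ contribution to \eqref{H1trig}.
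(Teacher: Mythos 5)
Your proposal is correct and follows the same route as the paper, which likewise reduces the supersymmetry relations to the proofs of Theorem \ref{hamiltonian1manyparticle} and the relevant part of Lemma \ref{oddmanyparticle1rep} (noting these use only the full symmetry of $\partial_i F_{jlk}$ and the WDVV equations, not condition \eqref{alphalambda}) and then obtains \eqref{H1trig} by direct substitution as in Theorem \ref{1strepham}. Your explicit observation that the double sum $F_{ijk}F_{lmn}\delta^{nm}\delta^{jl}\delta^{ik}$ no longer collapses via \eqref{thezeroterm} is exactly the right point of departure from the rational case.
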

The proof of the first part of the theorem is the same as the proof of Theorem \ref{hamiltonian1manyparticle} together with the proof of the relevant part of Lemma \ref{oddmanyparticle1rep}. The proof of formula \eqref{H1trig} is similar to the proof of Theorem \ref{1strepham}.

Let us now consider supercharges of the form 
\begin{align*}
Q^a& = p_r \psi^{ar} + i F_{rjk}  \psi^{br} \psi_b^j  \bar{\psi}^{ak} ,
\end{align*}
\begin{align*}
\bar{Q}_c &= p_l  \bar{\psi}_c^l + i F_{lmn}   \bar{\psi}_d^l  \bar{\psi}^{dm} \psi_c^n ,
\end{align*}
$a,c=1,2$, which is analogous to the second representation considered in Section \ref{2nd rep section}. Then we have the following statement on supersymmetry algebra.
\begin{theorem}\label{hamiltonian2manyparticleTrig}
Let us assume that $F$ satisfies conditions \eqref{WDVVmod}. Then  for all $a, b =1, 2$ we have 
$$
\{Q^a, Q^b\} = \{\bar{Q}_a, \bar{Q}_b\} = 0 \quad \text{and}\quad   
\{Q^a, \bar{Q}_b  \}=-2 H \delta^a_b,
$$
 where the Hamiltonian $H$ is given by 
\begin{align}\label{hamiltonianformu2amanyparticle1rep}
 H= \frac{p^2}{4} - \frac{\partial_rF_{jlk}}{2} ( \psi^{br} \psi_b^j  \bar{\psi}_d^l \bar{\psi}^{dk} - \psi_b^r \bar{\psi}^{bj} \delta^{lk} ) + \frac{i}{4} \delta^{nm} F_{rmn} p_r. 
\end{align}
Furthermore, the rescaled Hamiltonian $H_2=4H$, has the form 
\begin{align}\label{rescaledH_2}
H_2= - \Delta + \sum_{\alpha \in \mathcal{A}}c_\alpha (\alpha, \alpha) \coth(\alpha,x) \partial_\alpha + \Phi, 
\end{align}
where $\Phi$ is the fermionic term defined by \eqref{phitrig}.
\end{theorem}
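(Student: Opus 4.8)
The plan is to establish the statement in two stages: first the supersymmetry relations $\{Q^a,Q^b\}=\{\bar Q_a,\bar Q_b\}=0$ and $\{Q^a,\bar Q_b\}=-2H\delta^a_b$, and then the explicit evaluation of $H_2=4H$ for the trigonometric prepotential \eqref{trigprepotentialF}. The key observation for the first stage is that the derivations of Theorem~\ref{hamiltonianmanyparticle2} and of the relevant anticommutators in Lemma~\ref{oddmanyparticle2rep} rely only on the WDVV equations \eqref{WDVV} and on the fermionic relations \eqref{extebracket1}, \eqref{extebracket2}; they never invoke the homogeneity condition \eqref{alphalambda}. Since we now assume exactly \eqref{WDVVmod}, which coincides with \eqref{WDVV}, those computations transfer verbatim. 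The homogeneity constraint was needed only to incorporate $H$ into the full superconformal algebra $D(2,1;\alpha)$ through the conformal generators $K$ and $D$; as $F$ is no longer homogeneous we obtain just the $\mathcal N=4$ supersymmetry relations and not the superconformal extension.

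I would reproduce the splitting $Q^a=A+B$, $\bar Q_c=A'+B'$ with $A=p_r\psi^{ar}$, $B=iF_{rjk}\psi^{br}\psi_b^j\bar\psi^{ak}$ and the conjugate terms. The computations $\{A,A'\}=-\tfrac12\delta^a_c p^2$ and $\{A,B'\},\{B,A'\}$ in \eqref{AB'manyparticle2}, \eqref{BA'manyparticle2} use only $[p_r,F_{lmn}]=-i\partial_r F_{lmn}$ together with the fermion brackets, while the reduction of $\{A,B'\}+\{B,A'\}$ to \eqref{relationfor2ndrep} is the purely fermionic rearrangement underlying \eqref{(3)}, \eqref{LAMBDA}. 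The vanishing $\{B,B'\}=0$ is exactly \eqref{BB'noweyl}, whose proof is an application of \eqref{comm4} followed by WDVV. Assembling these gives $\{Q^a,\bar Q_b\}=-2H\delta^a_b$ with $H$ as in \eqref{hamiltonianformu2amanyparticle1rep}. The relation $\{Q^a,Q^b\}=0$ follows as in Lemma~\ref{oddmanyparticle2rep}: the cross terms \eqref{QQmanypartilce1repform1}, \eqref{QQmanypartilce1repform2} cancel, and the remaining quartic fermionic expression \eqref{QQmanyparticle1repform4} vanishes by WDVV; the relation $\{\bar Q_a,\bar Q_b\}=0$ is the conjugate statement.

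For the explicit Hamiltonian I would compute the third derivatives of \eqref{trigprepotentialF}. Since $f'''(z)=\coth z$ we have $F_{rjk}=\sum_{\alpha\in\mathcal A}c_\alpha\,\alpha_r\alpha_j\alpha_k\coth(\alpha,x)$, and differentiating once more with $\tfrac{d}{dz}\coth z=-\sinh^{-2}z$ gives
\[
\partial_r F_{jlk}=-\sum_{\alpha\in\mathcal A}c_\alpha\frac{\alpha_r\alpha_j\alpha_l\alpha_k}{\sinh^2(\alpha,x)}.
\]
Substituting into \eqref{hamiltonianformu2amanyparticle1rep}, multiplying by $4$, and contracting the Kronecker deltas via $\alpha_l\alpha_k\delta^{lk}=(\alpha,\alpha)$ and $\alpha_m\alpha_n\delta^{nm}=(\alpha,\alpha)$ produces the two pieces of $\Phi$ in \eqref{phitrig}. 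For the first-order term I would use $\alpha_r p_r=-i\,\partial_\alpha$, so that $\tfrac{i}{4}\delta^{nm}F_{rmn}p_r$ rescales to $\sum_{\alpha}c_\alpha(\alpha,\alpha)\coth(\alpha,x)\partial_\alpha$, while $4\cdot\tfrac{p^2}{4}=p^2=-\Delta$ supplies the kinetic term, yielding \eqref{rescaledH_2}.

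I expect the only genuinely delicate point to be confirming that homogeneity \eqref{alphalambda} is truly absent from every step of the supersymmetry derivation, so that the rational-case proofs apply unchanged. In particular one should verify that the operator ordering in the first-order term $\tfrac{i}{4}\delta^{nm}F_{rmn}p_r$, which now carries a nonconstant $\coth$-coefficient multiplying $p_r$, is handled consistently throughout, since in the rational homogeneous case this contribution interacted differently with the conformal generators. The fermionic algebra and the remaining WDVV contractions are otherwise routine.
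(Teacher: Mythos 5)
Your proposal is correct and follows essentially the same route as the paper, which likewise proves the supersymmetry relations by observing that the arguments of Theorem \ref{hamiltonianmanyparticle2} and the relevant part of Lemma \ref{oddmanyparticle2rep} use only the WDVV equations and the fermionic brackets (not the homogeneity condition \eqref{alphalambda}), and then obtains \eqref{rescaledH_2} by direct substitution of the trigonometric prepotential into \eqref{hamiltonianformu2amanyparticle1rep}. Your explicit computation of $F_{rjk}$, $\partial_r F_{jlk}$ and the contractions yielding $\Phi$ and the first-order $\coth$ term matches what the paper leaves as ``easily derived''.
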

The proof of the first part of the theorem is the same as the proof of Theorem \ref{hamiltonianmanyparticle2} together with the proof of the relevant part of Lemma \ref{oddmanyparticle2rep}. Then formula \eqref{rescaledH_2} can be easily derived from the form \eqref{hamiltonianformu2amanyparticle1rep} of $H$. 

Let us now assume that $\mathcal{A}=\mathcal{R}$ is a crystallographic root system, and that the multiplicity function $c(\alpha)=c_\alpha$, $\alpha \in \mathcal{R}$ is invariant under the corresponding Weyl group $W$.  For a general root system $\mathcal R$ the corresponding function $F$ does not satisfy equations \eqref{WDVVmod}. For example, if $\mathcal{R}=A_{N-1}$ then relations \eqref{WDVVmod} do not hold. But for some root systems and collections of multiplicities relations \eqref{WDVVmod} are satisfied. 

In the rest of this section we consider such cases when prepotential $F$ satisfying \eqref{WDVVmod} does exist. The corresponding root systems $\mathcal{R}$ have more than one orbit under the action of the Weyl group $W$. We start by simplifying the corresponding Hamiltonians $H_1$ given by \eqref{H1trig}. 

\begin{proposition}\label{H1rootsystem}Let us assume that prepotential $F$ given by \eqref{trigprepotentialF} for a root system $\mathcal{R}$ with invariant multiplicity function $c$ satisfies \eqref{WDVVmod}. Then Hamiltonian \eqref{H1trig} can be rearranged as 
\begin{align}\label{rescaledH_1}
H_1= - \Delta +  \sum_{\alpha \in \mathcal{R_+}}\frac{\widetilde{c_\alpha} }{\sinh^2(\alpha,x)}+ \widetilde{\Phi},
\end{align}
where 
\begin{align*}
\widetilde{c_\alpha} = 
\begin{dcases}
c_\alpha (\alpha, \alpha)^2 \big( 1 +  c_\alpha (\alpha, \alpha)\big), 
\quad \text{if} \quad 2\alpha \notin \mathcal{R}, \\
c_\alpha (\alpha, \alpha)^2 \big( 1 + (\alpha, \alpha) ( c_\alpha + 8 c_{2\alpha})\big), 
\quad \text{if} \quad 2\alpha \in \mathcal{R},
\end{dcases}
\end{align*}
 $\widetilde{\Phi}= \Phi + const$, with $\Phi$ given by \eqref{phitrig} and $\mathcal{R_+}$ is a positive subsystem in $\mathcal{R}$. 
\end{proposition}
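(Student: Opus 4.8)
The plan is to simplify only the bosonic part of $H_1$; the fermionic term in \eqref{H1trig} is already the expression \eqref{phitrig} summed over $\mathcal A=\mathcal R$, so it is carried over verbatim and all that happens to it is the addition of a scalar, giving $\widetilde\Phi=\Phi+\mathrm{const}$. Write the two bosonic pieces of \eqref{H1trig} as $V_1=\tfrac12\sum_{\alpha\in\mathcal R}c_\alpha(\alpha,\alpha)^2\sinh^{-2}(\alpha,x)$ and $V_2=\tfrac14\sum_{\alpha,\beta\in\mathcal R}c_\alpha c_\beta(\alpha,\alpha)(\beta,\beta)(\alpha,\beta)\coth(\alpha,x)\coth(\beta,x)$. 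First I would fold each sum onto $\mathcal R_+$ using $W$-invariance of $c$ together with $\coth(-z)=-\coth z$ and $\sinh^2(-z)=\sinh^2 z$: the roots $\alpha,-\alpha$ give equal contributions to $V_1$, so $V_1=\sum_{\alpha\in\mathcal R_+}c_\alpha(\alpha,\alpha)^2\sinh^{-2}(\alpha,x)$, which already accounts for the leading $1$ inside each $\widetilde{c_\alpha}$.

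The heart of the computation is to split $V_2$ according to whether $\alpha$ and $\beta$ are proportional. For the proportional (collinear) part I would use the elementary identities $\coth^2 z=1+\sinh^{-2}z$ and $\coth z\,\coth 2z=1+\tfrac12\sinh^{-2}z$. Fixing a line $\mathbb R\alpha$ and summing all ordered pairs drawn from $\{\pm\alpha\}$, and from $\{\pm\alpha,\pm2\alpha\}$ when $2\alpha\in\mathcal R$, while tracking signs through the bilinear factors and the parity of $\coth$, the $\coth^2(\alpha,x)$ terms contribute a coefficient $c_\alpha^2(\alpha,\alpha)^3$ of $\sinh^{-2}(\alpha,x)$; in the non-reduced case the $\alpha$-$2\alpha$ cross terms contribute, through $\coth(\alpha,x)\coth(2\alpha,x)$, a further $8\,c_\alpha c_{2\alpha}(\alpha,\alpha)^3$, and the surviving $1$'s are collected into the constant. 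Adding the $V_1$ coefficient $c_\alpha(\alpha,\alpha)^2$ reproduces $\widetilde{c_\alpha}$ in both cases. One should also note that, since $\coth z\,\coth 2z$ only generates $\sinh^{-2}z$, the long root $2\alpha$ receives no cross contribution and hence obeys the reduced formula, consistent with $2(2\alpha)\notin\mathcal R$.

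It remains to handle the non-proportional part $\Theta(x)=\sum c_\alpha c_\beta(\alpha,\alpha)(\beta,\beta)(\alpha,\beta)\coth(\alpha,x)\coth(\beta,x)$, the sum running over pairs of non-proportional roots, and to show that it is constant; this is the step I expect to be the main obstacle, and it is the only place where \eqref{WDVVmod} is genuinely used (the statement already assumes $F$ satisfies WDVV, which fails for a generic $\mathcal R$ such as $A_{N-1}$). Mirroring the role of the rational identity \eqref{thezeroterm}, the idea is to prove that $\Theta$ has no singularities: near a wall $(\gamma,x)=0$ the only singular terms carry a factor $\coth(\gamma,x)$, and the residue of the resulting simple pole is proportional to $c_\gamma(\gamma,\gamma)\sum_{\beta\,\nparallel\,\gamma}c_\beta(\beta,\beta)(\gamma,\beta)\coth(\beta,x)$ restricted to the wall. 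I would show this residue vanishes by grouping the roots $\beta$ into the two-planes through $\gamma$ and invoking \eqref{WDVVmod} (the trigonometric $\vee$-conditions on each such plane); a regular, bounded trigonometric function on $V$ is necessarily constant, so $\Theta$ reduces to a scalar that is absorbed into $\widetilde\Phi$. Deriving the residue identity from \eqref{WDVVmod}, rather than postulating it, is the delicate point; I would first verify it on the rank-two planes that actually occur for $BC_N$, $F_4$ and $G_2$ and then phrase it uniformly.
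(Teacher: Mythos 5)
Your decomposition of the bosonic potential and the collinear bookkeeping are correct and match what the paper leaves implicit: folding $V_1$ onto $\mathcal R_+$ gives the $1$ inside $\widetilde{c_\alpha}$, the four ordered pairs drawn from $\{\pm\alpha\}$ contribute $c_\alpha^2(\alpha,\alpha)^3\coth^2(\alpha,x)$ after the $\tfrac14$, the eight $\alpha$--$2\alpha$ cross pairs contribute $16\,c_\alpha c_{2\alpha}(\alpha,\alpha)^3\coth(\alpha,x)\coth(2\alpha,x)$, and the identities $\coth^2 z=1+\sinh^{-2}z$, $\coth z\coth 2z=1+\tfrac12\sinh^{-2}z$ then reproduce both cases of $\widetilde{c_\alpha}$, with the $1$'s absorbed into the constant. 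Your remark that the long root $2\alpha$ receives no cross contribution is also right.

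The genuine gap is in the step you yourself flag as the main obstacle: the constancy of the non-proportional part $\Theta$. You propose to derive the vanishing of the residue at a wall $(\gamma,x)=0$ from \eqref{WDVVmod} via a two-plane decomposition, and you defer the derivation to case-by-case verification. This is both incomplete and aimed at the wrong mechanism. The residue is a multiple of $\sum_{\beta\not\sim\gamma} c_\beta(\beta,\beta)(\gamma,\beta)\coth(\beta,x)$ restricted to the hyperplane $(\gamma,x)=0$, and it vanishes by Weyl-reflection symmetry alone: the involution $\beta\mapsto s_\gamma\beta$ preserves $c_\beta$ (invariance of the multiplicity function), preserves $(\beta,\beta)$, preserves $\coth((\beta,x))$ for $x$ on the wall since $(s_\gamma\beta,x)=(\beta,s_\gamma x)=(\beta,x)$ there, and flips the sign of $(\gamma,\beta)$; hence the terms cancel in pairs. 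This is exactly the mechanism behind the rational identity \eqref{thezeroterm} and behind the analogous non-singularity claim inside the proof of Theorem \ref{geometricconditions}, and it requires no input from \eqref{WDVVmod}. Your assertion that this is ``the only place where \eqref{WDVVmod} is genuinely used'' is therefore backwards: the WDVV hypothesis plays no role in the rearrangement of the bosonic potential; it appears in the Proposition only because it is the standing assumption under which $H_1$ is a supersymmetric Hamiltonian at all (Theorem \ref{hamiltonian1manyparticleTrig}). Note also that \eqref{WDVVmod} is, by Theorem \ref{geometricconditions}, a single global condition $\mathcal{B}^{(a,b)}_{\mathcal R_+}=0$ rather than a family of conditions on individual two-planes, so the route you sketch would not assemble cleanly even if pursued. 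Replacing your deferred residue argument by the reflection argument closes the gap.
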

Indeed, it is easy to see that for the crystallographic root system $\mathcal{R}$ the term
\begin{align*}
\sum_{\substack{\beta, \alpha \in \mathcal R \\ \beta \not \sim \alpha}} c_\alpha c_\beta (\alpha, \alpha) (\beta, \beta) (\alpha, \beta) \coth(\alpha,x) \coth(\beta, x)
\end{align*}
 is non-singular at $\tanh(\alpha,x)=0$ for all $\alpha \in \mathcal{R}$, hence it is constant. One can show that the Hamiltonian $H_1$ given by \eqref{H1trig} simplifies to the required form.

We now show that solutions to equations \eqref{WDVVmod} exist for the root systems $\mathcal{R}=BC_N$, $\mathcal{R}=F_4$ and $\mathcal{R}=G_2$, with special collections of invariant multiplicities. 

Let $\mathcal{R_+}$ be a positive subsystem in the root system $\mathcal{R}$. For a pair of vectors $a, b \in V$ we define a $2$-form $\mathcal{B}^{(a,b)}_{\mathcal{R}_+}$ by
\begin{align}\label{B2form}
\mathcal{B}^{(a,b)}_{\mathcal{R}_+}= \sum_{\beta, \gamma \in \mathcal{R}_+} c_\beta c_\gamma (\beta, \gamma) B_{\beta, \gamma}(a, b) \beta \wedge \gamma,
\end{align}
where $B_{\alpha, \beta}(a, b)= \alpha \wedge \beta(a, b)  = (\alpha, a)(\beta, b) - (\alpha, b)(\beta, a)$. 
The form $\mathcal{B}^{(a,b)}_{\mathcal{R}_+}$ has good properties with regard to the action of the corresponding Weyl group $W$. Namely, the following statement takes place. 

\begin{proposition}\label{invariance} The $2$-form (\ref{B2form}) is $W$-invariant, that is 
\begin{align}\label{invarianceBform}
w \mathcal{B}^{(a,b)}_{\mathcal{R}_+} = \mathcal{B}^{(wa,wb)}_{\mathcal{R}_+}= \mathcal{B}^{(wa, wb)}_{w \mathcal{R_+}},
\end{align}
for any $w \in W$. 
\end{proposition}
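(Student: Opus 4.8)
The plan is to prove Proposition \ref{invariance} by a direct computation, reducing the seemingly complicated statement about the $2$-form $\mathcal{B}^{(a,b)}_{\mathcal{R}_+}$ to a change-of-variables in the summation index. The key observation is that $w$ acts on the form in three separate places, and these actions are compatible: it acts on the arguments $(a,b)$, on the wedge factors $\beta \wedge \gamma$, and (via the substitution $\beta \mapsto w^{-1}\beta$, $\gamma\mapsto w^{-1}\gamma$) it permutes the summation set $\mathcal{R}_+$ up to sign. First I would establish the first equality $w\mathcal{B}^{(a,b)}_{\mathcal{R}_+}=\mathcal{B}^{(wa,wb)}_{\mathcal{R}_+}$, where $w$ on the left acts on the wedge factors by $w(\beta\wedge\gamma)=(w\beta)\wedge(w\gamma)$. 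Applying this, and then reindexing the sum by $\beta'=w\beta$, $\gamma'=w\gamma$, one must check that every ingredient transforms correctly.

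The main ingredients to track are the following. Since $W$ is a group of orthogonal transformations preserving $\mathcal{R}$, the scalar products are invariant: $(w\beta, w\gamma)=(\beta,\gamma)$, and similarly the bilinear factor transforms as $B_{w\beta,w\gamma}(wa,wb)=B_{\beta,\gamma}(a,b)$ by orthogonality, since $(w\beta, wa)=(\beta,a)$ and so on. The multiplicity function is $W$-invariant by assumption, so $c_{w\beta}=c_\beta$. Thus after the substitution $\beta=w^{-1}\beta'$, $\gamma=w^{-1}\gamma'$, the coefficient $c_\beta c_\gamma (\beta,\gamma) B_{\beta,\gamma}(a,b)$ becomes $c_{\beta'}c_{\gamma'}(\beta',\gamma')B_{\beta',\gamma'}(wa,wb)$, and the wedge $w\beta\wedge w\gamma$ becomes $\beta'\wedge\gamma'$. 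The only delicate point is that $w$ need not preserve the chosen positive subsystem $\mathcal{R}_+$: it sends $\mathcal{R}_+$ to $w\mathcal{R}_+$, which differs from $\mathcal{R}_+$ by replacing some roots $\beta$ with their negatives $-\beta$.

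The step I expect to be the main (though mild) obstacle is handling this sign issue, which is precisely what the second equality $\mathcal{B}^{(wa,wb)}_{\mathcal{R}_+}=\mathcal{B}^{(wa,wb)}_{w\mathcal{R}_+}$ encodes. The point is that the summand is invariant under $\beta\mapsto -\beta$ (equivalently $\gamma\mapsto-\gamma$): replacing $\beta$ by $-\beta$ sends $c_\beta\to c_{-\beta}=c_\beta$ (multiplicities depend on the root up to sign), sends $(\beta,\gamma)\to -(\beta,\gamma)$, sends $B_{\beta,\gamma}\to -B_{\beta,\gamma}$, and sends $\beta\wedge\gamma\to -\beta\wedge\gamma$, so the product of these three sign changes is $(-1)^3=-1$ from the scalar factors times $(-1)$ from the wedge, giving total sign $+1$. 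Hence summing over $w\mathcal{R}_+$ gives the same result as summing over $\mathcal{R}_+$, since the two positive systems differ only by such sign flips on individual roots.

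Putting these pieces together, I would write the proof as a single chain of equalities: expand $\mathcal{B}^{(wa,wb)}_{w\mathcal{R}_+}$ by definition, perform the substitution $\beta'=w\beta$ to rewrite it as a sum over $\mathcal{R}_+$ of the transformed summand, use the three invariances above to recognise this as $w$ applied to $\mathcal{B}^{(a,b)}_{\mathcal{R}_+}$, and finally invoke the sign-invariance of the summand to replace $w\mathcal{R}_+$ by $\mathcal{R}_+$ in the middle expression. I expect no genuine difficulty beyond careful bookkeeping of signs and the orthogonality of $W$; the result is essentially that $\mathcal{B}$ is built entirely from $W$-equivariant data.
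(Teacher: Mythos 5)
Your change-of-variables step is sound and does give the outer equality $w\,\mathcal{B}^{(a,b)}_{\mathcal{R}_+}=\mathcal{B}^{(wa,wb)}_{w\mathcal{R}_+}$: orthogonality of $w$ and $W$-invariance of the multiplicities turn the sum over $\mathcal{R}_+$ into the same sum over $w\mathcal{R}_+$ with $(a,b)$ replaced by $(wa,wb)$. The gap is in the step where you pass from $w\mathcal{R}_+$ back to $\mathcal{R}_+$. The summand is \emph{not} invariant under $\beta\mapsto-\beta$; it is anti-invariant. Writing $T(\beta,\gamma)=c_\beta c_\gamma(\beta,\gamma)\,B_{\beta,\gamma}(a,b)\,\beta\wedge\gamma$, the replacement $\beta\mapsto-\beta$ produces exactly three sign changes in total --- one from $(\beta,\gamma)$, one from $B_{\beta,\gamma}(a,b)$, and one from $\beta\wedge\gamma$ --- so $T(-\beta,\gamma)=-T(\beta,\gamma)$. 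Your count ``$(-1)^3$ from the scalar factors times $(-1)$ from the wedge'' double-counts: only two of the three sign changes come from scalar factors, and the third \emph{is} the wedge. Consequently the claim that summing over $w\mathcal{R}_+$ gives the same result as summing over $\mathcal{R}_+$ does not follow; indeed, by this very parity the sum of $T(\beta,\gamma)$ over all of $\mathcal{R}\times\mathcal{R}$ vanishes identically, so sign-flip symmetry of the summand cannot be the mechanism behind the proposition.

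The equality $\mathcal{B}^{(wa,wb)}_{\mathcal{R}_+}=\mathcal{B}^{(wa,wb)}_{w\mathcal{R}_+}$ is precisely the nontrivial content here, and it needs an extra input: one must show that the terms involving the roots whose sign is flipped in passing from $\mathcal{R}_+$ to $w\mathcal{R}_+$ contribute zero, so that the anti-invariance does no harm. This is how the paper proceeds: it reduces to $w=s_\alpha$ for a simple root $\alpha$, isolates all terms of $\mathcal{B}^{(a,b)}_{\mathcal{R}_+}$ containing $\alpha$, and uses the vanishing identity
\begin{align*}
\sum_{\beta\in\mathcal{R}_+}c_\beta\,(\alpha,\beta)\,B_{\alpha,\beta}(a,b)\,\alpha\wedge\beta=0,
\end{align*}
after which replacing $\alpha$ by $-\alpha$ in the index set is harmless because the affected terms already sum to zero. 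Without this (or an equivalent) vanishing statement your argument does not close; with it, your reindexing framework would go through.
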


\begin{proof}
Let us choose a simple root $\alpha \in \mathcal{R_+}$. It is sufficient to prove the statement for $w=s_\alpha$. Let us rewrite $\mathcal{B}^{(a,b)}_{\mathcal{R}_+}$ as 

\begin{align*}
\mathcal{B}^{(a,b)}_{\mathcal{R}_+}= 2 c_{\alpha} \sum_{\beta \in \mathcal{R}_+} c_\beta (\alpha, \beta) B_{\alpha, \beta}(a, b) \alpha \wedge \beta + \sum_{\beta, \gamma \in \mathcal{R}_+ \setminus \{\alpha \}} c_\beta c_\gamma (\beta, \gamma) B_{\beta, \gamma}(a, b) \beta \wedge \gamma.
\end{align*}
It is easy to see that for any $\beta, \gamma \in \mathcal{R}$ 
\begin{align}\label{wB}
 B_{\beta, \gamma}(s_\alpha a, s_\alpha b) = B_{s_\alpha \beta, s_\alpha \gamma}(a, b)
\end{align}
since $(u, s_\alpha v)=(s_\alpha u, v)$ for any $u, v \in V$. Let us now apply $s_\alpha$ to equality (\ref{B2form}). Since $s_\alpha (\mathcal{R}_+ \setminus \{\alpha\})=\mathcal{R}_+ \setminus \{\alpha\}$ we have
\begin{align*}
 s_\alpha \mathcal{B}^{(a,b)}_{\mathcal{R}_+}&= - 2 c_{\alpha} \sum_{\beta \in \mathcal{R}_+} c_\beta (\alpha, \beta) B_{\alpha, \beta}(a, b) \alpha \wedge  \beta + \sum_{\beta, \gamma \in  \mathcal{R}_+ \setminus \{\alpha)} c_\beta c_\gamma (\beta, \gamma) B_{\beta, \gamma}(a, b) s_\alpha \beta \wedge  s_\alpha \gamma \\
 &= 2 c_{\alpha} \sum_{\beta \in \mathcal{R}_+} c_\beta (\alpha, \beta) B_{s_\alpha \alpha, s_\alpha \beta}(a, b) \alpha \wedge  \beta + \sum_{\beta, \gamma \in  \mathcal{R}_+ \setminus \{\alpha)} c_\beta c_\gamma (\beta, \gamma) B_{s_\alpha \beta, s_\alpha \gamma}(a, b)  \beta \wedge  \gamma \nonumber \\
 &= \mathcal{B}^{(s_\alpha a, s_\alpha b)}_{\mathcal{R}_+},
\end{align*}
by the relation \eqref{wB}. This proves the first equality in \eqref{invarianceBform}. In order to prove the second equality \eqref{invarianceBform} let us notice that in fact $$\sum_{\beta \in \mathcal{R_+}} c_\beta (\alpha, \beta) B_{\alpha, \beta}(a, b) \alpha \wedge \beta =0.$$ Hence $s_{\alpha} \mathcal{B}^{(a, b)}_{\mathcal{R_+}}= \mathcal{B}^{(s_\alpha a, s_\alpha b)}_{s_\alpha \mathcal{ R_+}}$. 
\end{proof}

Let us derive some conditions for a function $F$ to satisfy equations of the form (\ref{WDVVmod}). Let $F_i$ be the $N\times N$ matrices of third derivatives of $F$, $(F_i)_{lm} = \frac{\partial^3 F}{\partial x_i \partial x_l \partial x_m}$, and for any vector $a=(a_1, \dots, a_N) \in V$ let us denote $F_a = \sum_{i=1}^N a_i F_i$. 

\begin{theorem}\label{geometricconditions}
Let $a, b \in V$. Then the equations 
\begin{align*}
F_a F_b= F_b F_a
\end{align*}
are satisfied if and only if for any positive system $\mathcal{R}_+$ 
\begin{align}\label{B=0}
\mathcal{B}^{(a,b)}_{\mathcal{R}_+}=0. 
\end{align}
\end{theorem}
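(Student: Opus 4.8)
The plan is to turn the matrix identity $F_aF_b=F_bF_a$ into an explicit trigonometric identity and then analyse it through its poles and its behaviour at infinity. First I would compute the third derivatives: since $f'''(z)=\coth z$, one has $\frac{\partial^3 F}{\partial x_i\partial x_l\partial x_m}=\sum_{\gamma\in\mathcal R}c_\gamma\coth((\gamma,x))\gamma_i\gamma_l\gamma_m$, so that $F_a=\sum_{\gamma\in\mathcal R}c_\gamma\coth((\gamma,x))(\gamma,a)\,\gamma\otimes\gamma$ is a sum of rank-one symmetric matrices $\gamma\otimes\gamma=(\gamma_l\gamma_m)$. Using $(\gamma\otimes\gamma)(\beta\otimes\beta)=(\gamma,\beta)\,\gamma\otimes\beta$ I would get
\[
F_aF_b-F_bF_a=\sum_{\gamma,\beta\in\mathcal R}c_\gamma c_\beta\coth((\gamma,x))\coth((\beta,x))(\gamma,\beta)B_{\gamma,\beta}(a,b)\,\gamma\otimes\beta,
\]
which, being a commutator of symmetric matrices, is antisymmetric; symmetrising the summand under $\gamma\leftrightarrow\beta$ rewrites it as the $2$-form
\[
[F_a,F_b]=\tfrac12\sum_{\gamma,\beta\in\mathcal R}c_\gamma c_\beta\coth((\gamma,x))\coth((\beta,x))(\gamma,\beta)B_{\gamma,\beta}(a,b)\,\gamma\wedge\beta.
\]
Its coefficient pattern is exactly that of $\mathcal B^{(a,b)}_{\mathcal R_+}$, but dressed with the factors $\coth\coth$. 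Note that collinear pairs drop out, since $B_{\gamma,\beta}(a,b)=0$ whenever $\gamma\parallel\beta$; hence no squared-$\coth$ terms survive and the expression has at worst simple poles.

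The second ingredient I would use is analytic. The $2$-form $[F_a,F_b]$ is a $\wedge^2V$-valued trigonometric function, holomorphic in $x$ away from the root hyperplanes $(\gamma,x)=0$, with at worst simple poles there. A bounded such function with no poles is constant (Liouville, using that $\coth$ is bounded away from its poles and periodic in the imaginary directions), and its constant value is the limit as $x\to\infty$ inside a Weyl chamber $\mathcal C$. Thus $[F_a,F_b]\equiv0$ if and only if both its polar parts and its chamber limits vanish. For necessity it then suffices to compute the chamber limit: as $x\to\infty$ generically inside $\mathcal C$ one has $\coth((\gamma,x))\to s_\gamma=\pm1$ according as $\gamma$ is positive or negative for $\mathcal C$; setting $\tilde\gamma=s_\gamma\gamma\in\mathcal R_+^{\mathcal C}$ and invoking the $W$-invariance of $c$ shows the limit equals $2\,\mathcal B^{(a,b)}_{\mathcal R_+^{\mathcal C}}$. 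Since every positive system occurs as $\mathcal R_+^{\mathcal C}$ for some chamber, $[F_a,F_b]=0$ forces $\mathcal B^{(a,b)}_{\mathcal R_+}=0$ for all positive systems.

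For sufficiency I would reduce to rank two. Grouping the double sum by the two-planes $\pi$ spanned by non-collinear pairs of roots gives $[F_a,F_b]=\sum_\pi\Omega_\pi$ with $\Omega_\pi$ valued in the line $\wedge^2\pi$; as decomposable $2$-forms from distinct planes are linearly independent, $[F_a,F_b]\equiv0$ is equivalent to $\Omega_\pi\equiv0$ for every $\pi$, and likewise $\mathcal B^{(a,b)}_{\mathcal R_+}=\sum_\pi\mathcal B_\pi$ splits off a contribution in each $\wedge^2\pi$. Moreover, as $\mathcal R_+$ ranges over positive systems, $\mathcal R_+\cap\pi$ ranges over the positive systems of the rank-two subsystem $\mathcal R\cap\pi$. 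It therefore suffices to prove the statement for each rank-two subsystem: assuming $\mathcal B_\pi$ vanishes for every positive system of $\mathcal R\cap\pi$, one must show the scalar function $\lambda_\pi$ defined by $\Omega_\pi=\lambda_\pi\,\omega_\pi$ vanishes. By the analytic principle this reduces to showing $\lambda_\pi$ is pole-free; restricting to the line $(\alpha_0,x)=0$ and applying the addition formula $\coth z\,\coth w=\coth(z+w)(\coth z+\coth w)-1$ collapses the $\coth\coth$ products, so that the residue becomes a one-variable combination of $\coth$'s with distinct poles, whose vanishing should match precisely the rank-two conditions $\mathcal B_\pi=0$.

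The main obstacle is exactly this last, rank-two, step: controlling the simple-pole residues and verifying that they are governed by the algebraic conditions $\mathcal B_\pi=0$ and nothing stronger. Carrying it out requires the $\coth$ addition formula together with the explicit dihedral combinatorics of the finitely many rank-two subsystems ($A_1\oplus A_1$, $A_2$, $B_2$, $BC_2$, $G_2$) that can arise inside $BC_N$, $F_4$ and $G_2$; I would use the $W$-invariance of $\mathcal B$ from Proposition \ref{invariance} to cut down the number of cases. By contrast, the necessity direction via chamber limits I expect to be short.
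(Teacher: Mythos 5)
Your computation of $[F_a,F_b]$ as a $\wedge^2V$-valued trigonometric sum and your necessity argument via the cone limit are both correct and coincide with the paper's. The genuine gap is in the sufficiency direction, which you yourself flag as ``the main obstacle'' and do not carry out: you treat the poles of $[F_a,F_b]$ along the hyperplanes $\tanh(\alpha,x)=0$ as singularities whose residues must be matched, plane by plane and case by case, against the rank-two conditions $\mathcal B_\pi=0$, via a $\coth$ addition formula and the dihedral combinatorics of $A_1\oplus A_1$, $A_2$, $B_2$, $BC_2$, $G_2$. That entire branch is both unfinished and unnecessary. The observation you are missing is that the sum
\begin{equation*}
\sum_{\alpha,\beta\in\mathcal R_+} c_\alpha c_\beta\, B_{\alpha,\beta}(a,b)\,(\alpha,\beta)\,\coth(\alpha,x)\coth(\beta,x)\,\alpha\wedge\beta
\end{equation*}
is non-singular \emph{unconditionally} — for any $W$-invariant multiplicity function, whether or not the forms $\mathcal B$ vanish. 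Indeed, since each summand is even under $\beta\mapsto-\beta$ one may sum over the full system $\mathcal R$ at the cost of a factor; terms with $\beta$ collinear to $\alpha$ drop out because $\alpha\wedge\beta=0$; and the residue along a pole hyperplane of $\coth(\alpha_0,\cdot)$ cancels in pairs $\beta\leftrightarrow s_{\alpha_0}\beta$, because there $\coth(\beta,x)=\coth(s_{\alpha_0}\beta,x)$ (using $i\pi$-periodicity and integrality of the Cartan pairings), while $c_{s_{\alpha_0}\beta}=c_\beta$, $(\alpha_0,s_{\alpha_0}\beta)=-(\alpha_0,\beta)$, $B_{\alpha_0,s_{\alpha_0}\beta}(a,b)=B_{\alpha_0,\beta}(a,b)$ and $\alpha_0\wedge s_{\alpha_0}\beta=\alpha_0\wedge\beta$, so the two contributions are opposite.

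Once this is in place, your own Liouville-type principle closes the argument in one step and in both directions: the sum is entire and bounded, hence constant, and its value is its limit in a cone where $\coth(\alpha,x)\to1$ for all $\alpha\in\mathcal R_+$, namely $\mathcal B^{(a,b)}_{\mathcal R_+}$. Therefore $[F_a,F_b]=0$ if and only if $\mathcal B^{(a,b)}_{\mathcal R_+}=0$, with no rank-two reduction, no addition formula, and no case analysis. This is exactly the paper's proof; as written, your proposal establishes only the ``only if'' half of the theorem.
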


\begin{proof}
We have 
\begin{align*}
(F_a)_{lk} = \sum_{\alpha \in \mathcal{R}} c_\alpha (\alpha, a) \alpha_l \alpha_k \coth(\alpha, x), 
\end{align*}
and therefore
\begin{align*}
F_a F_b = \sum_{\alpha, \beta \in \mathcal{R}} c_\alpha c_\beta (\alpha, a) (\alpha, b) (\alpha, \beta) \coth(\alpha,x) \coth(\beta,x) \alpha \otimes \beta.
\end{align*}
Hence the equations $[F_a, F_b]=0$ are equivalent to 
\begin{align*}
\sum_{\alpha, \beta \in \mathcal{R}} c_\alpha c_\beta B_{\alpha, \beta}(a, b) (\alpha, \beta) \coth(\alpha,x) \coth(\beta,x) \alpha \otimes \beta=0,
\end{align*}
which can be easily checked to be equivalent to
\begin{align}\label{sumcotalphacotbeta}
\sum_{\alpha, \beta \in \mathcal{R_+}} c_\alpha c_\beta B_{\alpha, \beta}(a, b) (\alpha, \beta) \coth(\alpha,x) \coth(\beta,x) \alpha \wedge \beta=0.
\end{align}
It is easy to see that the sum in the left-hand side of the equality (\ref{sumcotalphacotbeta}) is non-singular at $\tanh(\alpha, x)=0$ for all $\alpha \in \mathcal{R_+}$, hence this sum is always constant. In an appropriate limit in a cone $\coth(\alpha, x) \rightarrow 1$ for all $\alpha \in \mathcal{R_+}$, and therefore the equality (\ref{sumcotalphacotbeta}) is equivalent to the equality
\begin{align*}
\sum_{\alpha, \beta \in \mathcal{R_+}} c_\alpha c_\beta B_{\alpha, \beta}(a, b) (\alpha, \beta)\alpha \wedge \beta=0,
\end{align*}
as required.
\end{proof}

Let $e_i$, $i=1, \dots, N$ be the standard orthonormal basis in $V$. We may express $\mathcal{B}^{(a,b)}_{\mathcal{R}_+}$ in the basis $e_i \wedge e_j$ of $\Lambda^2 V$,
\begin{align}\label{B2formeiej}
\mathcal{B}^{(a,b)}_{\mathcal{R}_+}= \sum_{1 \leq i < j \leq N} g_{ij} e_i \wedge e_j,
\end{align}
for some scalars $g_{ij}=g_{ij}(a,b)$.  Then linear independence of the basis vectors and condition (\ref{B=0}) give rise to ${N \choose 2}$ equations $g_{ij}(a,b)=0$. If $A_{N-1} \subset \mathcal{R}$ then by Proposition \ref{invariance} we should have that $g_{ij}(a,b)=\pm g_{\sigma(i) \sigma(j)}(\sigma(a), \sigma(b))$ for any transposition $\sigma \in S_N$ which acts on vectors $a,b$ by the corresponding permutation of coordinates. This shows that the condition \eqref{B=0} reduces to a single equation $g_{ij}=0$ for any fixed $i, j$ and general $a, b \in V$.  For convenience we will write below $B_{e_i, e_j}(a, b)$ as $B_{ij}(a,b)$.

\begin{theorem}\label{BCN}
Let $\mathcal{R}=BC_N$. Let the positive half of the root system $BC_N$ be 
\begin{align*}
\eta e_i, \, 2 \eta e_i, \, 1 \leq i \leq N; \quad \eta( e_i \pm e_j), \, 1\leq i < j \leq N,
\end{align*}
where $\eta \in \mathbb{C}^{\times}$ is a parameter.  Let $r$ be the multiplicity of vectors $\eta e_i$, and let $s$ be the multiplicity of vectors $2 \eta e_i$. Let $q$ be the multiplicity of vectors $\eta (e_i \pm e_j)$. Then the function
\begin{align}\label{FBCN}
F=  \sum_{i=1}^N (r f(\eta x_i) + s f(2 \eta x_i) ) +  q \sum_{i <j}^N f (\eta(x_i \pm  x_j))
\end{align}
satisfies conditions \eqref{WDVVmod} if and only if $r= -8s - 2 (N-2)q$.  The corresponding supersymmetric Hamiltonians given by \eqref{rescaledH_2}, \eqref{rescaledH_1} take the form 
\begin{align}\label{H1BCN}
H_1& = - \Delta + \eta^4 \sum_{i=1}^N \Big(  \frac{-(8s +2(N-2)q)(1-2 (N-2)q\eta^2)}{\sinh^2 \eta x_i}  + \frac{16s (1 +4s \eta^2 ) }{\sinh^2 2 \eta x_i}\Big)\\ &+ \eta^4\sum_{i < j}^N \frac{4q  (1+2 q\eta^2 )  }{\sinh^2(\eta(x_i \pm x_j))}+ \widetilde{\Phi} \nonumber, 
\end{align}
and
\begin{align}\label{H2BCN}
H_2&=  - \Delta + 2 \eta^3 \sum_{i=1}^N \big(-(8s +2(N-2)q) \coth \eta x_i + 8s \coth 2 \eta x_i \big)\partial_{i} \\&+ 4q \eta^3 \sum_{i < j  }^N  \coth(\eta (x_i \pm x_j))( \partial_{i} \pm \partial_{j}) + \Phi \nonumber, 
\end{align}
with $\Phi$ given by 
\begin{align*}
\Phi&= 4 \eta^4 \sum_{i=1}^N \Big(  \frac{-(8s+2(N-2)q)}{\sinh^2 \eta x_i} + \frac{16s}{\sinh^2 2 \eta x_i} \Big) \Big( \psi^{bi} \psi_b^i \bar{\psi}_d^i  \bar{\psi}^{di} - \psi_b^i  \bar{\psi}^{bi} \Big) \\
&+4 \eta^4 \sum_{\epsilon \in \{1, -1\}}  \sum_{ m < t }^N \sum_{i,j,l,k} \frac{q d_{mti} d_{mtj}}{\sinh^2(\eta(x_m + \epsilon x_t))} \Big( d_{mtl} d_{mtk} \psi^{bi} \psi_b^j \bar{\psi}_d^l  \bar{\psi}^{dk} - 2\psi_b^i  \bar{\psi}^{bj} \Big),
\end{align*}
where $d_{mtk}=d_{mtk}(\epsilon)= \delta_{mk} + \epsilon \delta_{tk}$, and $\widetilde{\Phi}= \Phi + const$.
\end{theorem}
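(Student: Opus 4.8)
The plan is to treat the two assertions separately: first the equivalence between \eqref{WDVVmod} and the relation $r=-8s-2(N-2)q$, and then the explicit shape of the Hamiltonians. For the equivalence I would start from Theorem \ref{geometricconditions}, which reduces the equations \eqref{WDVVmod} for $F$ given by \eqref{FBCN} to the vanishing of the $2$-form $\mathcal{B}^{(a,b)}_{\mathcal{R}_+}$ for all $a,b\in V$. Writing $\mathcal{B}^{(a,b)}_{\mathcal{R}_+}=\sum_{i<j}g_{ij}(a,b)\,e_i\wedge e_j$ as in \eqref{B2formeiej}, it suffices by Proposition \ref{invariance} together with the full permutation and sign-change symmetry of the Weyl group of $BC_N$ to examine a single coefficient, say $g_{12}$. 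Moreover, since the sign change $e_3\mapsto -e_3$ lies in this Weyl group, fixes $e_1\wedge e_2$ and sends each $B_{1m}$, $B_{2m}$ (for $m\ge 3$) to its negative, all off-diagonal terms $B_{1m}(a,b)$, $B_{2m}(a,b)$ must drop out of $g_{12}$, so that $g_{12}(a,b)=\kappa\,B_{12}(a,b)$ for a scalar $\kappa$. The condition \eqref{B=0} then becomes the single scalar equation $\kappa=0$.

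The heart of the argument is the computation of $\kappa$. Evaluating $g_{12}$ at $a=e_1$, $b=e_2$ and using $B_{\beta,\gamma}(e_1,e_2)=\beta_1\gamma_2-\beta_2\gamma_1$ gives
\[
\kappa=\sum_{\beta,\gamma\in\mathcal{R}_+}c_\beta c_\gamma\,(\beta,\gamma)\,(\beta_1\gamma_2-\beta_2\gamma_1)^2.
\]
I would organise the sum according to the support of the positive roots in coordinates $1$ and $2$: those proportional on these coordinates to $(1,0)$, namely $\eta e_1$, $2\eta e_1$ and $\eta(e_1\pm e_j)$ with $j\ge3$; those proportional to $(0,1)$, namely the analogous roots in coordinate $2$; those of shape $(1,\pm1)$, namely $\eta(e_1\pm e_2)$; and the remaining roots supported away from coordinates $1,2$. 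Only pairs with nonzero minor $\beta_1\gamma_2-\beta_2\gamma_1$ and nonzero inner product $(\beta,\gamma)$ survive, and a short case check shows the surviving contributions come exactly from pairing a $(1,0)$-root with one of $\eta(e_1\pm e_2)$ (together with the symmetric ordering), while the cross terms — a $(1,0)$-root against a $(0,1)$-root, and a $(0,1)$-root against $\eta(e_1\pm e_2)$ — cancel in pairs over the sign choices because of the alternating factor $(\beta,\gamma)$. Summing the survivors, keeping track of $r,s,q$, of the factor $2(N-2)$ from the roots $\eta(e_1\pm e_j)$, $j\ge3$, and of the squared minors, yields $\kappa=4q\,\eta^6\bigl(r+8s+2(N-2)q\bigr)$, so that (for $q\neq0$) $\kappa=0$ is equivalent to $r=-8s-2(N-2)q$, as claimed.

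For the Hamiltonians I would specialise the general formulas. The form of $H_1$ in \eqref{H1BCN} follows from Proposition \ref{H1rootsystem} by computing the coefficients $\widetilde{c_\alpha}$ on the three Weyl orbits of $BC_N$: the short roots $\eta e_i$, for which $2\alpha\in\mathcal{R}$ and so $c_\alpha=r$, $c_{2\alpha}=s$, $(\alpha,\alpha)=\eta^2$; the roots $2\eta e_i$, with $c_\alpha=s$, $(\alpha,\alpha)=4\eta^2$; and the long roots $\eta(e_i\pm e_j)$, with $c_\alpha=q$, $(\alpha,\alpha)=2\eta^2$. Substituting the constraint $r=-8s-2(N-2)q$ (so that $r+8s=-2(N-2)q$) turns these into the coefficients $-(8s+2(N-2)q)(1-2(N-2)q\eta^2)$, $16s(1+4s\eta^2)$ and $4q(1+2q\eta^2)$, each multiplied by $\eta^4$. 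The form of $H_2$ in \eqref{H2BCN} is obtained by inserting the $BC_N$ roots and multiplicities into \eqref{rescaledH_2}, using that the sum over all roots doubles the sum over $\mathcal{R}_+$ since $\coth(\alpha,x)\partial_\alpha$ is even under $\alpha\mapsto-\alpha$; the fermionic term $\Phi$ follows from \eqref{phitrig} by the same substitution, with $d_{mtk}=\delta_{mk}+\epsilon\delta_{tk}$ recording the coordinates of the long roots and the doubling again accounting for the overall factor. These last steps are routine verifications.

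I expect the main obstacle to be the bookkeeping in the computation of $\kappa$: correctly enumerating all pairs of positive roots with nonzero minor and nonzero inner product, tracking the combinatorial factor $2(N-2)$ and the various signs and factors of $2$ arising from ordered pairs, from the sign choices in $e_1\pm e_j$, and from the doubled length of the roots $2\eta e_i$, and checking that the potentially dangerous cross terms cancel. Once $\kappa$ is assembled correctly, both the equivalence and the explicit Hamiltonians follow with little further effort.
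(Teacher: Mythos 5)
Your proposal is correct and follows essentially the same route as the paper: reduce \eqref{WDVVmod} to the single coefficient $g_{12}$ via Theorem \ref{geometricconditions} and Proposition \ref{invariance}, identify the surviving pairs (a root supported on $e_1$ against $\eta(e_1\pm e_2)$, with the cross terms cancelling over sign choices), obtain $g_{12}=4q(r+8s+2(N-2)q)\eta^6 B_{12}(a,b)$, and then read off $H_1$, $H_2$ from Proposition \ref{H1rootsystem} and Theorem \ref{hamiltonian2manyparticleTrig}. The only cosmetic differences are that you first argue $g_{12}(a,b)=\kappa B_{12}(a,b)$ by Weyl-group symmetry and evaluate at $a=e_1$, $b=e_2$ (for which you need all the sign changes $e_m\mapsto -e_m$, $m\ge 3$, not just $e_3\mapsto -e_3$), whereas the paper computes $g_{12}(a,b)$ directly and sees the off-diagonal $B_{ij}$ terms cancel; your parenthetical caveat about $q\neq 0$ is a fair point that the paper leaves implicit.
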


\begin{proof}
Let us use Theorem \ref{geometricconditions} in order to deal with conditions (\ref{WDVVmod}). Let us consider the coefficient $g_{12}(a,b)$ at $e_1 \wedge e_2$ by collecting respective terms in the corresponding form $\mathcal{B}^{(a,b)}_{\mathcal{R_+}}$ given by \eqref{B2form}, \eqref{B2formeiej}. The non-trivial contribution to $g_{12}$ comes only from the following pairs of vectors $\{\beta, \gamma\}$ in the expansion \eqref{B2form}:
\begin{align*}
(1)\, \,  \{ \eta e_1,\eta( e_1 \pm e_2)\},\quad (2)\, \, \{2 \eta  e_1, \eta( e_1 \pm e_2)\} ,\quad (3)\, \,  \{\eta(e_1 \pm e_2), \eta(e_1 \pm e_j)\}, \, \,  3 \leq j \leq N,
\end{align*}
since contributions from pairs $\{\eta(e_1 \pm e_2), \eta(e_2 \pm e_j)\}$ and $\{ \eta(e_1 \pm e_j), \eta(e_2 \pm e_j)\}$ is zero each. Pairs $(1)$ contribute  $4  rq \eta^6 B_{12}(a,b)$, pairs $(2)$ contribute $32 sq \eta^6 B_{12}(a,b) $ and pairs $(3)$ contribute $8 q^2 (N-2) \eta^6 B_{12}(a,b)$. Therefore 
\begin{align*}
g_{12}(a, b)=  4q (r + 8s +2(N-2)q)\eta^6  B_{12}(a, b).
\end{align*}
By Proposition \ref{invariance}, $g_{ij}=0$ for all $1 \leq i < j \leq N$ if and only if $ r=-8s -2(N-2)q$. The form of the Hamiltonians $H_2$, $H_1$ follows from Theorem \ref{hamiltonian2manyparticleTrig} and Proposition \ref{H1rootsystem} respectively. Then the statement follows. 
\end{proof}

\begin{remark}
We note that for the multiplicity $s=0$ Theorem \ref{BCN} is contained in \cite{HM}. Indeed, Theorem $2.3$ in \cite{HM} states that the function $F$ given by formula \eqref{FBCN} with root system $\mathcal{R}=B_N$ satisfies WDVV equations. It also follows from the proof of Theorem 2.3 in \cite{HM} that the corresponding metric is proportional to the standard metric $\delta_{ij}$. Therefore WDVV equations are equivalent to equations \eqref{WDVVmod}.
\end{remark}

\begin{theorem}
Let $\mathcal{R}=F_4$. Let the positive half of the root system $F_4$ be
\begin{align*}
\eta e_i,  \, 1 \leq i \leq 4; \quad \eta (e_i \pm e_j), \, 1\leq i < j \leq 4; \quad  \frac{\eta}{2} (e_1 \pm  e_2 \pm e_3 \pm e_4 ),
\end{align*}
where $\eta \in \mathbb{C}^\times$ is a parameter. Let $r$ be the multiplicity of short roots $\eta e_i$,  $\frac{\eta}{2} (e_1 \pm  e_2 \pm e_3 \pm e_4 )$ and let $q$ be the multiplicity of long roots $\eta(e_i \pm e_j)$. The function 
\begin{align*}
F=  r \sum_{i=1}^4 f(\eta x_i) +r\sum_{\epsilon_i \in \{1,-1\}} f(\frac{\eta}{2} (\epsilon_1 x_1 + \epsilon_2  x_2 + \epsilon_3  x_3 + x_4))+  q \sum_{i <j}^4 f (\eta(x_i \pm  x_j))
\end{align*}
 satisfies conditions \eqref{WDVVmod} if and only if $r=-2q$ or $r=-4q$. The corresponding supersymmetric Hamiltonians \eqref{rescaledH_1}, \eqref{rescaledH_2} take the form 
\begin{align*}
H_1 &= - \Delta +  r(1+ r\eta^2)\eta^4 \Big( \sum_{i=1}^4 \frac{1}{\sinh^2 \eta x_i} +\sum_{\epsilon_i \in \{1, -1\}}\frac{1}{\sinh^2(\frac{\eta}{2}(\epsilon_1 x_1 +\epsilon_2 x_2 + \epsilon_3 x_3 + x_4))} \Big) \\ &+ \eta^4 \sum_{ i < j }^4 \frac{4q  (1+2 q\eta^2)}{\sinh^2(\eta(x_i \pm x_j))} +\widetilde{\Phi},
\end{align*}
and
\begin{align*}
H_2 &= - \Delta + r \eta^3   \sum_{\epsilon_i \in \{1, -1\}} \coth(\frac{\eta}{2}(\epsilon_1 x_1 + \epsilon_2  x_2 + \epsilon_3 x_3 +  x_4))  (\epsilon_1 \partial_{1} + \epsilon_2  \partial_{2}  + \epsilon_3  \partial_{3}  + \partial_{4} ) \\ &+2r \eta^3  \sum_{i=1}^4 \coth \eta x_i \partial_{ i} + 4q \eta^3 \sum_{ i < j}^4 \coth(\eta( x_i \pm x_j)) (\partial_{i} \pm \partial_{j}) +   \Phi
\end{align*}
with $\Phi$ given by
\begin{align*}
\Phi&= 4 \eta^4 \sum_{i=1}^4  \frac{r}{\sinh^2 \eta x_i}  \Big( \psi^{bi} \psi_b^i \bar{\psi}_d^i  \bar{\psi}^{di} - \psi_b^i  \bar{\psi}^{bi} \Big) \\
&+4\eta^4 \sum_{\epsilon \in \{1, -1 \}}  \sum_{ m < t }^4\sum_{i,j,l,k} \frac{q d_{mti} d_{mtj}}{\sinh^2 \eta((x_m + \epsilon x_t))} \Big( d_{mtl} d_{mtk} \psi^{bi} \psi_b^j \bar{\psi}_d^l  \bar{\psi}^{dk} - 2\psi_b^i  \bar{\psi}^{bj} \Big) \\
&+ 4\eta^4 \sum_{\epsilon_i \in \{1, -1\}} \sum_{i,j,l,k} \frac{r d_i d_j}{\sinh^2(\frac{\eta}{2} (\epsilon_1 x_1 + \epsilon_2 x_2 + \epsilon_3 x_3 + x_4))}\Big(d_l d_k \psi^{bi} \psi_b^i \bar{\psi}_d^i  \bar{\psi}^{di} - \psi_b^i  \bar{\psi}^{bi} \Big) ,
\end{align*}
where $r=-2q$ or $r=-4q$, $d_i=d_i(\epsilon_1, \epsilon_2, \epsilon_3)= \epsilon_1 \delta_{1i} + \epsilon_2  \delta_{2i}+ \epsilon_3 \delta_{3i} + \delta_{4i}$ and $\widetilde{\Phi}= \Phi + const$. 
\end{theorem}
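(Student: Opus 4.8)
The plan is to follow the strategy of the proof of Theorem~\ref{BCN} almost verbatim, with the root system $BC_N$ replaced by $F_4$. First I would invoke Theorem~\ref{geometricconditions} to convert the conditions \eqref{WDVVmod} into the single geometric requirement that the $2$-form $\mathcal{B}^{(a,b)}_{\mathcal{R}_+}$ of \eqref{B2form} vanishes for all $a,b\in V$. Writing $\mathcal{B}^{(a,b)}_{\mathcal{R}_+}=\sum_{i<j}g_{ij}(a,b)\,e_i\wedge e_j$ as in \eqref{B2formeiej}, each $g_{ij}(a,b)$ is a bilinear antisymmetric function of $a,b$, hence a priori a combination $\sum_{k<l}(\cdots)B_{kl}(a,b)$. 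Since the Weyl group of $F_4$ contains both the permutation group $S_4$ of the coordinates (i.e. $A_3\subset F_4$) and all single sign changes $e_m\mapsto -e_m$ (as $B_4\subset F_4$), Proposition~\ref{invariance} applies: the permutations identify all the $g_{ij}$ with one another as in the remark following \eqref{B2formeiej}, while the sign change $e_3\mapsto -e_3$ fixes $e_1\wedge e_2$ but negates $B_{kl}(a,b)$ for every $\{k,l\}\neq\{1,2\}$, forcing all off-diagonal terms in $g_{12}$ to cancel. Thus the whole system \eqref{B=0} collapses to the single scalar equation $g_{12}(a,b)=\kappa\,\eta^6 B_{12}(a,b)=0$, and it remains to compute the constant $\kappa$.

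The computation of $\kappa$ is the core of the proof. As in Theorem~\ref{BCN} I would collect, from the double sum \eqref{B2form}, exactly those unordered pairs $\{\beta,\gamma\}$ of positive roots whose wedge $\beta\wedge\gamma$ has a nonzero $e_1\wedge e_2$ component and whose inner product $(\beta,\gamma)$ is nonzero. These organise into families by multiplicity weight: short$\,\times\,$long and half$\,\times\,$long pairs contribute terms proportional to $rq$; long$\,\times\,$long pairs contribute $q^2$; and, crucially and unlike the $BC_N$ case, short$\,\times\,$half and half$\,\times\,$half pairs contribute $r^2$ (for instance $\eta e_1$ with $\tfrac{\eta}{2}(e_1\pm e_2\pm e_3\pm e_4)$ has inner product $\pm\tfrac{\eta^2}{2}$ and a wedge containing $e_1\wedge e_2$, and two half-roots differing by a sign in the first slot pair up similarly). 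Consequently $\kappa$ is a genuine quadratic form $A r^2+Brq+Cq^2$ rather than a linear one. Carrying out the bookkeeping I expect to obtain $B=6A$ and $C=8A$ with $A\neq 0$, so that $\kappa=A(r+2q)(r+4q)$, whence \eqref{WDVVmod} holds if and only if $r=-2q$ or $r=-4q$.

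The main obstacle is precisely this half-root bookkeeping. There are eight roots $\tfrac{\eta}{2}(\epsilon_1 e_1+\epsilon_2 e_2+\epsilon_3 e_3+e_4)$, and for each contributing pair one must track its inner product, the $e_1\wedge e_2$ coefficient of its wedge, and the sign with which it survives the sign-change reduction of the first step; the delicate point is verifying the exact ratios $B/A=6$ and $C/A=8$ that make $\kappa$ factor as $(r+2q)(r+4q)$, since an error in any single family would spoil the clean factorisation. A useful consistency check is that the contributions coming from the subsystem $B_4=\{\eta e_i,\ \eta(e_i\pm e_j)\}$ alone must reproduce the relation $r=-4q$ predicted by Theorem~\ref{BCN} with $N=4$ and $s=0$; the half-root families then supply the remaining contributions responsible for the second factor $r+2q$.

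Finally, the explicit Hamiltonians require no new ideas. Once \eqref{WDVVmod} holds, $H_2$ is obtained by substituting the $F_4$ root data into the formula \eqref{rescaledH_2} of Theorem~\ref{hamiltonian2manyparticleTrig}, and $H_1$ by substituting into Proposition~\ref{H1rootsystem}. Since no $\alpha\in F_4$ satisfies $2\alpha\in F_4$, Proposition~\ref{H1rootsystem} gives $\widetilde{c_\alpha}=c_\alpha(\alpha,\alpha)^2\big(1+c_\alpha(\alpha,\alpha)\big)$; using $(\alpha,\alpha)=\eta^2$ with multiplicity $r$ for the short roots $\eta e_i$ and $\tfrac{\eta}{2}(\epsilon_1 e_1+\epsilon_2 e_2+\epsilon_3 e_3+e_4)$, and $(\alpha,\alpha)=2\eta^2$ with multiplicity $q$ for the long roots $\eta(e_i\pm e_j)$, one recovers the coefficients $r(1+r\eta^2)\eta^4$ and $4q(1+2q\eta^2)\eta^4$ in the stated $H_1$. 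The fermionic term $\Phi$ is read off orbit by orbit from \eqref{phitrig}, and $\widetilde{\Phi}=\Phi+\mathrm{const}$ as in Proposition~\ref{H1rootsystem}.
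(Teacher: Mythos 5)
Your proposal follows the paper's proof essentially verbatim: Theorem \ref{geometricconditions} together with Proposition \ref{invariance} reduces \eqref{WDVVmod} to the single equation $g_{12}(a,b)=0$, the coefficient is computed pair by pair and factors as $2\eta^6(r+2q)(r+4q)B_{12}(a,b)$, and the Hamiltonians are read off from Theorem \ref{hamiltonian2manyparticleTrig} and Proposition \ref{H1rootsystem}. One correction to your anticipated bookkeeping: the half-root$\times$half-root pairs $\bigl\{\tfrac{\eta}{2}(e_1\pm e_2\pm e_3\pm e_4),\ \tfrac{\eta}{2}(e_1\pm e_2\pm e_3\pm e_4)\bigr\}$ contribute nothing to $g_{12}$ (their individual contributions cancel in pairs, as do the pairs involving $\eta e_2$ or $\eta(e_1\pm e_2)$ with a half-root), so the entire $r^2$-term $2r^2\eta^6B_{12}$ comes from the pairs $\{\eta e_1,\tfrac{\eta}{2}(e_1\pm e_2\pm e_3\pm e_4)\}$ alone; moreover the $B_4$ block contributes $4q(r+4q)\eta^6B_{12}$ and the remaining families contribute $2r(r+4q)\eta^6B_{12}$, so both parts carry the factor $(r+4q)$ and the factor $(r+2q)$ only emerges in their sum, rather than being supplied by the half-root families on their own.
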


\begin{proof}
Since $B_4 \subset F_4$ we have the contribution to the coefficient $g_{12}$ of the form \eqref{B2form}, \eqref{B2formeiej} from the pairs of vectors $\{\beta, \gamma\} \in B_4$ which is equal to $4q(4q+r) \eta^6 B_{12}(a, b)$.  The remaining contribution to the coefficient $g_{12}$ comes from the following pairs of vectors $\{\beta, \gamma\}$ in the expansion \eqref{B2form}:
\begin{align*}
(1)\,\, \{\eta e_1, \frac{\eta}{2} (e_1 \pm  e_2 \pm e_3 \pm e_4 )\}, \quad (2)\,\, \{ \eta( e_1 \pm e_3), \frac{\eta}{2} (e_1 \pm  e_2 \pm e_3 \pm e_4 )\},
\end{align*}
\begin{align*}
\quad (3)\,\, \{ \eta(e_1 \pm e_4), \frac{\eta}{2} (e_1 \pm  e_2 \pm e_3 \pm e_4 )\}.
\end{align*}
Indeed, let us demonstrate why pairs of vectors of the form  
\begin{align} \label{F4vect.}
\frac{\eta}{2} (e_1 \pm  e_2 \pm e_3 \pm e_4 )
\end{align}
contribute trivially to the coefficient $g_{12}$ of the form \eqref{B2form}, \eqref{B2formeiej}. Let $\beta= \frac{\eta}{2} (e_1 +\lambda e_2 + \mu e_3 + \nu e_4)$ and $\widetilde{\beta} = \frac{\eta}{2} (e_1 +\lambda e_2 - \mu e_3 - \nu e_4)$, where $\lambda, \mu, \nu = \pm 1$. Non-trivial contribution with this $\beta$ to $g_{12}$ can only come from the two pairs $\{\beta, \pm \gamma\}$, where $\gamma_{\pm} = \frac{\eta}{2}(e_1 - \lambda e_2 \pm (\mu e_3 + \nu e_4))$. The same holds for $\widetilde{\beta}$. The contribution from the two pairs $\{\beta, \gamma_{\pm}\}$ is $-\frac{\lambda r^2}{4} \eta^6 B_{e_1+\lambda e_2, \mu e_3 + \nu e_4}$ while the contribution from the two pairs $\{\widetilde{\beta}, \gamma_{\pm}\}$ is $\frac{\lambda r^2}{4}\eta^6 B_{e_1+\lambda e_2, \mu e_3 + \nu e_4}$. Hence altogether contributions to $g_{12}$ from pairs of vectors of the form (\ref{F4vect.}) cancel. Similarly, one can check that contributions from pairs $\{ \eta e_2, \frac{\eta}{2}(e_1 \pm e_2 \pm e_3 \pm e_4)\}$ and $\{ \eta(e_1 \pm e_2 ) , \frac{\eta}{2}(e_1 \pm e_2 \pm e_3 \pm e_4)\}$ is zero. 

Then pairs $(1)$ contribute $2 r^2 \eta^6 B_{12}(a,b)$ and pairs $(2), (3)$ contribute $4 rq \eta^6 B_{12}(a,b)$ each. Therefore in total 
\begin{align*}
g_{12}(a, b)=2(8q^2 + 6rq + r^2)\eta^6  B_{12}(a, b). 
\end{align*}
By Proposition \ref{invariance}, $g_{ij}=0$ for all   $1 \leq i < j \leq 4$ if and only if  $r=-2q$ or $r=-4q$. The form of the Hamiltonians $H_2$, $H_1$ follows from Theorem \ref{hamiltonian2manyparticleTrig} and Proposition \ref{H1rootsystem}. Then the statement follows.
\end{proof}

\begin{theorem}
Let $\mathcal{R}=G_2$. Let the positive half of the root system $G_2$ considered in three dimensional space be 
\begin{align*}
\quad \alpha_1= \eta( e_1 -e_2), \quad  \alpha_2=\eta( e_1-e_3), \quad  \alpha_3= \eta(e_2-e_3),  
\end{align*}
\begin{align*}
\alpha_4=\eta(2e_1 -e_2 -e_3), \quad \alpha_5=\eta(e_1 +e_2 -2e_3), \quad \alpha_6=\eta(e_1 -2e_2 +e_3),
\end{align*}
where $\eta \in \mathbb{C}^\times$ is a parameter. Let $s$ be the multiplicity of the short roots $\alpha_i$, $i=1, 2, 3$ and let $r$ be the multiplicity of the long roots $\alpha_j$, $j=4,5,6$. Then the function
\begin{align*}
F=  s \sum_{i < j}^3 f(\eta(x_i - x_j)) +  \frac{r}{2} \sum_{\sigma \in S_3} f(\eta(2x_{\sigma{(1)}} -x_{\sigma{(2)}} -x_{\sigma{(3)}}))
\end{align*}
satisfies conditions \eqref{WDVVmod} if and only if $ s=-3r$ or $s=-9r$. The corresponding supersymmetric Hamiltonians \eqref{rescaledH_1}, \eqref{rescaledH_2} take the form 
\begin{align*}
H_1&= - \Delta +\eta^4 \sum_{ i < j }^3 \frac{4s(1+2s\eta^2 )}{\sinh^2(\eta(x_i - x_j))} +  \eta^4 \sum_{\sigma \in S_3} \frac{18r (1+6r \eta^2) }{\sinh^2(\eta(2x_{\sigma(1)} -x_{\sigma{(2)}} -x_{\sigma(3)}))}+ \widetilde{\Phi},
\end{align*}
and 
\begin{align*}
H_2 &= - \Delta + 4s \eta^3  \sum_{ i < j }^3  \coth(\eta(x_i - x_j)) (\partial_{i} - \partial_{j}) \\&+ 6r\eta^3 \sum_{\sigma \in S_3} \coth(\eta(2x_{\sigma(1)} -x_{\sigma(2)} -x_{\sigma(3)})) (2 \partial_{\sigma(1)} - \partial_{\sigma(2)} - \partial_{\sigma(3)}) +  \Phi, 
\end{align*}
with $\Phi$ given by 
\begin{align*}
\Phi&= 4\eta^4 \sum_{ m < t }^3\sum_{i,j,l,k} \frac{s d^{-}_{mti} d^{-}_{mtj}}{\sinh^2(\eta(x_m - x_t))} \Big( d^{-}_{mtl} d^{-}_{mtk} \psi^{bi} \psi_b^j \bar{\psi}_d^l  \bar{\psi}^{dk} - 2\psi_b^i  \bar{\psi}^{bj} \Big) \\
&+2\eta^4\sum_{\sigma \in S_3} \sum_{i,j,l,k}  \frac{r d^\sigma_i d^\sigma_j}{\sinh^2(\eta(2x_{\sigma(1)} -x_{\sigma{(2)}} -x_{\sigma{(3)}})) }\Big(d^\sigma_l d^\sigma_k \psi^{bi} \psi_b^j \bar{\psi}_d^l  \bar{\psi}^{dk} - 6\psi_b^i  \bar{\psi}^{bj} \Big)
\end{align*} 
where $s=-3r$ or $s=-9r$,  $d^{-}_{mti}= \delta_{mi} - \delta_{ti}$,  $d^\sigma_i= 2 \delta_{\sigma(1)i} - \delta_{\sigma(2)i} - \delta_{\sigma(3)i}$, and $\widetilde{\Phi}= \Phi + const$. 
\end{theorem}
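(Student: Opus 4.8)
The plan is to follow the strategy of the $BC_N$ and $F_4$ cases, reducing the conditions \eqref{WDVVmod} via Theorem \ref{geometricconditions} to the vanishing of the two-form $\mathcal{B}^{(a,b)}_{\mathcal{R}_+}$ for all $a,b\in V$. I would first record the six positive roots as coordinate vectors, $\alpha_1=\eta(1,-1,0)$, $\alpha_2=\eta(1,0,-1)$, $\alpha_3=\eta(0,1,-1)$ of squared length $2\eta^2$, and $\alpha_4=\eta(2,-1,-1)$, $\alpha_5=\eta(1,1,-2)$, $\alpha_6=\eta(1,-2,1)$ of squared length $6\eta^2$. The feature special to this three-dimensional realisation is that every root is orthogonal to $e_1+e_2+e_3$, so all the wedges $\beta\wedge\gamma$ occurring in \eqref{B2form} lie in the one-dimensional space $\Lambda^2\big(\langle e_1+e_2+e_3\rangle^\perp\big)$, spanned by $\omega:=e_1\wedge e_2+e_2\wedge e_3+e_3\wedge e_1$. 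Writing $\beta\wedge\gamma=s_{\beta\gamma}\,\omega$ with $s_{\beta\gamma}=\beta_1\gamma_2-\beta_2\gamma_1$, and using that $B_{\beta,\gamma}(a,b)=s_{\beta\gamma}\,\omega(a,b)$ for roots lying in this plane, the whole form collapses to $\mathcal{B}^{(a,b)}_{\mathcal{R}_+}=C\,\omega(a,b)\,\omega$ with the single scalar $C=\sum_{\beta,\gamma\in\mathcal{R}_+}c_\beta c_\gamma(\beta,\gamma)\,s_{\beta\gamma}^2$. Thus \eqref{B=0} holds for all $a,b$ if and only if $C=0$; equivalently, by Proposition \ref{invariance} applied to the $S_3\subset W$ permuting $e_1,e_2,e_3$, all three coefficients $g_{ij}$ in \eqref{B2formeiej} coincide up to sign, so it suffices to impose a single one.

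The core computation is then to evaluate $C$ by summing over pairs of positive roots. I would group the pairs into short--short (contributing a multiple of $s^2$), long--long (a multiple of $r^2$) and the nine mixed short--long pairs (a multiple of $sr$), observing that several mixed pairs drop out because the corresponding inner product $(\alpha_i,\alpha_j)$ vanishes. Carrying this out I expect
\begin{align*}
C=\mathrm{const}\cdot\eta^6\,(s^2+12sr+27r^2)=\mathrm{const}\cdot\eta^6\,(s+3r)(s+9r),
\end{align*}
so that $C=0$ is equivalent to $s=-3r$ or $s=-9r$, as claimed. I expect the main obstacle to be exactly this bookkeeping: tracking the signs of the antisymmetric factors $s_{\beta\gamma}=\beta_1\gamma_2-\beta_2\gamma_1$ together with the inner products so that the partial sums combine into the quadratic form that factors as $(s+3r)(s+9r)$; the individual cancellations among the mixed pairs are the delicate part.

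Finally, once \eqref{WDVVmod} is secured under the constraint on $s,r$, the two Hamiltonians are obtained by specialisation. The form of $H_2$ follows at once from \eqref{rescaledH_2} and \eqref{phitrig} upon substituting $c_\alpha(\alpha,\alpha)=2s\eta^2$ and $\partial_\alpha=\eta(\partial_i-\partial_j)$ for the short roots $\eta(e_i-e_j)$, and $c_\alpha(\alpha,\alpha)=6r\eta^2$ and $\partial_\alpha=\eta(2\partial_{\sigma(1)}-\partial_{\sigma(2)}-\partial_{\sigma(3)})$ for the long roots, the passage from a sum over $\pm$-pairs of roots to the displayed sums accounting for the stated numerical coefficients. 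For $H_1$ I would invoke Proposition \ref{H1rootsystem}; since no $\alpha\in G_2$ satisfies $2\alpha\in G_2$, only the branch $\widetilde{c_\alpha}=c_\alpha(\alpha,\alpha)^2\big(1+c_\alpha(\alpha,\alpha)\big)$ applies, yielding $4s\eta^4(1+2s\eta^2)$ on the short roots and $36r\eta^4(1+6r\eta^2)$ on the long roots, with the constant produced by the $\coth\coth$ cross-terms absorbed into $\widetilde{\Phi}=\Phi+\mathrm{const}$ exactly as in Proposition \ref{H1rootsystem}.
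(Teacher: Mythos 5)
Your proposal is correct and follows essentially the same route as the paper: conditions \eqref{WDVVmod} are reduced via Theorem \ref{geometricconditions} to the vanishing of $\mathcal{B}^{(a,b)}_{\mathcal{R}_+}$, the pair-by-pair computation (with the three orthogonal mixed pairs dropping out) produces exactly the quadratic $s^2+12sr+27r^2=(s+3r)(s+9r)$ that the paper obtains as the coefficient $g_{12}$, and the Hamiltonians are then read off from Theorem \ref{hamiltonian2manyparticleTrig} and Proposition \ref{H1rootsystem}. The only difference is organizational: you collapse the whole form to a single scalar multiple of $\omega=e_1\wedge e_2+e_2\wedge e_3+e_3\wedge e_1$ using that all roots lie in $(e_1+e_2+e_3)^{\perp}$, whereas the paper extracts the coefficient $g_{12}$ of $e_1\wedge e_2$ directly and invokes Proposition \ref{invariance} to reduce to that single equation -- both are valid and yield the same constraint.
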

\begin{proof}
The coefficient at $e_1 \wedge e_2$ in the form $\mathcal{B}^{(a, b)}_{\mathcal{R_+}}$ given by \eqref{B2form}, \eqref{B2formeiej} is 
\begin{align*}
g_{12}(a,b)&= \sum_{ i < j }^6 2 c_{\alpha_i} c_{\alpha_j} (\alpha_i, \alpha_j) B_{\alpha_i, \alpha_j}(a, b) (\alpha_i \wedge \alpha_j, e_1 \wedge e_2)= \sum_{i=1}^5 A_i, 
\end{align*}
where $(\alpha_i \wedge \alpha_j, e_1 \wedge e_2) = \operatorname{det}(c_1, c_2)$ where $c_k$ are the column vectors $c_k=((\alpha_i, e_k), (\alpha_j, e_k))^\intercal$, $k=1, 2$, and
\begin{align*}
A_i = \sum_{j=i+1}^6  2 c_{\alpha_i} c_{\alpha_j} (\alpha_i, \alpha_j) B_{\alpha_i, \alpha_j}(a, b) (\alpha_i \wedge \alpha_j, e_1 \wedge e_2). 
\end{align*}
We have 
\begin{align*}
A_1 &=  6 s r  \eta^6 B_{\alpha_1, \alpha_5}(a, b),\\
A_2 &=  2s \eta^6\big(  s B_{\alpha_2, \alpha_3}(a,b)  - 3 r B_{\alpha_2, \alpha_6}(a, b) \big), \\
A_3 &= 0, \\
A_4 &= 18  r^2\eta^6 B_{\alpha_4, 3\alpha_3}(a,b) ,\\
A_5 &= 18 r^2\eta^6 B_{\alpha_5, \alpha_6}(a,b).
\end{align*}
Simplifying we obtain
\begin{align*}
g_{12}(a,b)= 2\eta^6(27r^2 +12rs + s^2) ( B_{12}(a,b) -B_{13}(a,b) + B_{23}(a,b)). 
\end{align*}
By Proposition \ref{invariance}, $g_{ij}=0$ for all $1 \leq i <j \leq 3$ if and only if  $s=-3r$ or  $s= -9r$. 
The form of the Hamiltonians $H_1$, $H_2$ follows from Theorem \ref{hamiltonian2manyparticleTrig} and Proposition \ref{H1rootsystem} respectively. Then the statement follows. 
\end{proof}

\begin{remark}The bosonic part of the supersymmetric Hamiltonians \eqref{rescaledH_2}, \eqref{rescaledH_1} becomes Calogero--Moser Hamiltonian in the rational limit. For example let us consider the case of the root system $BC_N$ and let us introduce rescaled multiplicities $\widehat{s}=\eta^2s$, $ \widehat{q}=\eta^2 q $ and $ \widehat{r}=\eta^2 r $ in Theorem \ref{BCN}. Then in the limit $\eta \rightarrow 0$ bosonic parts of Hamiltonians $H_1$ and $H_2$ given by \eqref{H1BCN}, \eqref{H2BCN} become the rational $B_N$ Hamiltonians $H_1^{b, r}, H_2^{b, r}$ with two independent coupling parameters, namely, 
\begin{align*}
H_1^{b,r}=- \Delta+ \sum_{ i <j}^N \frac{4 \widehat{q} (2\widehat{q}+1)}{(x_i \pm x_j)^2}  + \sum_{i=1}^N \frac{l (l-1)}{x_i^2},
\end{align*}
and
\begin{align*}
H_2^{b,r} = -\Delta + \sum_{ i <j}^N \frac{4 \widehat{q}}{x_i \pm x_j} (\partial_{i} \pm \partial_{j})  -\sum_{i=1}^N \frac{2l}{x_i } \partial_{i},
\end{align*}
where $l= 2( (N-2)\widehat{q} + 2\widehat{s} )$. Thus supersymmetric Hamiltonians \eqref{H1BCN}, \eqref{H2BCN} can be viewed as $\eta$-deformation of the rational superconformal Hamiltonians considered in Theorems \ref{1strepham}, \ref{2ndrepham} for the root system ${\mathcal R} = B_N$.
\end{remark}

\section{Concluding remarks}

 Since work \cite{Wyllard} there were extensive attempts to define superconformal ${\mathcal N}=4$ Calogero--Moser type systems for sufficiently general coupling parameters and suitable superconformal algebras. Some low rank cases were treated in \cite{GLP07}, \cite{GLP09}. A number of works were devoted to the superconformal extensions of Calogero--Moser systems where extra spin type variables had to be present (see \cite{FIL12review} for a discussion and the review). In the current work we presented superconformal extensions of the ordinary Calogero--Moser system with scalar potential as well as its generalisations for an arbitrary $\vee$-system, which includes Olshanetsky--Perelomov generalisations of Calogero--Moser systems with arbitrary invariant coupling parameters. The superconformal algebra is $D(2,1;\alpha)$ where parameter $\alpha$ is related to the coupling parameter(s). It is crucial for our considerations that we deal with quantum rather than classical Calogero--Moser type systems.

We also presented supersymmetric non-conformal deformations of the Calogero--Moser type systems related with the root system $B_N$ (which may be thought of as the Calogero--Moser system with boundary terms) as well as with some other exceptional root systems. It would be very interesting to see if there are any relations of considered systems with black holes (cf. \cite{TS} for the conjectural relation with supersymmetric Calogero--Moser systems and e.g. \cite{MSY}, \cite{OY} and references therein for non--conformal deformations of $AdS_2$ black hole geometry).

All our considerations are also extended to non-self-adjoint gauge of the Calogero--Moser type Hamiltonians. There has been considerable interest in such non-self-adjoint but $\mathcal P \mathcal T$ symmetric bosonic Hamiltonians (see e.g. \cite{FZ} and references therein). It would be interesting to see whether these Hamiltonians play a role in the context of supersymmetry. 

It may also be interesting to clarify integrability of considered supersymmetric Hamiltonians.

%\clearpage

\begin{bibdiv}
\begin{biblist}

\bib{BGL}{article}{author={S., Bellucci }, author={A.,Galajinsky }, author={E.,Latini} title={New insight into WDVV equation}, date={2005}, journal={Phys. Rev. D},volume={71}, pages={044023}}

 \bib{BK09May}{article} {author={S., Bellucci}, author={S., Krivonos}, title={
Potentials in $N=4$ superconformal mechanics}, journal = {Phys. Rev. D},
  volume = {80},
  issue = {6},
  pages = {065022}, date={2009}}

  \bib{BKSU}{article} {author={S., Bellucci}, author={S., Krivonos}, author={A.,Sutulin}, title={
N=4 supersymmetric 3-particles Calogero model}, journal = {Nucl. Phys. B},
  volume = {805},
  pages = {24-39}, date={2008}}

\bib{BPF}{article}{author ={F.A., Berezin}, author={ G.P., Pokhil},author={ V.M., Finkelberg} title={Schr{\"o}dinger equation for a
system of one-dimensional particles with point interaction}, journal={Vestn. MGU},volume={ No.1},pages={  21-28}, date={1964}}

\bib{MS}{article}{author={A.J., Bordner }, author={N.S., Manton},author={R., Sasaki}  title={Calogero-Moser models $V$:Supersymmetry and quantum lax pair}, journal={Prog.Theor.Phys.}, volume={103}, date={2000}, pages={463-487}}

\bib{bourbaki}{book}{title={Lie groups and Lie algebras. Chapters 4-6, Elements of mathematics}, author={N., Bourbaki}, date={2002}, publisher={Springer-Verlag}}

\bib{Brink}{article}{author={L., Brink }, author={T.H., Hansson}, author={S., Konstein}, author={M.A., Vasiliev}  title={The Calogero model — anyonic representation, fermionic extension and supersymmetry}, journal={Nuclear Physics B}, volume={401}, date={1993}, pages={591 - 612}}

\bib{BTW}{article}{author={L., Brink }, author={A., Turbiner}, author={N., Wyllard}, title={
Hidden algebras of the (super) Calogero and Sutherland models.}, 
journal={J. Math. Phys.}, volume ={39 (3)}, date= {1998}, pages={1285-1315}}

\bib{Calogero}{article}{author={F., Calogero} title={Solution of the One-Dimensional $N$-Body Problems with Quadratic and/or Inversely Quadratic Pair Potentials}, journal={J. Math. Phys.}, volume={12}, date={1971}, pages = {419-436}}

\bib{DubrovinAlmost}{article}{author={B., Dubrovin} title={On almost duality for Frobenius manifolds}, journal={Amer. Math. Soc. Transl}, volume={212}, date={2004}, pages={75-132}}

\bib{EP}{book}{title={Calogero-Moser systems and representation theory}, author={P., Etingof}, date={2007}, publisher={European Mathematical Society}}

\bib{fedorukMPOSP}{article}{author={S.,Fedoruk}, author={E.,Ivanov}, author={O., Lechtenfeld} title={Supersymmetric Calogero models by gauging}, date={2009}, journal={Physical Review D},volume={79}, pages={105015}}

\bib{fedoruk}{article}{author={S.,Fedoruk}, author={E.,Ivanov}, author={O., Lechtenfeld}, title={New ${D(2,1; \alpha)}$ Mechanics with Spin Variables}, date={2010}, journal={JHEP},volume={04}, pages={129}}

\bib{FIL09May}{article}{author={S.,Fedoruk}, author={E.,Ivanov}, author={O., Lechtenfeld}
 title={$OSp(4|2)$ superconformal mechanics},pages={081}
  journal={JHEP},
  volume={08},date={2009}}

\bib{FIL12review}{article}{author={S.,Fedoruk}, author={E.,Ivanov}, author={O., Lechtenfeld}, title={Superconformal Mechanics}, journal={J.Phys. A}, date={2012}, pages={173001}}

\bib{FILS}{article}{author={S., Fedoruk}, author={E., Ivanov}, author = {O., Lechtenfeld}, author = {S., Sidorov}, title = {Quantum $SU(2|1)$ supersymmetric Calogero--Moser spinning systems}, date = {2018}, journal={JHEP}, volume={43}, pages={1}}

%\bib{F}{article}{author={M., Feigin},  title={ 
%Trigonometric Solutions of WDVV Equations and Generalized Calogero-Moser-Sutherland Systems}, date={2009}, journal={SIGMA},volume={5}, pages={088}}

\bib{FS}{article}{author={M., Feigin}, author={A., Silantyev}, title={ Singular polynomials from orbit spaces}, date={2012}, journal={Compositio Mathematica},volume={148(6)}, pages={1867-1879}}

\bib{FV07}{article}{author={M., Feigin},author={ A.P., Veselov} title={On the geometry of V-systems}, journal={AMS Translations Series 2}, date={2008}, volume={224}}

\bib{FV14}{article}{author={M., Feigin}, author={ A.P., Veselov} title={$\vee$-systems, holonomy Lie algebras and logarithmic vector fields}, journal={Intern. Math. Res. Notices}, date={2017}, volume={2018(7)}, pages={2070-2098}}

\bib{sorba}{article}{author={L.,Frappat}, author={P.,Sorba}, author={A.,Sciarrino} title={Dictionary on Lie superalgebras}, date={2000}, journal={Academic Press}
% journal={HEP-TH/9607161}
}

\bib{Freed}{article}{author={D.,Freedman}, author={P., Mende},  title={An exactly solvable N-particle system in supersymmetric quantum mechanics}, date={1990}, journal={Nuclear Physics B}, volume={334}, pages={317 - 343}}

\bib{FZ}{article}{author = {A., Fring}, author={M.,Znojil}, title={ PT-symmetric deformations of Calogero models}, date={2008}, journal={J.Phys. A},volume={41}, pages={194010}}

\bib{GLP07}{article}{author={A.,Galajinsky }, author={O.,Lechtenfeld}, author={K., Polovnikov} title={$N=4$ superconformal Calogero models}, date={2007}, journal={JHEP},volume={008}}

\bib{GLP09}{article}{author={A.,Galajinsky }, author={O.,Lechtenfeld}, author={K., Polovnikov} title={$N=4$ mechanics, WDVV equations and roots}, date={2009}, journal={JHEP},volume={2009}, pages={113}}

\bib{TS}{article}{author={G. W., Gibbons}, author={P. K., Townsend}, title={Black holes and Calogero models}, journal={Phys.
Lett. B454}, date={1999}, volume={187}, pages={hep-th/9812034}}

\bib{Helg}{book}{title={Groups and Geometric analysis}, author={S., Helgason}, date={1984}, publisher={Academic Press}}

\bib{HM}{article}{author ={L.K., Hoevenaars}, author={ R., Martini}, title={On the WDVV equations in five-dimensional gauge theories}, journal={Physics Letters B}, volume={ 557 (1-2)},pages={  94-104}, date={2003}}

\bib{cox}{book}{title={Reflection Groups and Coxeter Groups}, author={J.E., Humphreys}, date={1990}, publisher={CUP}}

\bib{ISh}{article}{author={M., Isachenkov}, author = {V., Schomerus}, title={Integrability of conformal blocks. Part I. Calogero-Sutherland scattering theory}, journal={JHEP}, date={2018}, volume={07}, pages={66}}

\bib{kriv}{article}{author={S., Krivonos }, author={O.,Lechtenfeld},  title={Many-particle mechanics with $D(2, 1;\alpha)$ superconformal symmetry}, date={2011}, journal={JHEP},volume={42}}

\bib{KLPO}{article}{author={S., Krivonos }, author={O.,Lechtenfeld}, author={K., Polovnikov},  title={$N=4$ superconformal n-particle mechanics via superspace}, date={2009}, journal={Nucl. Phys. B}, volume={817}, pages={265-283}}

\bib{kriv2}{article}{author={S., Krivonos }, author={O.,Lechtenfeld}, author={A., Sutulin},  title={$\mathcal{N}$-extended supersymmetric Calogero models}, date={2018}, journal={	Physics Letters B}, volume={784},pages={137--141}}

\bib{KLS18}{article}{author={S., Krivonos }, author={O.,Lechtenfeld}, author={A., Sutulin},  title={Supersymmetric many-body Euler-Calogero-Moser model}, journal={arXiv:1812.03530}}

\bib{KLPS} {article}{author={S., Krivonos }, author={O.,Lechtenfeld}, author={A., Provorov}, author={A., Sutulin},  title={Extended supersymmetric Calogero model}, journal={arXiv:1812.10168}}

\bib{MSY}{article}{author={J., Maldacena},author={ D., Stanford},author={ Z., Yang} title={
 Conformal symmetry and its breaking in two dimensional Nearly Anti-de-Sitter space}, journal={PTEP}, volume={(12)}, date={2016}, pages={12C104}}
 
 \bib{MMM}{article} {author={M., Marshakov}, author={M., Mironov}, author={A., Morozov}, title={WDVV-like equations in $N = 2$
SUSY Yang-Mills theory}, journal={Phys.Lett. B}, date={1996}, volume={389}, pages={43-52, hep-th/9607109}}

\bib{MG}{article}{author ={R., Martini}, author={P. K. H., Gragert}, title = {Solutions of WDVV equations in Seiberg-Witten theory from root systems}, journal={Journal of Non-linear Mathematical Physics}, date={1999}, volume={6 (1)}, pages={1}}

\bib{Moser}{article}{author ={J., Moser}, title = {Three integrable Hamiltonian systems connected with isospectral deformations}, journal={Advances in Mathematics}, date={1975}, volume={16}, pages={197 - 220}}

\bib{OY}{article}{author = {S. Okumura}, author={K. Yoshida}, title={ Weyl transformation and regular solutions
in a deformed Jackiw-Teitelboim model}, journal ={Nuclear Physics B}, date={2018}, volume ={933}, pages = {234--247}}

\bib{OP1}{article}{author = {M.A., Olshanetsky},author={A.M., Perelomov}, title={
Completely integrable Hamiltonian systems connected with semisimple Lie algebras}, journal = {Invent. Math.}, date={1976}, volume={37 (2)}, pages={93-108}}

%\bib{OP2}{article}{author = {M.A., Olshanetsky},author={A.M., Perelomov}, title={Quantum integrable systems related to Lie algebras}, journal={Phys. Rept. }, date={1983}, volume={94}, pages={313}}

\bib{OP2}{article}{author = {M.A., Olshanetsky},author={A.M., Perelomov}, title={Quantum systems related to root systems, and radial parts of Laplace operators}, journal={Funct. Anal. Appl.}, date={ 1978}, volume={12}, pages={121-128}}

\bib{SV}{article}{author= { A. N, Sergeev}, author={A.P.,Veselov},  title= {Deformed quantum Calogero-Moser problems and Lie superalgebras}, journal={Communications in Mathematical Physics}, date={2004}, volume={245}, pages={249-278}}

\bib{Suther}{article}{author= { B., Sutherland},  title= {Exact Results for a Quantum Many-Body Problem in One Dimension}, journal={Phys. Rev. A}, date={1971}, volume={4}, pages={2019-2021}}

\bib{Veselov}{article}{author={A.P., Veselov} title={Deformations of the root systems and new solutions to generalised WDVV equations}, date={1999}, journal={Physics Letters A}, volume={261}, pages={297-302}}

\bib{Wyllard}{article}{author={N., Wyllard }, title={(Super)-conformal many-body quantum mechanics with extended supersymmetry}, date={2000}, journal={JHEP},volume={41}, pages={2826}}

\end{biblist}
\end{bibdiv}

\end{document}